\documentclass[12pt,reqno]{article}
\pdfoutput=1
\usepackage[a4paper,width=170mm,top=25mm,bottom=25mm]{geometry}
\usepackage{hyperref}
\hypersetup{
    colorlinks=false,                       
    linkcolor=red,                          
    citecolor=blue
}
\usepackage{setspace}
\usepackage{fancyhdr}
\usepackage{amsmath,amsthm,amsfonts,amssymb,braket,mathtools,mathrsfs}
\usepackage{url}
\usepackage{bbm, dsfont}
\usepackage{bbold}
\usepackage{enumerate}
\usepackage{multirow}
\usepackage[all]{xy}
\usepackage{color}
\usepackage{graphicx}
\usepackage{caption}
\usepackage{subcaption}

\numberwithin{equation}{section}
\theoremstyle{plain}
\newtheorem{theorem}{Theorem}[section]
\newtheorem{corollary}[theorem]{Corollary}
\newtheorem{lemma}[theorem]{Lemma}
\newtheorem{proposition}[theorem]{Proposition}
\newtheorem{main theorem}{Main Theorem}
\newtheorem{thmalpha}{Theorem}

\theoremstyle{definition}

\newtheorem{remark}[theorem]{Remark}

\newtheorem{definition}[theorem]{Definition}

\newtheorem*{acknowledgements}{Acknowledgements}

\usepackage{tikz}
\usepackage{xy}
\usepackage{tikz-cd}
\usepackage{xcolor}
\usetikzlibrary{decorations.pathmorphing}

\tikzset{snake/.style={decorate, decoration=snake}}
\usetikzlibrary {arrows.meta,bending,positioning,shapes.geometric,decorations.text,decorations.markings}
\tikzset{
    partial ellipse/.style args={#1:#2:#3}{
        insert path={+ (#1:#3) arc (#1:#2:#3)}
    },
    ->-/.style={
        decoration={markings,mark=at position #1 with {\arrow{>}}},
        postaction={decorate}
    }
}
\newcommand{\VeryThinLine}{\draw[line width=0.5pt]}

\tikzset{block/.style={rectangle, draw, fill=blue!20, text width=5em, 
    text centered, rounded corners, minimum height=4em},
    cloud/.style={draw, ellipse,fill=red!20, node distance=3cm, minimum height=2em},
    line/.style={draw, -latex'},
    Cbox/.style = {circle, draw, thick, fill=white, opaque}, 
    midarrow/.style={
        decoration={
            markings,
            mark=at position 0.5 with {\arrow{>}}
        },
        postaction={decorate}
    }
    }

\newcommand{\cM}{\mathcal{M}}
\newcommand{\cN}{\mathcal{N}}
\newcommand{\coev}{\textbf{coev}}
\newcommand{\End}{\text{End}}
\newcommand{\ev}{\textbf{ev}}
\newcommand{\Hom}{\text{Hom}}

\newcommand{\image}{\text{Image }}

\newcommand{\Tr}{\mathrm{Tr}}
\newcommand{\tr}{\mathrm{tr}}

\newcommand{\Irr}{\mathrm{Irr}}

\NewDocumentCommand{\tens}{t_}
 {%
  \IfBooleanTF{#1}
   {\tensop}
   {\otimes}%
 }
\NewDocumentCommand{\tensop}{m}
 {%
  \mathbin{\mathop{\otimes}\displaylimits_{#1}}%
 }

\begin{document}
\title{Axiomatization of the Levin--Wen Wave Function}
\author{
  Zhengwei Liu$^{1,2,3}$\thanks{liuzhengwei@mail.tsinghua.edu.cn}, Zishuo Zhao$^{1}$\thanks{zishuozhao0602@gmail.com} \\
  \\
  \small $^{1}$Yau Mathematical Sciences Center, Tsinghua University, Beijing 100084, China \\
  \small $^{2}$Department of Mathematical Sciences, Tsinghua University, Beijing 100084, China \\
  \small $^{3}$Yanqi Lake Beijing Institute of Mathematical Sciences and Applications, Beijing 100407, China 
}
\date{\today}
\maketitle
\begin{abstract}
The Levin--Wen model provides a lattice realization of topological orders associated with a given unitary fusion category. A longstanding open problem is to characterize Levin--Wen ground-state wave functions intrinsically, without assuming a priori categorical symmetry data or a Hamiltonian. We address this by proposing six axioms on a family of wave functions defined on lattices at multiple scales. These axioms allow us to reconstruct the underlying unitary fusion category and prove that the resulting wave functions map to nonzero Levin--Wen ground-state vectors of the emergent category. 
\end{abstract}
\tableofcontents
\section{Introduction}

The Landau paradigm organizes phases of matter by patterns of spontaneous symmetry breaking, but topological order lies beyond this framework: gapped quantum systems can belong to distinct phases without being distinguished by a local order parameter or broken microscopic symmetry \cite{Wen1989,Wen1990}.
In $(2+1)$ dimensions, the Levin--Wen string-net models provide a paradigmatic class of exactly solvable bosonic topological phases with gapped boundaries \cite{LevinWen2005}.
Starting from a unitary fusion category (UFC) $\mathcal C$, the construction produces a commuting-projector Hamiltonian with a unique ground state on the sphere; its amplitudes are determined by the graphical calculus of $\mathcal C$, and its bulk anyon theory is encoded by the Drinfeld center of $\mathcal C$ \cite{kitaev_kong_2012}.

It is widely believed that two-dimensional topologically ordered liquid phases with gappable boundaries are exhausted by generalized string-net models \cite{LinLevinBurnell2021}. 
In the renormalization-group (RG) formulation, a gapped quantum liquid phase is described by a stable family of gapped ground states across system sizes \cite{ZengWen2015,SwingleMcGreevy2016}. 
Generalized local unitary circuits, which also allow the addition or removal of unentangled local degrees of freedom, implement the corresponding equivalence relation \cite{ChenGuWen2010,QImeetsQM2019}. 
Moreover, it is known that Levin--Wen wave functions are exact fixed points of explicit entanglement-renormalization transformations \cite{KoenigReichardtVidal2009}. 
Recently, in a simplified zero-correlation-length setting based on a slight generalization of the entanglement-bootstrap axioms \cite{ShiKatoKim2020}, Kim and Ranard \cite{RanardKim2024} showed that every state satisfying these axioms can be mapped to the ground state of a generalized Levin--Wen model by a constant-depth quantum circuit. 
What remains is an intrinsic characterization formulated directly for a scale-compatible family of wave functions, from which one can derive the equivalence of the family to a family of Levin--Wen ground-state wave functions. 

The major obstructions in pursuing such a proof are twofold. 
First, one needs to construct the emergent unitary fusion category from wave functions defined on lattices at different scales. 
Second, one needs a condition that captures the fact that the family of states is at an RG fixed point. 
Generalizing Jones's planar-algebra \cite{PlanarAlgebra2021}, the theory of $\mathbb S^n$-functionals developed by the first author in \cite{Liu2024} resolves the first obstruction. 
An $\mathbb S^n$-functional is a linear functional on the space of labelled regular stratifications of $\mathbb S^n$.
For a fixed embedded lattice, Riesz representation turns this functional into a wave function. 
A single $\mathbb S^n$-functional therefore describes a family of wave functions parameterized by embedded lattices in $\mathbb S^n$. 
In two dimensions, an $\mathbb S^2$-functional satisfying homeomorphism invariance, superposed reflection positivity, multiplicativity, and complete finiteness gives rise to a category $\mathcal{C} = \mathcal C(Z)$; multiplicativity ensures that the tensor unit is simple, so that $\mathcal C$ is a unitary fusion category \cite{Liu2024}. 
This construction produces the categorical data, but does not identify the wave functions with the ones of the Levin--Wen model by itself.

In this paper, we first observe that the remaining freedom is the one-site reduced density matrix $\mathfrak{D}$, which is $\hat{\mathbb{T}}$, the Fourier transform of a morphism $\mathbb{T}\in \Hom_{\mathcal C \boxtimes \mathcal C^{op} }(A\boxtimes A^{op} \otimes A\boxtimes A^{op} ,A\boxtimes A^{op} ) $, for the emergent generating object $A$ in $\mathcal C$ labeled on a single edge.
We introduce three additional conditions---local non-degeneracy, topological connectedness, and commutativity---that fix $\mathfrak{D}$. 
Together with homeomorphism invariance, superposed reflection positivity, and multiplicativity, these conditions characterize the Levin--Wen wave function among $\mathbb S^2$-functionals.

\begin{thmalpha}[Theorem ~\ref{thm:: reconstructed state is Levin-Wen ground state}, informal]
    Let $Z$ be an $\mathbb S^2$-functional with the following properties:
    \begin{enumerate}
        \item Homeomorphism invariance \ref{def:: HI}; 
        \item Superposed reflection positivity ~\ref{def:: reflection positivity};
        \item Multiplicativity ~\ref{def:: multiplicativity};
        \item Topological connectedness ~\ref{def:: topological connectedness};
        \item Local non-degeneracy ~\ref{def:: local non degeneracy};
        \item Commutativity ~\ref{def:: commutativity}
    \end{enumerate}
    Then the category $\mathcal{C}$ reconstructed from $Z$ is a unitary fusion category. 
    Moreover, for every stratification $\cM$ of $\mathbb S^2$ with connected $\cM^1$, a product of local unitary identifications maps the induced vector $\ket{\tilde{\Psi}_{\cM}}$ to a nonzero ground-state vector of the Levin--Wen model with input $\mathcal{C}$.
\end{thmalpha}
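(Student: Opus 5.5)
The proof splits into two parts: first the fusion-category reconstruction, then the identification of the wave functions.

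\textbf{Step 1: the category.} From homeomorphism invariance, superposed reflection positivity and multiplicativity, the reconstruction of \cite{Liu2024} produces a unitary multifusion category $\mathcal{C}=\mathcal{C}(Z)$ with simple tensor unit. I would obtain finite-dimensionality of hom spaces, which \cite{Liu2024} gets from complete finiteness, from local non-degeneracy instead. The argument: the one-site density matrix $\mathfrak{D}$ is the Fourier transform $\hat{\mathbb{T}}$ of a morphism $\mathbb{T}$. Non-degeneracy should force $\mathfrak{D}$ to have finite rank and to be invertible on its support. That makes the generating edge object $A$ finite, so $A^{\otimes n}$ has finite-dimensional endomorphisms. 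Topological connectedness then gives that every simple object is a summand of some $A^{\otimes n}$. Hence $\mathcal{C}$ has finitely many simples and is a unitary fusion category.

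\textbf{Step 2: identifying the wave functions.} Fix a stratification $\cM$ with connected $\cM^1$. By Riesz representation, $Z$ restricted to labellings of $\cM$ gives $\ket{\tilde\Psi_{\cM}}$ in the tensor product over edges of copies of the edge space, and vertex spaces $\Hom(\mathbb{1},A^{\pm}\otimes\cdots)$.

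On each edge, decompose $A\cong\bigoplus_{X\in\Irr(\mathcal{C})} X\otimes V_X$, with $V_X$ multiplicity spaces. The local unitary identification on that edge is then the following composite:
\begin{enumerate}
\item rescale by $\mathfrak{D}^{1/2}$, using the invertibility from Step 1;
\item project away the multiplicity spaces, using commutativity to ensure $\mathfrak{D}$ acts diagonally on the simple summands;
\item identify the result with the Levin--Wen edge Hilbert space $\bigoplus_X \mathbb{C}$.
\end{enumerate}
At each vertex the identification is the analogous unitary onto $\Hom(\mathbb{1},X_1\otimes\cdots\otimes X_k)$.

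Under these identifications, the amplitude of a string-net configuration should equal the evaluation of the labelled graph in $\mathcal{C}$. This holds because $Z$ is, by construction, the categorical evaluation in $\mathcal{C}(Z)$, up to the normalizing factors $\mathfrak{D}$. Commutativity and topological connectedness should force $\mathbb{T}$ to be a scalar multiple of the canonical morphism, that is, $\mathfrak{D}$ proportional to $\bigoplus_X d_X\,\mathrm{id}$. These are exactly the quantum-dimension weights in the Levin--Wen ground state.

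\textbf{Step 3: ground state and nonvanishing.} To show the image is a ground state, I would check that it is invariant under the vertex and plaquette projectors. Vertex constraints hold because the amplitudes are supported on admissible labellings. Plaquette invariance, $B_p^s$ acting as $d_s/\mathcal{D}$ times the identity, follows from homeomorphism invariance: inserting a loop labelled $s$ and fusing it into the plaquette edges is an isotopy of stratifications combined with local relations in $\mathcal{C}$. Nonvanishing follows from reflection positivity together with multiplicativity, since $\langle\tilde\Psi_\cM|\tilde\Psi_\cM\rangle = Z(\cM\cup\overline{\cM})>0$.

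\textbf{Main obstacle.} I expect Step 2 to be the hardest: showing that commutativity plus topological connectedness pin $\mathbb{T}$ down to the canonical Frobenius-type morphism. My first attempt would be a Fourier-duality argument. Commutativity makes $\hat{\mathbb{T}}$ lie in a commutative C$^*$-algebra, namely the fusion algebra of $\mathcal{C}\boxtimes\mathcal{C}^{op}$ restricted to $A\boxtimes A^{op}$. Connectedness makes the corresponding idempotent minimal. Its Fourier transform is then forced to be the Perron--Frobenius (dimension) weight.
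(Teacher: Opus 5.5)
There are genuine gaps, starting in Step 1. Local non-degeneracy only says that $q\colon\mathcal{H}_{\mathcal{S}(3)}\to\tilde V_3$ is injective. Finite rank of $\mathfrak{D}=qq^\dagger$ is automatic, because $\mathcal{H}_{\mathcal{S}(3)}$ is finite-dimensional. Nothing in that axiom bounds $\tilde V_m$, which is a quotient of a sum over infinitely many stratifications. Likewise, ``every simple object is a summand of some $A^{\otimes n}$'' holds tautologically in the idempotent completion. Even together with finite-dimensionality of every $\End(A^{\otimes n})$, it does not give finitely many simples: $\mathrm{Rep}(SU(2))$ generated by its fundamental representation is a counterexample.

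The missing idea is that topological connectedness makes $q_{\cN}\colon\mathcal{H}_{\cN}\to\tilde V_m$ \emph{surjective} whenever $\cN^1$ is connected. TC1, via the polar decomposition $K_{\cN,\cM}=u\lvert K_{\cN,\cM}\rvert$ with $\lvert K_{\cN,\cM}\rvert=\tau^{(\lvert F_{\cN}\rvert-\lvert F_{\cM}\rvert)/2}K_{\cM,\cM}$, shows that all connected stratifications with a given boundary have the same image. TC2 plus a Schur-complement argument puts the images of disconnected stratifications inside that image. This yields local finiteness. Applied to a connected stratification of a rectangle whose horizontal cut meets the $1$-skeleton in one point, it also yields $\mathcal{A}_m=\mathcal{A}(1,m)\mathcal{A}(m,1)$. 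Hence every simple object is a summand of $A$ itself, and finite-dimensionality of $\mathcal{A}_1$ bounds the number of simples. Commutativity then makes $\End(A)$ commutative, so $A$ is multiplicity-free. So there are no multiplicity spaces to ``project away'' (which would not be a unitary anyway). There are also no edge Hilbert spaces at all: labels live only on trivalent vertices.

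The central gap is Step 2. You assert, but do not prove, that $\mathbb{T}$ is rigid, and the mechanism you sketch would not work. The relevant operator is $\mathfrak{D}$ on $\mathcal{A}(2,1)=\bigoplus_{i,j,k}\Hom(i\otimes j,k)$, and it does not lie in a commutative algebra. TC does not make it (or $\mathbb{T}$) an idempotent; that is the one-dimensional biprojection picture. The target is $\mathfrak{D}|_{\Hom(i\otimes j,k)}=\lambda\sqrt{d_id_jd_k}\,\mathrm{Id}$, not a weight $d_X$ on edge labels. It requires $\mathfrak{D}$ to be scalar on each multiplicity space $\Hom(i\otimes j,k)$, which no Perron--Frobenius argument on dimensions can detect. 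The inputs actually needed are the following.
\begin{enumerate}
\item An idempotent relation $\widehat{\mathfrak{D}_{\cM}}^2=c\,\widehat{\mathfrak{D}_{\cM}}$, obtained by concatenating two copies, but only for edge stratifications with boundary $\mathcal{S}(1,1)$. With commutativity and strict positivity it gives $\mathbb{T}\in\Hom(\gamma^2,\gamma)$.
\item The Frobenius identity $(\mathbb{T}\otimes\mathrm{Id})(\mathrm{Id}\otimes\mathbb{T}^*)=\mathbb{T}^*\mathbb{T}$. This is TC applied to the two face-free connected stratifications $H$ and $I$ with boundary $\mathcal{S}(2,2)$.
\item Rotation and modular-conjugation invariance of $\mathfrak{D}$, coming from the mapping-class-group covariance of labels and from $\hat\theta$.
\item An eigenvalue argument. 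Here $\beta^*\alpha$ is an eigenvector of the $(ij,kl)$-block of $\mathbb{T}^*\mathbb{T}$ with eigenvalue $ab/d_p$, and a nonzero overlap with the other fusion channel forces $ab/d_p=cd/d_q$. Rotation and conjugation give two more relations, and anchoring at $\Hom(\mathbb{1}\otimes\mathbb{1},\mathbb{1})$ gives the formula.
\end{enumerate}

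Note also that $q$ itself, not $\mathfrak{D}^{-1/2}q$, intertwines $Z$ with the categorical evaluation; there are no normalizing factors in $Z$. So rigidity is precisely what makes $\lambda^{-1/2}q$ (followed by bending strands) unitary for the skein-module inner product. Corollary~\ref{corollary:: ground state on sphere as evaluation} then gives $U_{\cM}\ket{\tilde\Psi_{\cM}}=\lambda^{V/2}\mu^{F-1}\ket{\Psi_\Gamma}$ for the product $U_{\cM}$ of these local unitaries, without a separate plaquette check. Finally, your nonvanishing argument is not valid as written. $\lVert\tilde\Psi_{\cM}\rVert^2$ is not the value of $Z$ on a glued sphere, and reflection positivity only gives $\geq 0$. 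Nonvanishing instead comes from $\Psi_\Gamma\neq 0$.
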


The three additional conditions have both physical and mathematical interpretations.
Local non-degeneracy requires the one-site reduced density matrix $\mathfrak{D}>0$; otherwise a local projection could remove redundant degrees of freedom without changing the long-range entanglement. 
Commutativity is the condition that excludes multiplicities of the edge labels. 
The topological connectedness imposes two conditions which informally say: 
\begin{enumerate}
    \item TC1: after the canonical rescaling by $\tau$ per interior face, tracing out a connected lattice in a disk gives a reduced positive operator independent of that lattice;
    \item TC2: after the same face rescaling, tracing out a disconnected lattice gives the compression of the reduced positive operator associated with a connected lattice. 
\end{enumerate}
It is these conditions that allow us to connect the wave functions on different lattices. 

Mathematically, Axioms 1-4 ensure complete finiteness as shown in Corollary~\ref{corollary:: A is sum of simple objects}, so we obtain an emergent unitary fusion category $\mathcal{C}$. 
In addition, Axioms 1-4 imply that $\mathbb{T}$ satisfies the Frobenius identity and that $\mathfrak{D}=\hat{\mathbb{T}}\geq 0$, namely the $\mathcal{F}$-positivity of $\mathbb{T}$ \ref{def:: F-positivity}. 
Axiom 5 means that $\mathfrak{D}>0$, which implies the object $A\in \mathcal{C}$ associated with an edge is regular (contains all simple objects). 
By commutativity, we then have $A = \bigoplus_{x\in \Irr(\mathcal{C})}x$.
Moreover, we prove that  $\mathbb{T}\in \Hom(\gamma^2,\gamma)$, where $\gamma = \bigoplus_{x\in \Irr(\mathcal{C})}x\boxtimes x^{\mathrm{op}}$.
The uniqueness of $\mathfrak{T}$ follows from the following surprising rigidity theorem. 

\begin{thmalpha}
    Let $\mathcal{C}$ be a unitary fusion category. 
    Set $\gamma = \bigoplus_{x\in \Irr(\mathcal{C})}x\boxtimes x^{\mathrm{op}}$, and let $\mathbb{T}\in \Hom(\gamma^2,\gamma)$.
    Suppose that $\mathbb{T}$ is invariant under the one-click rotation and modular conjugation, strictly $\mathcal{F}$-positive, and satisfies the Frobenius identity \eqref{eqn:: Frobenius identity}.
    Then there exists $\lambda>0$ such that
    \begin{equation*}
        \mathbb{T} = \bigoplus_{i,j,k\in \Irr(\mathcal{C})}\lambda\sum_{t^{ij}_k} \sqrt{d_id_jd_k} P_k \left( t^{ij}_{k}\boxtimes (t^{ij}_{k})^* \right) P_i\otimes P_j,
    \end{equation*}
    where $t^{ij}_{k}$ ranges over an orthonormal basis of $\Hom(ij,k)$ with respect to the trace inner product, and $P_i$ is the projection onto the diagonal component $i\boxtimes i^{\mathrm{op}}$.
\end{thmalpha}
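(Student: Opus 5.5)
The plan is to decompose $\mathbb{T}$ along the simple summands of $\gamma$ and to use the hypotheses to pin down the coefficients, block by block. Since $\gamma=\bigoplus_x x\boxtimes x^{\mathrm{op}}$, we have
\[
\Hom_{\mathcal C\boxtimes\mathcal C^{\mathrm{op}}}(\gamma^2,\gamma)=\bigoplus_{i,j,k}\Hom(ij,k)\otimes\overline{\Hom(ij,k)},
\]
so we can write
\[
\mathbb{T}=\bigoplus_{i,j,k}\sum_{a,b}c^{ij}_{k;ab}\,P_k\bigl(t^{ij}_{k,a}\boxtimes (t^{ij}_{k,b})^*\bigr)P_i\otimes P_j .
\]
Here each block $C^{ij}_k=(c^{ij}_{k;ab})$ is a matrix on the multiplicity space $\Hom(ij,k)$. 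The goal is to show that $C^{ij}_k=\lambda\sqrt{d_id_jd_k}\,\mathrm{Id}$.

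First, I will translate the Frobenius identity into a relation among the blocks $C^{ij}_k$. Composing $\mathbb{T}\circ(\mathbb{T}\otimes 1)$ and $\mathbb{T}\circ(1\otimes\mathbb{T})$ and expanding in bases gives an equation of the form $F$-move $\otimes \overline{F}$-move. Concretely, the first composition yields $\sum_m (C^{ij}_m\otimes C^{mk}_l)$ contracted with trivalent bases, and the second yields the analogous expression conjugated by the $6j$-symbols of $\mathcal C$ on the first tensor factor and by their complex conjugates on the second. The unitarity of the $F$-matrices (with respect to the trace inner product, after normalizing by $\sqrt{d}$ factors) shows that the scalar family $C^{ij}_k=\lambda\sqrt{d_id_jd_k}$ is a solution. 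The task is then to show it is the only strictly $\mathcal F$-positive solution.

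Second, I will use invariance under the one-click rotation and modular conjugation. Rotation identifies $C^{ij}_k$ with $C^{jk^*}_{i^*}$ (suitably transported through the rotation map on multiplicity spaces, which is unitary up to the $\sqrt{d}$ normalization). Modular conjugation makes each $C^{ij}_k$ self-adjoint. The key point is that the Fourier transform $\hat{\mathbb T}=\mathfrak D$ is block-diagonal. Its blocks are, up to the positive factor $\sqrt{d_id_jd_k}$, the same matrices $C^{ij}_k$ viewed as operators on $\Hom(ij,k)$: Fourier transforming the $\gamma$-string turns the $t\boxtimes t^*$ structure into a rank-one-type operator $|t_a\rangle\langle t_b|$. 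Strict $\mathcal F$-positivity then says every $C^{ij}_k>0$.

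Third, the rigidity step. Specializing the Frobenius identity to $i=\mathbb 1$ forces $C^{\mathbb 1 j}_j$ to be a scalar $\lambda_j$ (the multiplicity space is one-dimensional), and the relations with $j,k$ arbitrary and $i=\mathbb 1$ or $k=\mathbb 1$ give $\lambda_j/\sqrt{d_j}$ constant, using multiplicativity of the unit. Then I will set $D^{ij}_k=C^{ij}_k/(\lambda\sqrt{d_id_jd_k})$. The Frobenius identity becomes the statement that $\bigoplus_m D^{ij}_m\otimes D^{mk}_l$ and $\bigoplus_n D^{jk}_n\otimes D^{in}_l$ are intertwined by the unitary $F$-move acting as $F\otimes\overline F$, i.e.\ by conjugation. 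Taking traces, or more precisely applying the faithful state given by contraction against the canonical (scalar) solution, yields the identity $\sum_m \mathrm{tr}(D^{ij}_m)\,\mathrm{tr}(D^{mk}_l)\,\cdot=\cdots$, and positivity lets us apply an AM–GM / Cauchy–Schwarz argument. I expect this to be the main obstacle. The plan is to show that the positive operator $X=\bigoplus D^{ij}_k$ satisfies $X\star X=X$ for a "convolution" $\star$ which is contractive for operator norm and trace norm simultaneously, so that the extreme eigenvalues of $X$ must both equal $1$. Concretely, I will compare $\|X\|_\infty$ and the minimal eigenvalue using the Frobenius relation with one leg set to $\mathbb 1$, propagating along fusion, and use strict positivity to divide. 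If this direct route fails, I would instead interpret $\mathbb T$ as a $Q$-system multiplication on $\gamma$ (Frobenius plus positivity gives a C$^*$-Frobenius algebra structure), and invoke uniqueness of the connected Lagrangian algebra structure on $\gamma$ (the canonical one in $\mathcal Z(\mathcal C)$ restricted) up to the automorphism group. I would then show that rotation and modular-conjugation invariance kill the remaining gauge freedom $u^{ij}_k$, since a gauge transformation preserving those symmetries and positivity must be trivial, leaving only the overall scale $\lambda>0$.
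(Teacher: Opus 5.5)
Your Steps 1--2 are harmless bookkeeping. Your $C^{ij}_k$ is exactly the positive operator $\mathbf{D}^{ij}_k$ on $\Hom(ij,k)$, with no extra $\sqrt{d_id_jd_k}$ factor. But the rigidity step, which is the whole content of the theorem, is not supplied, and the mechanism you sketch cannot work.

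First, what you expand is associativity, $\mathbb{T}\circ(\mathbb{T}\otimes 1)$ versus $\mathbb{T}\circ(1\otimes\mathbb{T})$. The hypothesis is instead the Frobenius identity $(\mathbb{T}\otimes\mathrm{Id})(\mathrm{Id}\otimes\mathbb{T}^*)=\mathbb{T}^*\mathbb{T}$, which compares the two channel decompositions of $\Hom(ij,kl)$.

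Second, and decisively, your normalized relation $F\bigl(\bigoplus_m D^{ij}_m\otimes D^{mk}_l\bigr)F^*=\bigoplus_n D^{jk}_n\otimes D^{in}_l$ is solved by $D^{ij}_k=f(i)f(j)/f(k)$ for any positive function $f$ on $\Irr(\mathcal{C})$. Indeed, both sides are then the scalar $f(i)f(j)f(k)/f(l)$. This solution is strictly positive, has constant unit blocks $D^{\mathbb{1}j}_j=D^{j\mathbb{1}}_j=f(\mathbb{1})$, and is not identically $1$. So no comparison of $\|X\|_\infty$ with the minimal eigenvalue, driven by this relation and its unit-leg specializations, can force $D\equiv 1$.

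Nor is there an idempotent equation $X\star X=X$ to exploit: every hypothesis is homogeneous in $\mathbb{T}$, which is exactly why $\lambda$ is free. The unit-leg cases only constrain the one-dimensional blocks, and they give $C^{\mathbb{1}x}_x\propto d_x$, not $\propto\sqrt{d_x}$. They say nothing about whether $C^{ij}_k$ is scalar on a multiplicity space. The rotation and modular-conjugation hypotheses have to enter the rigidity argument essentially, and your Step 3 does not use them.

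The fallback does not close the gap either. The hypotheses provide no unit and no specialness, so $\mathbb{T}$ is not given as a Q-system. Moreover, algebra structures on $\gamma$ are not unique up to isomorphism. For $\mathcal{C}=\mathrm{Vec}_{\mathbb{Z}_2\times\mathbb{Z}_2}$, the object $\gamma$ carries a twisted group algebra with nontrivial class in $H^2(\mathbb{Z}_2\times\mathbb{Z}_2,U(1))$, inequivalent to the canonical one. (Also, ``Lagrangian'' is not meaningful in $\mathcal{C}\boxtimes\mathcal{C}^{\mathrm{op}}$ for non-braided $\mathcal{C}$.) What excludes such twists is $\mathcal{F}$-positivity, and you do not say how it would be used.

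The missing idea is the paper's eigenvector-overlap argument:
\begin{enumerate}
\item Let $\alpha\in\Hom(ij,p)$ and $\beta\in\Hom(kl,p)$ be eigenvectors of $\mathbf{D}^{ij}_p$ and $\mathbf{D}^{kl}_p$ with eigenvalues $a,b$. Then $\beta^*\alpha$ is an eigenvector, with eigenvalue $ab/d_p$, of the self-adjoint operator on $\Hom(ij,kl)$ coming from the corresponding block of $\mathbb{T}^*\mathbb{T}$.
\item The Frobenius identity says the same operator is diagonalized by the cross-channel vectors $(\zeta\otimes\mathrm{Id}_l)(\mathrm{Id}_i\otimes\delta^*)$. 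Here $\zeta,\delta$ are eigenvectors of $\mathbf{D}^{iq}_k$ and $\mathbf{D}^{ql}_j$ with eigenvalues $c,d$, and the cross-channel eigenvalue is $cd/d_q$.
\item A nonzero overlap between the two kinds of eigenvectors therefore forces $ab/d_p=cd/d_q$.
\item The overlap is a tetrahedral diagram, so rotation and modular-conjugation invariance give the other two pairings, $ac/d_i=bd/d_l$ and $ad/d_j=bc/d_k$. Together the three equations yield $a/b=\sqrt{d_id_j/(d_kd_l)}$.
\item Take $\beta$ to be the unitor in $\Hom(\mathbb{1}k,k)$. Its eigenvalue $\lambda d_k$ comes from the same relation on $\Hom(\bar{x}x,\mathbb{1})$ followed by rotation.
\end{enumerate}
This pins every eigenvalue of $\mathbf{D}^{ij}_k$ to $\lambda\sqrt{d_id_jd_k}$; in particular, each block is scalar.
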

It is intriguing to see that the above morphism is proportional to the multiplication map of the canonical Frobenius algebra in $\mathcal{C}\boxtimes\mathcal{C}^{\mathrm{op}}$, which implements the weak Morita equivalence with the Drinfeld center of $\mathcal{C}$ \cite{Longo1995Nets,Muger2003b}.

Removing any one of the six axioms will produce different types of models, which deserve future study. 
Meanwhile, TC is closely related to the area law and topological entanglement entropy \cite{KitaevPreskill2006,LevinWen2006} and the entanglement bootstrap axioms \cite{ShiKatoKim2020}, and we expect to develop further connections between the entanglement bootstrap approach and the one developed in this paper. 
Furthermore, a perturbation of the one-site density matrix will produce perturbed models beyond RG fixed points.
Finally, an extension of this axiomatization to higher dimensions remains a substantial challenge. 

The paper is organized as follows.
Section~\ref{section:: String-net wave function} constructs the Levin--Wen model and its ground-state wave function on $\mathbb S^2$. 
Section~\ref{sec:: S^2-functional} recalls $\mathbb S^2$-functionals and constructs their basic examples from string-net ground states.
Section~\ref{sec:: axioms} formulates the axioms above and verifies them for Levin--Wen wave functions.
Section~\ref{sec:: uniqueness of Levin-Wen wave function} develops their consequences and proves the uniqueness theorem. Appendix~\ref{app:: reconstruction from S2 functional} reviews the reconstruction of the spherical unitary tensor category $\mathcal C(Z)$ from an $\mathbb S^2$-functional.

\begin{acknowledgements}
Zhengwei Liu was supported by Beijing Natural Science Foundation Key Program (Grant No. Z220002). 
All authors were supported by Beijing Natural Science Foundation (Grant No. Z221100002722017).
\end{acknowledgements}

\section{Levin--Wen String-net Model}\label{section:: String-net wave function}

In this section, we construct the Levin--Wen model and its ground-state wave function on the $2$-dimensional sphere $\mathbb{S}^2$.
The input is a (strict) spherical unitary fusion category (UFC) $\mathcal{C}$.
We fix a set of representatives $\Irr(\mathcal{C})$ of simple objects such that $\bar{a}\in \Irr(\mathcal{C})$ for every non-self-dual object $a\in \Irr(\mathcal{C})$.
This defines an involution $(\cdot)^{-}$ on $\Irr(\mathcal{C})$ that maps each non-self-dual object $a$ to $\bar{a}$ and fixes every self-dual object.
For each self-dual object $b\in \Irr(\mathcal{C})$, we also fix a unitary isomorphism $\phi_b\in \Hom(b,\bar{b})$.

For an object $x$, denote by $\mathrm{Id}_x$ the identity morphism of $x$ and by $d_x = \tr_{\mathcal{C}}(\mathrm{Id}_x)$ its quantum dimension. 
The evaluation and coevaluation maps for an object $x$ are denoted by $\ev_x\in \Hom(\bar{x} x,\mathbb{1})$ and $\coev_x\in \Hom(\mathbb{1},x\bar{x})$, respectively.
All diagrams are read from top to bottom, with upward arrows representing dual objects. 
For a self-dual simple object $b$, we represent the unitary isomorphism $\phi_b$ by a colored dot on a $b$-labelled strand as in $\vcenter{\hbox{\begin{tikzpicture}
    \VeryThinLine[midarrow] (0,0.6) -- (0,0.05);
    \VeryThinLine[midarrow] (0,-0.6) -- (0,-0.05);
    \node[font=\scriptsize] at (0.25,0.45) {$b$};
    \node[font=\scriptsize] at (0.25,-0.45) {$b$}; 
    \draw[draw=black, fill=yellow] (0,0) circle (2pt);
\end{tikzpicture}}}$, with $\phi^{-1}_b$ as $\vcenter{\hbox{\begin{tikzpicture}
    \VeryThinLine[midarrow] (0,0.05) -- (0,0.6); 
    \VeryThinLine[midarrow] (0,-0.05) -- (0,-0.6) ;
    \node[font=\scriptsize] at (0.25,0.45) {$b$};
    \node[font=\scriptsize] at (0.25,-0.45) {$b$}; 
    \draw[draw=black, fill=yellow] (0,0) circle (2pt);
\end{tikzpicture}}}$. 

Given simple objects $a_1,a_2, \dots, a_n$ and $b_1,b_2,\dots,b_m$ in $\mathcal{C}$, we define an inner product on $\Hom(a_1 a_2 \cdots a_n, b_1 b_2 \cdots b_m)$ as follows:
\begin{equation}\label{eqn:: skein module inner product}
    \braket{g,f} = \frac{1}{\sqrt{d_{a_1}d_{a_2}\cdots d_{a_n}}}\frac{1}{\sqrt{d_{b_1}d_{b_2}\cdots d_{b_m}}}\tr_{\mathcal{C}}(g^*f),\quad f,g\in \Hom(a_1 a_2 \cdots a_n, b_1 b_2 \cdots b_m). 
\end{equation}
We will equip the hom spaces with the above inner product throughout this section. 
By $\mathrm{ONB}(ab,c)$, we mean an orthonormal basis of $\Hom(ab,c)$ with respect to this inner product. 

\subsection{Local Hamiltonian}

We now describe the local Hilbert space of the Levin--Wen string-net model on the two-dimensional sphere $\mathbb{S}^2$.
Let $\Gamma = (V,E)$ be a finite connected \emph{directed} graph embedded in $\mathbb{S}^2$, drawn in blue.
For each vertex $v\in V$, denote its valence by $|v|$.
Choose a \emph{counterclockwise} ordering of the edges incident to $v$, and denote the $i$-th edge by $e_v(i)$.
This choice will be indicated by placing a dollar sign between the first edge $e_v(1)$ and the last edge $e_v(|v|)$. 

\begin{remark}
    We use directed graphs in fixing the FS indicators, which is similar to the approach taken in \cite{Simon2022StraighteningFSindicator}. 
    When $\mathcal{C}$ contains no self-dual objects with nontrivial Frobenius-Schur indicators, undirected graphs suffice. 
\end{remark}

For a vertex $v\in V$, a boundary label is a map from the edges incident to $v$ to $\Irr(\mathcal{C})$, denoted by $\vec{a} = (a_1,a_2,\cdots, a_{|v|})$.
For a labelling $\vec{a}$, we define the Hilbert space $\mathcal{H}_v(\vec{a}) = \Hom(\mathbb{1}, a_1a_2\cdots a_{|v|})$. 
The local Hilbert space at $v$ is the orthogonal direct sum $\mathcal{H}_v = \bigoplus_{\vec{a}} \mathcal{H}_v(\vec{a})$ over all boundary labels $\vec{a}$.
The local Hilbert space of $\Gamma$ is $\mathcal{H}_{\Gamma} = \bigotimes_{v\in V} \mathcal{H}_v$.
Let $A = \bigoplus_{a\in \Irr(\mathcal{C})}a$ be the direct sum of all simple objects in $\mathcal{C}$.
Then $\mathcal{H}_v$ is unitarily equivalent to $\Hom(\mathbb{1}, A^{|v|})$, with the equivalence depending on the ordering and orientations of the edges incident to $v$.

The Levin--Wen Hamiltonian consists of two types of local interactions. 
For each edge $e\in E$ with $\partial(e) = \{v_1,v_2\}$, define a projection $Q_e$ on $\mathcal{H}_{v_1}\otimes \mathcal{H}_{v_2}$ which projects onto the span of all configurations on $v_1$ and $v_2$ that assign dual simple objects to $e$. 
The image of the product of all $Q_e$ is thus spanned by all configurations with consistent labellings, often called the string-net subspace. 

We next define the plaquette operators.
First, introduce the following local operators.
For objects $x,y$ in $\mathcal{C}$ and a morphism $f\in \Hom(x,y)$, the modular conjugation of $f$ is the following morphism in $\Hom(\bar{x},\bar{y})$:
\begin{equation}
    \overline{f} = (\mathbf{ev}_y\otimes \mathrm{Id}_{\bar{x}}) \circ \left( \mathrm{Id}_{\bar{y}}\otimes f^*\otimes \mathrm{Id}_{\bar{x}} \right)\circ (\mathrm{Id}_{\bar{y}}\otimes \mathbf{coev}_x) = \vcenter{\hbox{\begin{tikzpicture}[scale=0.75]
    \VeryThinLine (0,0.75) -- (0,0.25);
  \VeryThinLine[midarrow]  (1,0.75) arc (0:180:0.5 and 0.25);
  \VeryThinLine[midarrow] (1,-0.75) -- (1,0.75);
  \node at (1,-1) {$x$};
  \VeryThinLine(0,-0.25) -- (0,-0.75);
  \VeryThinLine[midarrow] (0,-0.75) arc (0:-180:0.5 and 0.25);
  \VeryThinLine[midarrow] (-1,-0.75) -- (-1,0.75);
  \node at (-1,1) {$y$};
  \node [draw, minimum width=0.5cm, minimum height=0.25cm, fill = white] at (0,0) {$f^*$};
  \end{tikzpicture}}}.
\end{equation}
For $a\in\Irr(\mathcal{C})$, define $\iota_a\colon a^-\to\bar a$ to be the identity if $a$ is non-self-dual and $\phi_a$ if $a$ is self-dual.
For $s,t,j\in \Irr(\mathcal{C})$ and $f\in \Hom(sj,t)$, define $\widehat{f}\in \Hom(\bar{j}s^{-},t^{-})$ by
\begin{equation}
    \widehat{f}=\iota_t^{-1}\circ\overline{f}\circ(\mathrm{Id}_{\bar{j}}\otimes\iota_s).
\end{equation}

Given simple objects $s_1,s_2,t_1,t_2,j\in \Irr(\mathcal{C})$ and morphisms $\xi\in \Hom(s_1j,t_1)$ and $\eta\in \Hom(s_2j,t_2)$, define the following morphism in $\Hom(s_1s^{-}_2,t_1t^{-}_2)$:
\begin{equation}\label{equation:: f(xi,eta)}
    \begin{aligned}
        &f(\xi,\eta) = \vcenter{\hbox{\begin{tikzpicture}
    \VeryThinLine[midarrow] (0.5,0.65) arc (60:120:1 and 1);
    \node at (0,0.5) {$j$};
    \VeryThinLine[midarrow] (-0.5,1.15) -- (-0.5,0.65);
    \VeryThinLine[midarrow] (-0.5,0.65) -- (-0.5,0);
    \VeryThinLine[midarrow] (0.5,1.15) -- (0.5,0.65);
    \VeryThinLine[midarrow] (0.5,0.65) -- (0.5,0);
    \node at (-0.75,0.65) {$\xi$};
    \node at (0.75,0.65) {$\widehat{\eta}$};
    \node at (-0.5,1.4) {$s_1$};
    \node at (-0.5,-0.25) {$t_1$};
    \node at (0.5,1.4) {$s^{-}_2$};
    \node at (0.5,-0.25) {$t^{-}_2$};
\end{tikzpicture}}}
    \end{aligned}
\end{equation}
Let $v$ be a vertex adjacent to a plaquette $p$, and suppose that $e_v(k)$ and $e_v(k+1)$ lie on the boundary of $p$ for some $1\leq k\leq |v|-1$.
Define the local operator $B_{v,p}(\xi,\eta)$ on $\mathcal{H}_v$ by
\begin{equation}
    B_{v,p}(\xi,\eta)\ket{g} = \Ket{ \left( \mathrm{Id}^{k-1}_{A}\otimes f(\xi,\eta)\otimes \mathrm{Id}^{|v|-k-1}_{A} \right)g },\quad g\in \Hom(\mathbb{1}, A^{|v|}).
\end{equation}
The operator $B_{v,p}(\xi,\eta)$ maps summands of the form $\mathcal{H}_v(a_1,\dots,a_{k-1},s_1,s^{-}_2,a_{k+2},\dots,a_{|v|})$ to $\mathcal{H}_v(a_1,\dots,a_{k-1},t_1,t^{-}_2,a_{k+2},\dots,a_{|v|})$ and acts by $0$ on all other summands.
When $k=|v|$, let $R_v:\Hom(\mathbb{1},A^{|v|})\to\Hom(\mathbb{1},A^{|v|})$ be the one-click rotation induced by the chosen duality identifications, and define $B_{v,p}(\xi,\eta)$ by
\begin{equation}
    B_{v,p}(\xi,\eta)\ket{g}
    =\Ket{R_v^{-1}\left(\left(f(\xi,\eta)\otimes \mathrm{Id}^{|v|-2}_{A}\right)R_v(g)\right)},\quad g\in \Hom(\mathbb{1}, A^{|v|}).
\end{equation}
Thus every composition in this cyclic case is interpreted after the displayed rotation, where the adjacent factors have the domains and codomains specified in the definition of $f(\xi,\eta)$.

Since $\Gamma$ is connected, the components of $\mathbb{S}^2\setminus \Gamma$ are simply connected, which we call plaquettes.
For a plaquette $p$, fix a \emph{counterclockwise} ordering $[e_1,v_1,e_2,v_2,\dots,e_{|\partial p|},v_{|\partial p|}]$ of the edges and vertices on $\partial p$, and let $\partial p^{\Irr(\mathcal{C})}$ be the set of maps from these edges to $\Irr(\mathcal{C})$.
For $\vec{s}\in \partial p^{\Irr(\mathcal{C})}$, define the operator $B^j_p(\vec{s})$ by
\begin{equation}
    B^j_p(\vec{s}) = \sum_{\vec{t}\in \partial p^{\Irr(\mathcal{C})}}\sum_{\xi_i\in \mathrm{ONB}(s_i j,t_i)} \bigotimes^{|\partial p|}_{i=1}\left( \frac{d_{t_i}}{d_{s_i}d_j} \right)^{1/4}\left( \frac{d_{t_{i+1}}}{d_{s_{i+1}}d_j} \right)^{1/4} \sigma(\vec{t})_{i+1} \sigma(\vec{s})_{i+1} B_{v_i,p}(\xi_{i+1},\xi_{i}),
\end{equation}
where the tensor product is taken in cyclic order. 
The factor $\sigma(\vec{t})_i$ is defined as follows: when $e_i$ is directed from $v_{i+1}$ to $v_i$ and $t_i$ is self-dual, set $\sigma(\vec{t})_i = \nu_{t_i}$, the Frobenius-Schur indicator of $t_i$; otherwise set $\sigma(\vec{t})_i = 1$.
We then define $B^j_p = \sum_{\vec{s}\in \partial p^{\Irr(\mathcal{C})}} B^j_p(\vec{s})$, which is independent of the ordering of $\partial p$. 
Finally, define the plaquette operator $B_p$ by
\begin{equation}
    B_p = \frac{1}{\mu^2}\sum_{j\in \Irr(\mathcal{C})} d_jB^j_p,
\end{equation}
where $\mu= \sqrt{\sum_{a\in \Irr(\mathcal{C})} d^2_a}$ is the global dimension of $\mathcal{C}$. 
The Levin--Wen Hamiltonian is defined as
\begin{equation}
    H = \sum_{e} (\mathrm{I} - Q_e) + \sum_{p} (\mathrm{I} - B_p).
\end{equation}

\subsection{The ground state wave function}

Given a simply connected region $D$ in $\mathbb{S}^2$, the local Hilbert space $\mathcal{H}_D$ is defined to be the tensor product of $\mathcal{H}_v$ for all vertices $v$ in $D$. 
The local Hamiltonian $H_D$ is defined as the sum of all interactions supported in $D$. 
Since $H_D$ is a sum of commuting projections, its ground-state space is the common eigenspace of the $Q_e$ and $B_p$ with eigenvalue $1$.
We describe this local ground-state subspace using the evaluation map \cite{kitaev_kong_2012,Green2024enrichedstringnet}.

\begin{definition}\label{def:: Levin-Wen evaluation map}
    Consider a state $\displaystyle \ket{\psi} = \ket{\otimes_{v\in D}\psi_v}$ in $\mathcal{H}_D$, where each $\psi_v$ belongs to a direct summand $\mathcal{H}_v(\vec{l}_v)$ of $\mathcal{H}_v$.
    Fix a \emph{counterclockwise} ordering of the edges in $\Gamma\cap \partial D$, and let $|\partial D|$ be the number of edges incident with $\partial D$.
    Construct a morphism in $\Hom(\mathbb{1}, A^{|\partial D|})$ as follows:
\begin{enumerate}
    \item For every vertex $v\in D$, display $\psi_v$ with the dollar sign above $v$;
    \item For every interior edge $e$ directed from $v$ to $w$, connect $\psi_v$ and $\psi_w$ by an $a$-labelled string directed from $v$ to $w$, where $a$ is the simple object assigned to $e$ by $\psi_v$; if $a$ is self-dual, place an additional $\phi_a$ at the end connected to $\psi_w$;
    \item For every vertex $v$, connect the remaining edges of $\psi_v$ to the bottom of the diagram by applying the evaluation followed by the coevaluation map.
\end{enumerate}
\end{definition}
The following example illustrates the definition, with $a,c$ non-self-dual and $b$ self-dual:
\begin{equation}
    \begin{aligned}
        \vcenter{\hbox{\begin{tikzpicture}[scale=0.75]
        \draw[fill=black] (-1.75,0) circle (1.5pt);
        \draw[fill=black] (-1,-1.5) circle (1.5pt);
        \draw[fill=black] (0,0) circle (1.5pt);
        \VeryThinLine[midarrow,blue] (-1.75,0) -- (-1,-1.5);
        \VeryThinLine[midarrow,blue] (-1,-1.5) -- (0,0);
        \VeryThinLine[midarrow,blue] (0,0) -- (-1.75,0);
        \node at (-1.65,-0.75) {$a$};
        \node at (-0.875,0.25) {$b$};
        \node at (-0.25,-0.75) {$c$};
        \VeryThinLine[blue] (-1.75,0) -- +(-0.5,0.5);
        \VeryThinLine[blue] (0,0) -- +(0.65,0.5);
        \VeryThinLine[blue] (-1,-1.5) -- +(0,-0.5);
        \node at (-2.25,0.75) {$1$};
        \node at (-1,-2.5) {$2$};
        \node at (0.65,0.75) {$3$};
        \node at (-1.75,0.25) {$\mathdollar$};
        \node at (-1.25,-1.5) {$\mathdollar$};
        \node at (0.25,-0.15) {$\mathdollar$};
    \end{tikzpicture}}}&\rightarrow 
    \vcenter{\hbox{\begin{tikzpicture}
        \coordinate (v1) at (-2,0);
        \coordinate (v2) at (0,0);
        \coordinate (v3) at (2,0);
        \draw[fill=black] (v1) circle (1.5pt);
        \draw[fill=black] (v2) circle (1.5pt);
        \draw[fill=black] (v3) circle (1.5pt);
        \VeryThinLine[midarrow] (v1) -- (-2,-0.75);
        \VeryThinLine (v1) -- (-2.5,-0.75);
        \VeryThinLine[midarrow] (v1) -- (-1.5,-0.75);
        \node at (-2,-1) {$a$};
        \node at (-1.5,-1) {$b$};
        \VeryThinLine[midarrow] (v2) -- (0,-0.75);
        \VeryThinLine (v2) -- (-0.5,-0.75);
        \VeryThinLine[midarrow] (v2) -- (0.5,-0.75);
        \node at (0,-1) {$b$};
        \node at (0.5,-1) {$c^{-}$};
        \VeryThinLine[midarrow] (v3) -- (2,-0.75);
        \VeryThinLine (v3) -- (1.5,-0.75);
        \VeryThinLine[midarrow] (v3) -- (2.5,-0.75);
        \node at (2,-1) {$c$};
        \node at (2.5,-1) {$a^{-}$};
        \node at (-2.5,-1) {$1$};
        \node at (-0.5,-1) {$3$};
        \node at (1.5,-1) {$2$};
    \end{tikzpicture}}} \rightarrow \vcenter{\hbox{\begin{tikzpicture}[scale=0.9]
        \coordinate (v1) at (-2,0);
        \coordinate (v2) at (0,0);
        \coordinate (v3) at (2,0);
        \draw[fill=black] (v1) circle (1.5pt);
        \draw[fill=black] (v2) circle (1.5pt);
        \draw[fill=black] (v3) circle (1.5pt);
        \VeryThinLine (v1) -- (-2,-0.75);
        \VeryThinLine (v1) -- (-2.5,-0.75);
        \VeryThinLine[midarrow]  (v1) -- (-1.5,-0.75);
        \VeryThinLine (v2) -- (-0,-0.75);
        \VeryThinLine (v2) -- (-0.5,-0.75);
        \VeryThinLine (v2) -- (0.5,-0.75);
        \VeryThinLine (v3) -- (2,-0.75);
        \VeryThinLine (v3) -- (1.5,-0.75);
        \VeryThinLine (v3) -- (2.5,-0.75);
        \node at (-2.5,-1) {$1$};
        \node at (-0.5,-1) {$3$};
        \node at (1.5,-1) {$2$};
        \node at (-1.45,-0.35) {$b$};
        \VeryThinLine[midarrow] (0,-0.75) arc (0:-180:0.75 and 0.75);
        \node at (-0.75,-1.75) {$b$};
        \draw[draw=black, fill=yellow] (-1.5,-0.75) circle (2pt);
        \VeryThinLine[midarrow] (2,-0.75) arc (0:-180:0.75 and 0.75);
        \node at (1.25,-1.75) {$c$};
        \VeryThinLine[midarrow] (-2,-0.75) arc (180:360:2.25 and 1.5);
        \node at (0.25,-2.5) {$a$};
    \end{tikzpicture}}}\\
    &\rightarrow \vcenter{\hbox{\begin{tikzpicture}[scale=0.9]
        \coordinate (v1) at (-2,0);
        \coordinate (v2) at (0,0);
        \coordinate (v3) at (2,0);
        \draw[fill=black] (v1) circle (1.5pt);
        \draw[fill=black] (v2) circle (1.5pt);
        \draw[fill=black] (v3) circle (1.5pt);
        \VeryThinLine (v1) -- (-2,-0.75);
        \VeryThinLine (v1) -- (-2.5,-0.75);
        \VeryThinLine[midarrow] (v1) -- (-1.5,-0.75);
        \VeryThinLine (v2) -- (-0,-0.75);
        \VeryThinLine (v2) -- (-0.5,-0.75);
        \VeryThinLine (v2) -- (0.5,-0.75);
        \VeryThinLine (v3) -- (2,-0.75);
        \VeryThinLine (v3) -- (1.5,-0.75);
        \VeryThinLine (v3) -- (2.5,-0.75);
        \node at (-2.5,-3) {$1$};
        \node at (4,-3) {$3$};
        \node at (3,-3) {$2$};
        \node at (-1.45,-0.35) {$b$};
        \VeryThinLine[midarrow] (0,-0.75) arc (0:-180:0.75 and 0.75);
        \node at (-0.75,-1.75) {$b$};
        \draw[draw=black, fill=yellow] (-1.5,-0.75) circle (2pt);
        \VeryThinLine[midarrow] (2,-0.75) arc (0:-180:0.75 and 0.75);
        \node at (1.25,-1.75) {$c$};
        \VeryThinLine[midarrow] (-2,-0.75) arc (180:360:2.25 and 1.5);
        \node at (0.25,-2.5) {$a$};
        \VeryThinLine (-2.5,-0.75) -- (-2.5,-2.75);
        \VeryThinLine (1.5,-0.75) arc (0:-180:0.25 and 0.25) -- +(0,0.75) arc (180:0:1 and 0.5) -- +(0,-2.75);
        \VeryThinLine (-0.5,-0.75) arc (0:-180:0.25 and 0.25) -- +(0,0.75) arc (180:0:2.5 and 1.5) -- +(0,-2.75);
    \end{tikzpicture}}}
    \end{aligned}
\end{equation} 
Thus $\mathrm{eval}_D: \mathcal{H}_D\rightarrow \Hom(\mathbb{1}, A^{|\partial D|})$ defines a linear map that vanishes on the complement of the common eigenspace of the edge operators $Q_e$ in $D$ with eigenvalue $1$.

\begin{lemma}\label{lemma:: eval intertwines partial Bp}
    Let $D$ be a simply connected region in $\mathbb{S}^2$ with a counterclockwise ordering of its boundary.
    Consider a plaquette $p$ such that $\partial p\cap D$ is a path $\gamma = [e_{k+1},v_1,f_1,\dots,f_{n-1},v_n,e_{k}]$, with $e_{k}$ and $e_{k+1}$ in $\partial D$. 
    Then, for any $s_1,s_2,t_1,t_2,j\in \Irr(\mathcal{C})$ and morphisms $\xi\in \Hom(s_1 j,t_1)$ and $\eta\in \Hom(s_2 j,t_2)$, we have
    \begin{equation}\label{eqn:: evaluation map and plaquette operator}
        \begin{aligned}
            &\left( \mathrm{Id}^{k-1}_{A}\otimes f(\xi,\eta)\otimes \mathrm{Id}^{|\partial D|-k-1}_{A} \right)\mathrm{eval}_D = \mathrm{eval}_D \sum_{\vec{t},\vec{s}} \sum_{\zeta_i\in \mathrm{ONB}(s_i j,t_i)}\prod^{n-1}_{i=1}\sqrt{{\frac{d_{t_i}}{d_{s_i}d_j}}} \sigma(\vec{t})_i\sigma(\vec{s})_i\\
            & B_{v_n,p}(\xi, \zeta_{n-1})\otimes B_{v_{n-1},p}(\zeta_{n-1},\zeta_{n-2})\otimes\cdots \otimes B_{v_2,p}(\zeta_2,\zeta_1)\otimes B_{v_1,p}(\zeta_1,\eta),
        \end{aligned}
    \end{equation}
    where the summation is over all $\vec{t} = (t_0,\dots,t_{n+1})$ and $\vec{s} = (s_0,\dots,s_{n+1})$ in $\gamma^{\Irr(\mathcal{C})}$ subject to the constraints $t_0=s_0 = s_1$ and $t_{n+1} = s_{n+1} = s_2$. 
\end{lemma}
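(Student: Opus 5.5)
Both sides of \eqref{eqn:: evaluation map and plaquette operator} are linear in $\ket{\psi}$, so it suffices to check the identity on product states $\ket{\psi}=\ket{\otimes_v\psi_v}$ with each $\psi_v\in\mathcal{H}_v(\vec l_v)$ and with consistent labels on every interior edge of $D$. On any other configuration both sides vanish: $\mathrm{eval}_D$ kills it, and each $B_{v_i,p}$ only relabels the two edges of $\partial p$ at $v_i$, so it cannot repair an inconsistency elsewhere. We argue by induction on the number $n$ of vertices of $\gamma$.

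\textbf{Base case $n=1$.} Here $\gamma=[e_{k+1},v_1,e_k]$ and the product over $i$ is empty. The constraints force $s_0=s_1$ and $s_2=s_{n+1}$, so the right-hand side is $\mathrm{eval}_D\,B_{v_1,p}(\xi,\eta)$. The vertex $v_1$ has its edges $e_k,e_{k+1}$ running straight down to the boundary of the evaluation diagram, possibly after the ev/coev rerouting of step~3 of Definition~\ref{def:: Levin-Wen evaluation map}. Applying $f(\xi,\eta)$ on those two boundary strands is, by naturality and isotopy invariance of the graphical calculus, the same as applying $f(\xi,\eta)$ at the vertex. The only care needed is when the two edges straddle the dollar sign; this is exactly the cyclic case in the definition of $B_{v,p}$, and there one uses the rotation $R_v$ and the pivotal structure.

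\textbf{Inductive step.} Write $\gamma=\gamma'\cup[f_{n-1},v_n,e_k]$. We insert a resolution of the identity on the two parallel strands, the interior edge $f_i$ (labelled $s_i$) and a $j$-strand running alongside it inside $p$. In $\mathcal{C}$ one has
\[
\mathrm{Id}_{s_i}\otimes\mathrm{Id}_j=\sum_{t_i}\sum_{\zeta_i\in\mathrm{ONB}(s_ij,t_i)} c_{s_i,j,t_i}\,\zeta_i^*\zeta_i ,
\]
where the normalising constant $c_{s_i,j,t_i}$ is determined by the inner product \eqref{eqn:: skein module inner product}. With the $\sqrt{d}$ weights built into that inner product, this constant should come out as $\sqrt{d_{t_i}/(d_{s_i}d_j)}$, which is the factor appearing in the statement. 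The proof then has three parts.
\begin{enumerate}
\item Draw the $j$-loop implementing $f(\xi,\eta)$ on the boundary strands $e_k,e_{k+1}$, and isotope it into the plaquette so that it runs parallel to $\gamma$.
\item At each interior edge $f_i$, resolve the pair $(f_i,j)$ using the identity above. This cuts the diagram into local pieces around each $v_i$. The piece at $v_i$ has the form $\zeta_i\circ(\text{vertex})\circ\widehat{\zeta_{i-1}}$, which is $B_{v_i,p}(\zeta_i,\zeta_{i-1})$ read through the base-case argument. The first piece carries $\eta$ and the last carries $\xi$.
\item Reassemble the pieces with $\mathrm{eval}_D$, now applied to the relabelled configuration in which $f_i$ carries $t_i$.
\end{enumerate}

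\textbf{Main obstacle.} The hard part is the bookkeeping of normalisations and orientations rather than the topology. Three things must be checked:
\begin{itemize}
\item The dimension factors from the skein inner product must combine into the stated coefficient $\sqrt{d_{t_i}/(d_{s_i}d_j)}$ and nothing else.
\item When an edge $f_i$ points against the traversal of $\partial p$, $\zeta_i$ must be replaced by $\widehat{\zeta_i}$. This introduces $\iota$'s and, for self-dual labels, a Frobenius--Schur indicator. One must show that the indicators arising from the old label $s_i$ and the new label $t_i$ are exactly $\sigma(\vec s)_i\sigma(\vec t)_i$, with the $\phi$-dot convention of Definition~\ref{def:: Levin-Wen evaluation map}.
\item The $\phi_a$ dots that $\mathrm{eval}_D$ places at the target end of each edge must be accounted for.
\end{itemize}
I would first verify these on a single oriented edge in the four cases (aligned or reversed edge, self-dual or non-self-dual label). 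This is a local computation using $\overline{\overline{f}}=f$ up to pivotal structure and $\phi_b^{\,\vee}=\nu_b\phi_b$. The general case then follows edge by edge.
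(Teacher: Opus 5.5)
Your proposal follows essentially the same route as the paper. After reducing to consistently labelled product states, you draw the $j$-loop realizing $f(\xi,\eta)$ inside $p$ and resolve $\mathrm{Id}_{s_i}\otimes\mathrm{Id}_j$ at each interior edge $f_i$ with coefficient $\sqrt{d_{t_i}/(d_{s_i}d_j)}$; your guess for this constant is right, since $\zeta\zeta^*=\sqrt{d_sd_j/d_t}\,\mathrm{Id}_t$ for a unit vector $\zeta\in\Hom(sj,t)$. You then slide $\zeta_i^*$ around the cup so that one endpoint of $f_i$ carries $\zeta_i$ and the other $\widehat{\zeta_i}$, and settle the Frobenius--Schur bookkeeping case by case with the rotation identity for $\phi_b$, which is what the paper's proof does.

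Two small remarks. First, the induction on $n$ is only nominal: you, like the paper, resolve all interior edges at once and never invoke the inductive hypothesis. Second, $\widehat{\zeta_i}$ always sits at $v_{i+1}$, as dictated by $B_{v_{i+1},p}(\zeta_{i+1},\zeta_i)$, whatever the direction of $f_i$. The orientation only decides where the $\phi$-dots appear, and hence produces the signs $\sigma(\vec s)_i\sigma(\vec t)_i$; it does not decide whether $\zeta_i$ or $\widehat{\zeta_i}$ is used.
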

\begin{proof}
    Multiplication by $\mathrm{Id}^{k-1}_{A}\otimes f(\xi,\eta)\otimes \mathrm{Id}^{|\partial D|-k-1}_{A}$ changes only the label along the path $\gamma$.
    Therefore, it suffices to verify the equality when $D \cap \Gamma = \gamma$.
    Let $\ket{\psi_i}\in \mathcal{H}_{v_i}$.
    The image of $\ket{\psi_n\otimes\cdots\otimes\psi_1}$ under the linear map on the left-hand side of Equation~\eqref{eqn:: evaluation map and plaquette operator} is
    \begin{equation}
        \vcenter{\hbox{\begin{tikzpicture}
            \begin{scope}[shift={(-0.87,1.25)}]
                \VeryThinLine (0:0) -- (-90:0.75);
                \VeryThinLine (0:0) -- (150:0.5);
                \VeryThinLine (0:0) -- (30:0.5);
                \draw[fill=black] (0:0) ellipse (0.05 and 0.05);
                \node at (-0.25,-0.125) {$\psi_n$};
            \end{scope}
            \node at (-0.55,1.65) {$s_{n-1}$};
            \begin{scope}[shift={(0,1.75)}]
                \node at (0.1,0) {$\cdots$};
            \end{scope}
            \node at (0.55,1.65) {$s_1$};
            \begin{scope}[shift={(0.87,1.25)}]
                \VeryThinLine (0:0) -- (-90:0.75);
                \VeryThinLine (0:0) -- (30:0.5);
                \VeryThinLine (0:0) -- (150:0.5);
                \draw[fill=black] (0:0) ellipse (0.05 and 0.05);
                \node at (-0.25,-0.125) {$\psi_1$};
            \end{scope}
            \VeryThinLine[midarrow] (-0.87,0.5) -- (-0.87,0);
            \VeryThinLine[midarrow] (0.87,0.5) -- (0.87,0);
            \VeryThinLine[midarrow] (0.87,0.5) -- (-0.87,0.5);
            \node at (0,0.2) {$j$};
            \node at (-1.2,0.5) {$\xi$};
            \node at (1.2,0.5) {$\widehat{\eta}$};
            \draw[dashed] (-2,0) -- (2,0);
            \node at (-1.75,0.25) {$D$};
        \end{tikzpicture}}}
    \end{equation}
    To obtain the right-hand side, bring the $j$-string close to each $s_i$-string and apply the partition of unity.
    When $s_i$ and $t_i$ are non-self-dual, this gives the following decomposition:
    \begin{equation}\label{eqn:: partition of unity}
        \vcenter{\hbox{\begin{tikzpicture}
    \VeryThinLine[midarrow] (-0.35,0) -- (-0.35,-0.5) arc (0:-180:0.25 and 0.25) -- (-0.85,0);
    \VeryThinLine[midarrow] (0.35,0) -- (0.35,-0.5) arc (0:-180: 0.95 and 0.95) -- +(0,0.5);
    \node at (-0.35,0.25) {$s_i$};
    \node at (0.35,0.25) {$j$};
    \end{tikzpicture}}} = \sum_{t_i\in \Irr(\mathcal{C})}\sum_{\zeta_i\in \mathrm{ONB}(s_ij,t_i)}\sqrt{\frac{d_{t_i}}{d_{s_i}d_j}}\vcenter{\hbox{\begin{tikzpicture}
    \begin{scope}
            \VeryThinLine[midarrow] (120:0.5) -- (0,0);
            \VeryThinLine[midarrow] (30:0.5) -- (0,0);
            \node at (120:0.75) {$s_i$};
            \node at (30:0.75) {$j$};
            \draw[fill=black] (0:0) ellipse (0.05 and 0.05);
            \node at (-0.25,-0.2) {$\zeta_i$};
            \VeryThinLine(0,0) -- (0,-0.5);
        \end{scope}
        \begin{scope}[xshift=-1.5cm]
            \VeryThinLine[midarrow]  (0,0) -- (150:0.5);
            \node at (150:0.75) {$j$};
            \VeryThinLine[midarrow] (60:0.5) -- (0,0);
            \node at (60:0.75) {$s_i$};
            \draw[fill=black] (0:0) ellipse (0.05 and 0.05);
            \node at (0.25,-0.2) {$\widehat{\zeta_i}$};
            \VeryThinLine (0,0) -- (0,-0.5);
        \end{scope}
        \VeryThinLine[midarrow] (0,-0.5) arc (0:-180:0.75 and 0.75);
        \node at (-0.75,-1.5) {$t_i$};
    \end{tikzpicture}}}, 
    \end{equation}
    which is obtained by sliding $\zeta^*_i$ close to the morphism $\psi_i$ along the $s_i$ string, and by $\widehat{\zeta}_i = \overline{\zeta}_i$. 
    Now suppose $s_i$ is self-dual and $f_i$ is directed from $v_{i}$ to $v_{i+1}$. 
    When $t_i$ is not self-dual, we have 
    \begin{equation}
        \begin{aligned}
            \vcenter{\hbox{\begin{tikzpicture}
    \VeryThinLine[midarrow] (-0.35,0) -- (-0.35,-1) arc (0:-180:0.25 and 0.25) -- (-0.85,-0.15);
    \VeryThinLine[midarrow] (0.35,0) -- (0.35,-1) arc (0:-180: 0.95 and 0.95) -- +(0,1);
    \node at (-0.35,0.25) {$s_i$};
    \node at (-0.85,0.25) {$s_i$};
    \node at (0.35,0.25) {$j$};
    \VeryThinLine[midarrow] (-0.85,-0.15) -- (-0.85,-0.45);
    \VeryThinLine (-0.85,-0.85) -- (-0.85,-0.55);
    \draw[draw=black, fill=yellow] (-0.85,-0.5) circle (2pt);
    \end{tikzpicture}}} &= \sum_{t_i\in \Irr(\mathcal{C})}\sum_{\zeta_i\in \mathrm{ONB}(s_ij,t_i)}\sqrt{\frac{d_{t_i}}{d_{s_i}d_j}}\vcenter{\hbox{\begin{tikzpicture}
    \VeryThinLine (0,0) -- (150:0.5);
    \VeryThinLine (0,0) -- (30:0.5);
    \node at (150:0.75) {$s_i$};
    \node at (30:0.75) {$j$};
    \draw[fill=black] (0:0) ellipse (0.05 and 0.05);
    \node at (-0.25,-0.2) {$\zeta_i$};
    \VeryThinLine[midarrow] (0,0) -- (0,-1);
    \node at (0.3,-0.5) {$t_i$};
    \begin{scope}[yshift=-1cm]
        \VeryThinLine (0,0) -- (-150:0.5);
        \VeryThinLine (0,0) -- (-30:0.5);
        \VeryThinLine[midarrow] (-150:0.5) arc (0:-180:0.25 and 0.25) -- (-0.93,0.6);
        \VeryThinLine[midarrow] (-0.93,1.5) -- (-0.93,0.7);
        \draw[draw=black, fill=yellow] (-0.93,0.65) circle (2pt);
        \VeryThinLine[midarrow] (-30:0.5) arc (0:-180:1 and 1) -- +(0,1.5);
        \node at (-125:0.75) {$s_i$};
        \node at (-30:0.75) {$j$};
        \draw[fill=black] (0:0) ellipse (0.05 and 0.05);
        \node at (-0.25,0.2) {$\zeta^*_i$};
    \end{scope}
    \end{tikzpicture}}}\\
    &= \sum_{t_i\in \Irr(\mathcal{C})}\sum_{\zeta_i\in \mathrm{ONB}(s_ij,t_i)}\sqrt{\frac{d_{t_i}}{d_{s_i}d_j}} \vcenter{\hbox{\begin{tikzpicture}
        \begin{scope}
            \VeryThinLine[midarrow] (120:0.5) -- (0,0);
            \VeryThinLine[midarrow] (30:0.5) -- (0,0);
            \node at (120:0.75) {$s_i$};
            \node at (30:0.75) {$j$};
            \draw[fill=black] (0:0) ellipse (0.05 and 0.05);
            \node at (-0.25,-0.2) {$\zeta_i$};
            \VeryThinLine(0,0) -- (0,-0.5);
        \end{scope}
        \begin{scope}[xshift=-1.5cm]
            \VeryThinLine[midarrow]  (0,0) -- (150:0.5);
            \node at (150:0.75) {$j$};
            \VeryThinLine[midarrow] (60:0.5) -- (0,0);
            \node at (60:0.75) {$s_i$};
            \draw[fill=black] (0:0) ellipse (0.05 and 0.05);
            \node at (0.25,-0.2) {$\widehat{\zeta_i}$};
            \VeryThinLine (0,0) -- (0,-0.5);
        \end{scope}
        \VeryThinLine[midarrow] (0,-0.5) arc (0:-180:0.75 and 0.75);
        \node at (-0.75,-1.5) {$t_i$};
    \end{tikzpicture}}}. 
        \end{aligned}
    \end{equation}
    When $t_i$ is also self-dual, we rewrite the summands as 
    \begin{equation}
        \vcenter{\hbox{\begin{tikzpicture}
        \begin{scope}
            \VeryThinLine[midarrow] (120:0.5) -- (0,0);
            \VeryThinLine[midarrow] (30:0.5) -- (0,0);
            \node at (120:0.75) {$s_i$};
            \node at (30:0.75) {$j$};
            \draw[fill=black] (0:0) ellipse (0.05 and 0.05);
            \node at (-0.25,-0.2) {$\zeta_i$};
            \VeryThinLine(0,0) -- (0,-0.5);
        \end{scope}
        \begin{scope}[xshift=-1.5cm]
            \VeryThinLine[midarrow]  (0,0) -- (150:0.5);
            \node at (150:0.75) {$j$};
            \VeryThinLine[midarrow] (60:0.5) -- (0,0);
            \node at (60:0.75) {$s_i$};
            \draw[fill=black] (0:0) ellipse (0.05 and 0.05);
            \node at (0.25,-0.1) {$\widehat{\zeta_i}$};
            \VeryThinLine (0,0) -- (0,-0.5);
        \end{scope}
        \VeryThinLine[midarrow] (0,-0.5) arc (0:-180:0.75 and 0.75);
        \node at (-0.75,-1.5) {$t_i$};
        \VeryThinLine[midarrow] (-1.5,-0.3) -- (-1.5,-0.6);
        \draw[draw=black, fill=green] (-1.5,-0.65) circle (2pt);
    \end{tikzpicture}}}
    \end{equation}
    Resolving back into the tensor-product Hilbert space gives the term $\sigma(\vec{t})_i\sigma(\vec{s})_iB_{v_{i+1},p}(-, \zeta_i)\otimes B_{v_i,p}(\zeta_i,-)$ precomposed with $\mathrm{eval}_D$. 

    Now suppose that $s_i$ is self-dual and $f_i$ is oriented from $v_{i+1}$ to $v_{i}$.
    Then 
    \begin{equation}
        \begin{aligned}
            \vcenter{\hbox{\begin{tikzpicture}
    \VeryThinLine[midarrow] (-0.85,0) -- (-0.85,-1) arc (180:360:0.25 and 0.25) -- (-0.35,0);
    \VeryThinLine[midarrow] (0.35,0) -- (0.35,-1) arc (0:-180: 0.95 and 0.95) -- +(0,1);
    \node at (-0.35,0.25) {$s_i$};
    \node at (-0.85,0.25) {$s_i$};
    \node at (0.35,0.25) {$j$};
    \VeryThinLine[midarrow] (-0.35,-0.15) -- (-0.35,-0.45);
    \draw[draw=black, fill=yellow] (-0.35,-0.5) circle (2pt);
    \end{tikzpicture}}} &= \sum_{t_i\in \Irr(\mathcal{C})}\sum_{\zeta_i\in \mathrm{ONB}(s_ij,t_i)}\sqrt{\frac{d_{t_i}}{d_{s_i}d_j}}\vcenter{\hbox{\begin{tikzpicture}
    \VeryThinLine (0,0) -- (150:0.5);
    \VeryThinLine (0,0) -- (30:0.5);
    \node at (150:0.75) {$s_i$};
    \node at (30:0.75) {$j$};
    \draw[fill=black] (0:0) ellipse (0.05 and 0.05);
    \node at (-0.25,-0.2) {$\zeta_i$};
    \VeryThinLine[midarrow] (0,0) -- (0,-1);
    \node at (0.3,-0.5) {$t_i$};
    \begin{scope}[yshift=-1cm]
    \VeryThinLine (0,0) -- (-150:0.5);
    \VeryThinLine (0,0) -- (-30:0.5);
    \draw[fill=black] (0:0) ellipse (0.05 and 0.05);
    \node at (-0.25,0.2) {$\zeta^*_i$};
    \end{scope}
    \begin{scope}[xshift= -0.683cm ,yshift = -1.25cm]
        \VeryThinLine[midarrow] (-0.25,0) -- (-0.25,-1) arc (180:360:0.25 and 0.25) -- (0.25,0);
    \VeryThinLine[midarrow] (1.12,0) -- (1.12,-1) arc (0:-180: 1.12 and 0.75) -- +(0,1);
    \node at (-0.25,0.25) {$s_i$};
    \node at (-1.12,0.25) {$j$};
    \VeryThinLine[midarrow] (0.25,-0.15) -- (0.25,-0.45);
    \draw[draw=black, fill=yellow] (0.25,-0.5) circle (2pt);
    \end{scope}
    \end{tikzpicture}}}.
        \end{aligned}
    \end{equation}
    When $t_i$ is self-dual, inserting a pair of $\phi_{t_i}$ and its inverse gives
    \begin{equation}
        \vcenter{\hbox{\begin{tikzpicture}
    \VeryThinLine (0,0) -- (150:0.5);
    \VeryThinLine (0,0) -- (30:0.5);
    \node at (150:0.75) {$s_i$};
    \node at (30:0.75) {$j$};
    \draw[fill=black] (0:0) ellipse (0.05 and 0.05);
    \node at (-0.25,-0.2) {$\zeta_i$};
    \VeryThinLine[midarrow] (0,0) -- (0,-1);
    \node at (0.3,-0.5) {$t_i$};
    \begin{scope}[yshift=-1cm]
    \VeryThinLine (0,0) -- (-150:0.5);
    \VeryThinLine (0,0) -- (-30:0.5);
    \draw[fill=black] (0:0) ellipse (0.05 and 0.05);
    \node at (-0.25,0.2) {$\zeta^*_i$};
    \end{scope}
    \begin{scope}[xshift= -0.683cm ,yshift = -1.25cm]
        \VeryThinLine[midarrow] (-0.25,0) -- (-0.25,-1) arc (180:360:0.25 and 0.25) -- (0.25,0);
    \VeryThinLine[midarrow] (1.12,0) -- (1.12,-1) arc (0:-180: 1.12 and 0.75) -- +(0,1);
    \node at (-0.25,0.25) {$s_i$};
    \node at (-1.12,0.25) {$j$};
    \VeryThinLine[midarrow] (0.25,-0.15) -- (0.25,-0.45);
    \draw[draw=black, fill=yellow] (0.25,-0.5) circle (2pt);
    \end{scope}
    \end{tikzpicture}}} = \vcenter{\hbox{\begin{tikzpicture}
    \VeryThinLine (0,0) -- (150:0.5);
    \VeryThinLine (0,0) -- (30:0.5);
    \node at (150:0.75) {$s_i$};
    \node at (30:0.75) {$j$};
    \draw[fill=black] (0:0) ellipse (0.05 and 0.05);
    \node at (-0.25,-0.15) {$\zeta_i$};
    \VeryThinLine[midarrow] (0,0) -- (0,-0.65);
    \VeryThinLine[midarrow] (0,-1.3) -- (0,-0.65);
    \VeryThinLine[midarrow] (0,-1.3) -- (0,-1.95);
    \node at (0.3,-0.95) {$t_i$};
    \draw[draw=black, fill=green] (0,-0.65) circle (2pt);
    \draw[draw=black, fill=green] (0,-1.3) circle (2pt);
    \begin{scope}[yshift=-2cm]
    \VeryThinLine (0,0) -- (-150:0.5);
    \VeryThinLine (0,0) -- (-30:0.5);
    \draw[fill=black] (0:0) ellipse (0.05 and 0.05);
    \node at (-0.25,0.15) {$\zeta^*_i$};
    \end{scope}
    \begin{scope}[xshift= -0.683cm ,yshift = -2.25cm]
        \VeryThinLine[midarrow] (-0.25,0) -- (-0.25,-1) arc (180:360:0.25 and 0.25) -- (0.25,0);
    \VeryThinLine[midarrow] (1.12,0) -- (1.12,-1) arc (0:-180: 1.12 and 0.75) -- +(0,1);
    \node at (-0.25,0.25) {$s_i$};
    \node at (-1.12,0.25) {$j$};
    \VeryThinLine[midarrow] (0.25,-0.15) -- (0.25,-0.45);
    \VeryThinLine (0.25,-0.85) -- (0.25,-0.55);
    \draw[draw=black, fill=yellow] (0.25,-0.5) circle (2pt);
    \end{scope}
    \end{tikzpicture}}} = \nu_{t_i}\nu_{s_i}\vcenter{\hbox{\begin{tikzpicture}
        \begin{scope}
            \VeryThinLine[midarrow] (120:0.5) -- (0,0);
            \VeryThinLine[midarrow] (30:0.5) -- (0,0);
            \node at (120:0.75) {$s_i$};
            \node at (30:0.75) {$j$};
            \draw[fill=black] (0:0) ellipse (0.05 and 0.05);
            \node at (-0.25,-0.2) {$\zeta_i$};
            \VeryThinLine(0,0) -- (0,-0.5);
        \end{scope}
        \begin{scope}[xshift=-1.5cm]
            \VeryThinLine[midarrow]  (0,0) -- (150:0.5);
            \node at (150:0.75) {$j$};
            \VeryThinLine[midarrow] (60:0.5) -- (0,0);
            \node at (60:0.75) {$s_i$};
            \draw[fill=black] (0:0) ellipse (0.05 and 0.05);
            \node at (0.25,-0.2) {$\widehat{\zeta_i}$};
            \VeryThinLine (0,0) -- (0,-0.5);
        \end{scope}
        \VeryThinLine[midarrow] (-1.5,-0.5) arc (180:360:0.75 and 0.75) -- (0,-0.7);
        \node at (-0.75,-1.65) {$t_i$};
        \draw[draw=black, fill=green] (0,-0.7) circle (2pt);
    \end{tikzpicture}}}
    \end{equation}
    where in the last equality we used the property of Frobenius-Schur indicator \cite{FuchsGanchev1999FSindicator}:
    \begin{equation}
        \vcenter{\hbox{\begin{tikzpicture}[scale=0.75]
    \VeryThinLine[midarrow] (0,0.75) -- (0,0.05);
  \VeryThinLine (0,0.75) arc (180:0:0.5 and 0.25);
  \VeryThinLine[midarrow] (1,-0.75) -- (1,0.75);
  \node at (1,-1) {$b$};
  \VeryThinLine[midarrow] (0,-0.75) -- (0,-0.05);
  \VeryThinLine (0,-0.75) arc (0:-180:0.5 and 0.25);
  \VeryThinLine[midarrow]  (-1,0.75) --(-1,-0.75);
  \node at (-1,1) {$b$};
  \draw[draw=black, fill=yellow] (0,0) circle (2pt);
  \end{tikzpicture}}} = \nu_b \vcenter{\hbox{\begin{tikzpicture}
    \VeryThinLine[midarrow] (0,-0.75) -- (0,-0.05);
    \VeryThinLine[midarrow] (0,0.75) -- (0,0.05);
    \draw[draw=black, fill=yellow] (0,0) circle (2pt);
  \end{tikzpicture}}}. 
    \end{equation}
    Again, resolving back into the tensor-product Hilbert space gives $\sigma(\vec{t})_i\sigma(\vec{s})_iB_{v_{i+1},p}(-, \zeta_i)\otimes B_{v_i,p}(\zeta_i,-)$ precomposed with $\mathrm{eval}_D$.
    Thus, Equation~\eqref{eqn:: evaluation map and plaquette operator} holds for any orientation of the edges $f_i$. 
\end{proof}

\begin{remark}
    A consequence of Lemma ~\ref{lemma:: eval intertwines partial Bp} is that the plaquette operators mutually commute with each other. 
    By Appendix~B of \cite{LinLevinBurnell2021}, under the present conventions, including the Frobenius--Schur factors, each $B_p$ is a self-adjoint projection and commutes with every compatible edge projection $Q_e$.
\end{remark}

\begin{lemma}\label{lemma:: ONB of 4-point hom space}
    For simple objects $s_1,s_2,t_1,t_2\in \Irr(\mathcal{C})$ and any $g\in \Hom(s_1s_2, t_1 t_2)$, we have 
    \begin{equation}
        g = \sum_{j\in \Irr(\mathcal{C})}\sum_{\xi\in \mathrm{ONB}(s_1 j,t_1)} \sum_{\eta\in \mathrm{ONB}(s^{-}_2 j,t^{-}_2)} \frac{1}{\sqrt{d_{s_1}d_{s_2}d_{t_1}d_{t_2}}}\tr_{\mathcal{C}}\left( f(\xi,\eta)^* g \right) f(\xi,\eta),
    \end{equation}
    where $f(\xi,\eta)$ is defined as in Equation~\eqref{equation:: f(xi,eta)}.
\end{lemma}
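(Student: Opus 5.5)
The identity says that the family $\{f(\xi,\eta)\}$, with $j\in\Irr(\mathcal{C})$, $\xi\in\mathrm{ONB}(s_1j,t_1)$ and $\eta\in\mathrm{ONB}(s_2^{-}j,t_2^{-})$, is an orthogonal basis of $\Hom(s_1s_2,t_1t_2)$, and that each element has squared norm $\sqrt{d_{s_1}d_{s_2}d_{t_1}d_{t_2}}$ for the unnormalized trace pairing $\tr_{\mathcal{C}}(g^*f)$. (Here I read $f(\xi,\eta)$ as a morphism in $\Hom(s_1s_2,t_1t_2)$ after the identifications $\iota$, which is how the indices in the statement are arranged.) So I will prove two things: orthogonality with this norm, and spanning. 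The expansion of $g$ then follows by expanding in an orthogonal basis.

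For spanning I will use the partition of unity \eqref{eqn:: partition of unity}, exactly as in the proof of Lemma~\ref{lemma:: eval intertwines partial Bp}. First rotate $g$, using rigidity, to a morphism $\tilde g\in\Hom(s_1\overline{t_2},t_1\overline{s_2})$, that is, bend the $s_2$ and $t_2$ legs to the other side. Next, in the vertical channel $s_1\overline{t_2}$, resolve the identity as
\[
\mathrm{Id}_{s_1\overline{t_2}}=\sum_{j}\sum_{\alpha}\alpha^*\alpha ,
\]
where $\alpha$ runs over an orthonormal family (for the composition pairing) in $\Hom(s_1\overline{t_2},\bar j)$, or equivalently in $\Hom(s_1 j, t_1)$ after bending. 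This resolution is semisimplicity. It writes $g$ as a sum of diagrams in which a single $j$-strand runs horizontally between a trivalent vertex on the $s_1\to t_1$ strand and one on the $s_2\to t_2$ strand. These are exactly the diagrams $f(\xi,\eta)$ drawn in \eqref{equation:: f(xi,eta)}, once we check that the vertex on the right strand, after bending back, is of the form $\widehat\eta$ with $\eta\in\Hom(s_2^{-}j,t_2^{-})$. Because $\eta\mapsto\widehat\eta$ is a linear bijection, the $f(\xi,\eta)$ span.

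For orthogonality I will compute $\tr_{\mathcal{C}}(f(\xi',\eta')^*f(\xi,\eta))$ graphically. The closed diagram contains two $j$-strands. Cut it along the $s_1$/$t_1$ side and apply Schur's lemma in $\Hom(j,j')$: this forces $j=j'$ and reduces the trace to the product of a $\theta$-graph formed by $\xi'^*\xi$ and one formed by $\widehat{\eta'}^*\widehat\eta$, divided by $d_j$. Each $\theta$-graph equals $\tr(\xi'^*\xi)$, respectively $\tr(\eta'^*\eta)$, because $f\mapsto\overline f$ is an antilinear isometry for the trace and $\iota$ is unitary. Under the normalized inner product \eqref{eqn:: skein module inner product} these equal $\delta\sqrt{d_{s_1}d_jd_{t_1}}$ and $\delta\sqrt{d_{s_2}d_jd_{t_2}}$. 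Multiplying the two and dividing by $d_j$ gives $\delta_{\xi\xi'}\delta_{\eta\eta'}\sqrt{d_{s_1}d_{s_2}d_{t_1}d_{t_2}}$, which is exactly the normalization in the statement.

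The main obstacle is the bookkeeping in the self-dual cases. There the conjugation $\widehat{\cdot}$ involves $\phi_b$, and Frobenius--Schur signs $\nu$ could appear, possibly as antilinear twists. I will handle this the way the proof of Lemma~\ref{lemma:: eval intertwines partial Bp} does, by inserting $\phi_b\phi_b^{-1}$ pairs. In the trace $\tr(f'^*f)$ each $\phi$ meets its adjoint, so every indicator enters squared, and $\nu^2=1$. Once orthogonality and the norms are established, counting dimensions via $\sum_j N_{s_1j}^{t_1}N_{s_2^-j}^{t_2^-}=\dim\Hom(s_1s_2,t_1t_2)$ gives an alternative proof of spanning.
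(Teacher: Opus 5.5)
Your overall strategy matches the paper's: prove spanning, then prove orthogonality via Schur's lemma, which the paper packages as the bubble relation \eqref{eqn:: local relation I}. Your orthogonality computation is correct, including the normalization $\frac{1}{d_j}\sqrt{d_{s_1}d_jd_{t_1}}\sqrt{d_{s_2}d_jd_{t_2}}=\sqrt{d_{s_1}d_{s_2}d_{t_1}d_{t_2}}$. Two small points: the $\eta$-side theta graph is $\overline{\tr(\eta'^*\eta)}$ rather than $\tr(\eta'^*\eta)$, which is harmless. Also, no Frobenius--Schur signs arise here at all, since $f(\xi,\eta)$ contains only the unitaries $\iota$, and these cancel in $\widehat{\eta'}^*\widehat{\eta}$.

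Your first spanning argument, however, fails as written. Bending $s_2$ down and $t_2$ up on the same side to land in $\Hom(s_1\overline{t_2},t_1\overline{s_2})$ changes the cyclic order of the legs from $(s_1,s_2,t_2,t_1)$ to $(s_1,t_2,s_2,t_1)$. That move needs a braiding, and in a general fusion category the two spaces do not even have the same dimension. For example, in $\mathrm{Vec}_{S_3}$ take $s_2=(12)$, $t_2=(13)$, $t_1=e$, $s_1=(13)(12)$; then $\Hom(s_1s_2,t_1t_2)\neq 0$ but $\Hom(s_1\overline{t_2},t_1\overline{s_2})=0$. For the same reason, $\Hom(s_1\overline{t_2},\overline{j})$ is not a bent form of $\Hom(s_1j,t_1)$.

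The channel that produces the horizontal $j$-strand of $f(\xi,\eta)$ is the left--right one. A quarter-turn rotation gives $\Hom(s_1s_2,t_1t_2)\cong\Hom(\overline{t_1}s_1,t_2\overline{s_2})$. Resolve $\mathrm{Id}_{\overline{t_1}s_1}$ through simples $\overline{j}$, using $\Hom(\overline{t_1}s_1,\overline{j})\cong\Hom(s_1j,t_1)$ and $\Hom(\overline{j},t_2\overline{s_2})\cong\Hom(\overline{j}s_2,t_2)$; this yields exactly the diagrams $f(\xi,\eta)$. The same rotation also proves the identity $\sum_jN_{s_1j}^{t_1}N_{s_2^{-}j}^{t_2^{-}}=\dim\Hom(s_1s_2,t_1t_2)$. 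So your fallback argument, orthonormality plus a dimension count, is a complete proof of spanning, a step the paper itself only asserts.
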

\begin{proof}
    The morphisms $f(\xi,\eta)$ span $\Hom(s_1s_2,t_1t_2)$ as $\xi$ and $\eta$ range over $\Hom(s_1 j,t_1)$ and $\Hom(s^{-}_2j,t^{-}_2)$ and $j$ ranges over $\Irr(\mathcal{C})$.
    Orthogonality of the morphisms $f(\xi,\eta)^*$ can be verified using the following local relation: 
    \begin{equation}\label{eqn:: local relation I}
        \vcenter{\hbox{\begin{tikzpicture}
        \coordinate (u2) at (60:1.3);
        \coordinate (u3) at (30:1.5);
        \coordinate (d2) at (-60:1.3);
        \coordinate (d3) at (-30:1.5);
        \draw[fill=black] (u3) ellipse (0.05 and 0.05);
        \draw[fill=black] (d3) ellipse (0.05 and 0.05);
        \VeryThinLine (u2) -- (u3) -- (d3) -- (d2) -- cycle;
        \VeryThinLine[midarrow] (d2) -- (u2);
        \VeryThinLine[midarrow] (u3) -- (d3);
        \VeryThinLine[midarrow] (30:2.25) -- (u3);
        \VeryThinLine[midarrow] (d3) -- (-30:2.25);
        \node at (0.25,0) {$a$};
        \node at (30:2.5) {$b$};
        \node at (-30:2.5) {$b$};
        \node at (1.5,0) {$c$};
        \node[xshift = -6pt, yshift = -2pt] at (u3) {$\xi$};
        \node[xshift = -6pt, yshift = 2pt] at (d3) {$\eta^*$};
    \end{tikzpicture}}} = \sqrt{\frac{d_ad_c}{d_b}}\braket{\eta,\xi} \mathrm{Id}_b,\quad a,b,c\in \Irr(\mathcal{C}). 
    \end{equation}
    Therefore, the morphisms $f(\xi,\eta)$ form a complete orthonormal basis of $\Hom(s_1s_2,t_1t_2)$. 
\end{proof}

\begin{proposition}\label{prop:: eval is an isometry}
    Let $D$ be a simply connected region in $\mathbb{S}^2$ such that $\Gamma\cap D$ is connected, $\partial D$ is transverse to $\Gamma$ and avoids its vertices, and $\partial D$ crosses each edge of $\Gamma$ at most once and crosses exactly $m$ edges.
    Then $\mu^{-\vert F_D\vert}\mathrm{eval}_D$ restricts to an isometry from the ground state subspace of $H_D$ onto $\Hom(\mathbb{1},A^m)$, where $\vert F_D\vert$ is the number of plaquettes in the interior of $D$.
    More precisely,
    \begin{equation}
        \prod_{\substack{e\in E\\ \operatorname{supp}(Q_e)\subset D}} Q_e
        \prod_{p\subset \operatorname{int}(D)} B_p
        = \mu^{-2\vert F_D\vert}\mathrm{eval}_D^{\dagger}\mathrm{eval}_D.
    \end{equation}
\end{proposition}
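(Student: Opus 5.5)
We will prove the operator identity $\prod_e Q_e\prod_p B_p=\mu^{-2|F_D|}\mathrm{eval}_D^\dagger\mathrm{eval}_D$. The isometry statement then follows formally. Write $P_D$ for the left-hand side, a projection by the Remark after Lemma~\ref{lemma:: eval intertwines partial Bp}. If the identity holds, then $V=\mu^{-|F_D|}\mathrm{eval}_D$ satisfies $V^\dagger V=P_D$. Hence $V$ is a partial isometry with initial space the ground-state space of $H_D$. It remains to show surjectivity, which we do separately.

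The proof is an induction on $|F_D|$, organized around how $\mathrm{eval}_D$ behaves when a region grows.

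\textbf{Base case ($|F_D|=0$).} Then $\Gamma\cap D$ is a tree, so $P_D=\prod_e Q_e$. On a consistently labelled configuration, $\mathrm{eval}_D$ glues the vertex morphisms along the interior edges. Gluing along an edge labelled $a$ uses the inner product \eqref{eqn:: skein module inner product}: the two endpoint spaces each carry a normalization $1/\sqrt{d_a}$, the contraction produces $\tr$ over $a$, and these factors cancel. Together with Lemma~\ref{lemma:: ONB of 4-point hom space} in its three-point form (local relation \eqref{eqn:: local relation I}), this shows that gluing along a single edge is an isometric isomorphism:
\[
\bigoplus_a \Hom(\mathbb 1, X a)\otimes\Hom(\mathbb 1,\bar a Y)\cong \Hom(\mathbb 1, XY).
\]
The Frobenius--Schur dots $\phi_a$ are unitary, so they do not affect this. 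Iterating over the edges of the tree gives $\mathrm{eval}_D^\dagger\mathrm{eval}_D=\prod Q_e$, and $\mathrm{eval}_D$ is onto $\Hom(\mathbb 1,A^m)$.

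\textbf{Inductive step.} Let $D'\subset D$ be obtained by cutting one plaquette $p$ off the boundary. That is, $D'$ is $D$ with the edges of a path $\gamma\subset\partial p$ removed, where $\gamma$ is chosen so that $\Gamma\cap D'$ is still connected and $\gamma$ meets $\partial D'$ in two consecutive edges $e_k,e_{k+1}$. Then $\mathrm{eval}_D=C\circ\mathrm{eval}_{D'}$, where $C$ contracts the two boundary legs $e_k,e_{k+1}$ of $D'$ that are joined in $D$ along $\partial p$, i.e.\ closes up the cap. We must show
\[
\mathrm{eval}_{D'}^\dagger C^\dagger C\,\mathrm{eval}_{D'}=\mu^2\,\mathrm{eval}_{D'}^\dagger\,(\text{projection onto }B_p=1)\,\mathrm{eval}_{D'}.
\]
To do this, expand $C^\dagger C$ on $\Hom(\mathbb 1,A^{m'})$ using Lemma~\ref{lemma:: ONB of 4-point hom space}. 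This writes $C^\dagger C$ as a sum over $j$ of $d_j$-weighted operators $\mathrm{Id}\otimes f(\xi,\eta)\otimes\mathrm{Id}$ acting on legs $k,k+1$, i.e.\ the standard identity that a closed contraction equals $\mu^{-2}\sum_j d_j$ times a $j$-loop, up to the factor $\mu^2$. Lemma~\ref{lemma:: eval intertwines partial Bp} then converts each such operator into $\mathrm{eval}_{D'}$ composed with the vertex operators building $B^j_p$. Summing over $j$ with weights $d_j/\mu^2$ gives $\mu^2\,\mathrm{eval}_{D'}^\dagger\mathrm{eval}_{D'}B_p$. By induction, $\mathrm{eval}_{D'}^\dagger\mathrm{eval}_{D'}=\mu^{2|F_{D'}|}P_{D'}$. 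Finally, $B_p$ commutes with $P_{D'}$ and the edge projections of $\gamma$, which yields the claim. Surjectivity of $\mathrm{eval}_D$ is inherited because $C$ is onto.

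\textbf{Main obstacle.} The delicate step is matching normalizations in the inductive step. One must check that the coefficients $(d_t/d_sd_j)^{1/4}$ in $B^j_p$, the $\sqrt{d_t/d_sd_j}$ in Lemma~\ref{lemma:: eval intertwines partial Bp}, the $1/\sqrt{d}$ factors in the inner product, and the signs $\sigma$ combine to give exactly $\mu^2$ per face. This requires in particular that the two boundary edges of $p$ lying in $\partial D'$ carry the correct half-weights. I would first verify this on a single plaquette bounded by a 2-cycle, then argue that each additional vertex of $\partial p$ contributes a factor that is absorbed by the isometric gluing from the base case. A secondary issue is ensuring that such a cut $D'$ always exists with $\Gamma\cap D'$ connected. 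For this, choose $p$ adjacent to $\partial D$ with $\partial p\cap D$ a single path, which is possible because $\partial D$ crosses each edge at most once.
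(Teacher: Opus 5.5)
Your inductive step needs two things at once: the factorization $\mathrm{eval}_D=C\circ\mathrm{eval}_{D'}$ with $C$ a state-independent cap on two adjacent legs, and the induction hypothesis for $D'$. Under the hypotheses of the Proposition these two requirements are incompatible.

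For $C$ to be a pure cap, no vertex can be removed. So $D'$ must be $D$ with a single edge $e\subset\partial p$ severed, $e_k,e_{k+1}$ its two halves, and $\mathcal H_{D'}=\mathcal H_D$. But then $\partial D'$ crosses $e$ twice, so $D'$ is not covered by the statement. In fact the identity fails for such a $D'$ as written: both endpoints of $e$ lie in $D'$, so $Q_e$ appears on the left-hand side, while $\mathrm{eval}_{D'}$ cannot detect whether the labels on the two halves of $e$ match.

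If you instead remove the vertices of the arc, as you must for $\partial D'$ to cross each edge at most once, those vertices carry their own tensor factor $\mathcal H_R$ and legs ending on $\partial D$. Then $\mathrm{eval}_D$ is no longer a fixed contraction of $\mathrm{eval}_{D'}$.

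Your ``secondary issue'' paragraph does not resolve this. For $p\subset\operatorname{int}(D)$ the set $\partial p\cap D$ is the whole cycle $\partial p$, and the single-crossing hypothesis is not what produces a suitable cut. By contrast, the normalization bookkeeping you call the main obstacle is routine once the geometry is fixed.

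Either of two repairs closes the argument.

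(i) Enlarge the inductive class to disks whose boundary may also sever an edge with both endpoints inside, omitting $Q_e$ for severed edges. The base case is still your gluing isometry for trees. An admissible edge exists because some edge separates an interior plaquette from a face meeting $\partial D$, and such an edge is not a bridge. For the skein inner product, $C^\dagger C$ acts on legs $k,k+1$ by $\sum_{a,a'}d_{a'}\,\mathrm{cup}_{a'}\circ\mathrm{cap}_a$. Lemma~\ref{lemma:: ONB of 4-point hom space} gives $\mathrm{cup}_{a'}\circ\mathrm{cap}_a=\sum_j\sum_{\xi\in\mathrm{ONB}(aj,a')}\sqrt{d_j/(d_ad_{a'})}\,f(\xi,\xi)$, up to Frobenius--Schur signs. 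Hence $C^\dagger C=\sum_j d_j\sum_{a,a',\xi}\sqrt{d_{a'}/(d_ad_j)}\,f(\xi,\xi)$, which is exactly the weight of $e$ in $B^j_p$. Lemma~\ref{lemma:: eval intertwines partial Bp}, applied to the rest of $\partial p$, then gives $C^\dagger C\,\mathrm{eval}_{D'}=\mu^2\,\mathrm{eval}_{D'}B_p$ on $\mathcal H_D$. Finally $B_p=B_pQ_e$ finishes the step.

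(ii) Alternatively, keep the stated hypotheses and follow the paper. Push an arc of $\partial D$ across $p$, so that $D=D'\cup R$ with $\Gamma\cap R$ a tree having two legs $c_1,c_2$ on $\partial D'$. Then $\mathrm{eval}_D$ is the gluing of $\mathrm{eval}_R$ and $\mathrm{eval}_{D'}$ along $c_1,c_2$. Apply the intertwining lemma to both complementary arcs of $\partial p$ and the tree case to $R$, and resum with Lemma~\ref{lemma:: ONB of 4-point hom space}.

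Route (i) is a spanning-tree style induction with a one-line gluing identity. Route (ii) never leaves the class of regions in the statement, at the cost of two-sided bookkeeping.
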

\begin{proof}
    We prove the statement by induction on $\vert F_D\vert$.
    In the base case, $\vert F_D\vert = 0$, so $\Gamma\cap D$ is a tree.
    The statement then follows from repeated applications of Equation~\eqref{eqn:: local relation I}.
    Suppose the statement holds for every simply connected region with $n-1$ plaquettes in its interior, and take $\vert F_D\vert = n$.
    By transversality, we may push a connected arc of $\partial D$ across a boundary plaquette to obtain a simply connected subregion $D'\subset D$ such that $\Gamma\cap D'$ is connected, $\partial D'$ satisfies the same transversality conditions as $\partial D$, and $D'$ has $n-1$ plaquettes in its interior.
    Let $p$ be the only plaquette in $D$ not contained in $D'$.

    Set $R=D\setminus D'$ and $T=\Gamma\cap R$.
    By construction, $R$ is simply connected, and $T$ is a tree with exactly two leaves on $\partial D'$.
    Label its remaining $N$ leaves by $k_1,\ldots,k_N$ and order the boundary of $R$ counterclockwise so that these $N$ leaves come first and the two leaves on $\partial D'$ come last.
    Denote these two edges by $c_1,c_2$, retain their orientations inherited from $\Gamma$, and order the boundary of $D'$ so that $c_1,c_2$ come first.
    Consider $\ket{\psi_1},\ket{\varphi_1}\in \mathcal{H}_{R}$ and $\ket{\psi_2},\ket{\varphi_2}\in \mathcal{H}_{D'}$ such that
    \begin{equation}
        \begin{aligned}
            \mathrm{eval}_{R}\ket{\psi_1}&\in \Hom(\mathbb{1},k_1\cdots k_Ns^{-}_2s^{-}_1),&
            \mathrm{eval}_{R}\ket{\varphi_1}&\in \Hom(\mathbb{1},k_1\cdots k_Nt^{-}_2t^{-}_1),\\
            \mathrm{eval}_{D'}\ket{\psi_2}&\in \Hom(\mathbb{1},s_1s_2l_1\cdots l_{r-2}),&
            \mathrm{eval}_{D'}\ket{\varphi_2}&\in \Hom(\mathbb{1},t_1t_2l_1\cdots l_{r-2}),
        \end{aligned}
    \end{equation}
    where $s_1,s_2,t_1,t_2,k_1,\ldots,k_N,l_1,\ldots,l_{r-2}\in\Irr(\mathcal{C})$ and $r=\vert\Gamma\cap\partial D'\vert$, so $m=N+r-2$.
    If either vector has mismatched edge labels, then both sides of the desired identity vanish, so we henceforth assume that both vectors lie in the common image of all $Q_e$ supported in $D$.
    Let $\sigma_c(a)$ denote the factor assigned to the edge $c$ with label $a$ in the definition of $B_p$, and set $\varepsilon_{c_1,c_2}(s_1,s_2;t_1,t_2)=\prod_{i=1}^{2}\sigma_{c_i}(s_i)\sigma_{c_i}(t_i)$. 
    Define 
    \begin{equation}
        \overline{f(\xi,\eta)}^{\,c_1,c_2}\coloneqq
        \varepsilon_{c_1,c_2}(s_1,s_2;t_1,t_2)(\iota_{t_2}^{-1}\otimes\iota_{t_1}^{-1})\circ\overline{f(\xi,\eta)}\circ(\iota_{s_2}\otimes\iota_{s_1})
        \in\Hom(s^{-}_2s^{-}_1,t^{-}_2t^{-}_1),
    \end{equation}
    which is the morphism obtained by applying Lemma ~\ref{lemma:: eval intertwines partial Bp} to the two complementary paths in $\partial p$. 
    Applying the induction hypothesis to $D'$, the base case to $R$, for which $\Gamma\cap R=T$ is a tree, and Lemma ~\ref{lemma:: eval intertwines partial Bp} to the two complementary paths in $\partial p$, we obtain
    \begin{equation}
        \begin{aligned}
            &\braket{\varphi_1\otimes\varphi_2\vert
            \prod_{\substack{e\in E\\ \operatorname{supp}(Q_e)\subset D}}Q_e
            \prod_{q\subset\operatorname{int}(D)}B_q
            \vert\psi_1\otimes\psi_2}
            =\braket{\varphi_1\otimes\varphi_2\vert B_p\prod_{q\subset D'}B_q\vert\psi_1\otimes\psi_2}\\
            &=\frac{1}{\mu^{2n}}\sum_{j\in\Irr(\mathcal{C})}\sum_{\xi\in\mathrm{ONB}(s_1j,t_1)}\sum_{\eta\in\mathrm{ONB}(s^{-}_2j,t^{-}_2)}\sqrt{\frac{d_{t_1}d_{t_2}}{d_{s_1}d_{s_2}}}\\
            &\quad\braket{\mathrm{eval}_{R}(\varphi_1),(\mathrm{Id}_{A^N}\otimes\overline{f(\xi,\eta)}^{\,c_1,c_2})\mathrm{eval}_{R}(\psi_1)}
            \braket{\mathrm{eval}_{D'}(\varphi_2),(f(\xi,\eta)\otimes\mathrm{Id}_{A^{r-2}})\mathrm{eval}_{D'}(\psi_2)}.
        \end{aligned}
    \end{equation}
    For each $\xi\in\Hom(s_1j,t_1)$ and $\eta\in\Hom(s_2^-j,t_2^-)$, the inner product in Equation~\eqref{eqn:: skein module inner product} gives
    \begin{equation}
        \begin{aligned}
            &\sqrt{\frac{d_{t_1}d_{t_2}}{d_{s_1}d_{s_2}}}
            \braket{\mathrm{eval}_{R}(\varphi_1),(\mathrm{Id}_{A^N}\otimes\overline{f(\xi,\eta)}^{\,c_1,c_2})\mathrm{eval}_{R}(\psi_1)}
            \braket{\mathrm{eval}_{D'}(\varphi_2),(f(\xi,\eta)\otimes\mathrm{Id}_{A^{r-2}})\mathrm{eval}_{D'}(\psi_2)}\\
            &=\frac{1}{\sqrt{d_{s_1}d_{s_2}d_{t_1}d_{t_2}}}
            \frac{1}{\sqrt{\prod_{a=1}^{N}d_{k_a}\prod_{b=1}^{r-2}d_{l_b}}}
            \tr_{\mathcal{C}}\left(\mathrm{eval}_{R}(\psi_1)\mathrm{eval}_{R}(\varphi_1)^*(\mathrm{Id}_{A^N}\otimes\overline{f(\xi,\eta)}^{\,c_1,c_2})\right)\\
            &\quad\times\tr_{\mathcal{C}}\left(\mathrm{eval}_{D'}(\varphi_2)^*(f(\xi,\eta)\otimes\mathrm{Id}_{A^{r-2}})\mathrm{eval}_{D'}(\psi_2)\right), 
        \end{aligned}
    \end{equation}
    where we used the cyclicity of the trace. 
    The factor $\varepsilon_{c_1,c_2}(s_1,s_2;t_1,t_2)$ supplies Frobenius--Schur indicators that arise when labels of $c_1$ and $c_2$ are self-dual. 
    Summing over $j,\xi,\eta$ and by Lemma ~\ref{lemma:: ONB of 4-point hom space}, we obtain
    \begin{equation}
        \begin{aligned}
            &\frac{1}{\mu^{2n}}\sum_{j\in\Irr(\mathcal{C})}\sum_{\xi\in\mathrm{ONB}(s_1j,t_1)}\sum_{\eta\in\mathrm{ONB}(s_2^-j,t_2^-)}
            \frac{1}{\sqrt{d_{s_1}d_{s_2}d_{t_1}d_{t_2}}}
            \frac{1}{\sqrt{\prod_{a=1}^{N}d_{k_a}\prod_{b=1}^{r-2}d_{l_b}}}\\
            &\quad\times\tr_{\mathcal{C}}\left(\mathrm{eval}_{R}(\varphi_1)^*(\mathrm{Id}_{A^N}\otimes\overline{f(\xi,\eta)}^{\,c_1,c_2})\mathrm{eval}_{R}(\psi_1)\right)\\
            &\quad\times\tr_{\mathcal{C}}\left(\mathrm{eval}_{D'}(\varphi_2)^*(f(\xi,\eta)\otimes\mathrm{Id}_{A^{r-2}})\mathrm{eval}_{D'}(\psi_2)\right)\\
            &=\frac{1}{\mu^{2n}}\frac{1}{\sqrt{\prod_{a=1}^{N}d_{k_a}\prod_{b=1}^{r-2}d_{l_b}}} \tr_{\mathcal{C}}\left(\mathrm{eval}_{D}(\varphi_1\otimes\varphi_2)^*\mathrm{eval}_{D}(\psi_1\otimes\psi_2)\right)\\
            &=\frac{1}{\mu^{2n}}\braket{\mathrm{eval}_{D}(\varphi_1\otimes\varphi_2),\mathrm{eval}_{D}(\psi_1\otimes\psi_2)}.
        \end{aligned}
    \end{equation}
    This proves that $\mu^{-2n}\mathrm{eval}^\dagger_D\mathrm{eval}_D$ equals the product of all $Q_e$ and $B_p$ supported in $D$.
    Since the product on the left-hand side is the ground-state projection of $H_D$, it follows that $\mu^{-n}\mathrm{eval}_D$ restricts to an isometry on the ground-state subspace.
\end{proof}

\begin{corollary}\label{corollary:: ground state on sphere as evaluation}
    Let $\Gamma$ be a finite directed connected graph embedded in $\mathbb{S}^2$. 
    Then the ground state subspace of the Hamiltonian $H_{\Gamma}$ on $\mathbb{S}^2$ is one-dimensional. 
    Moreover, there exists a ground state $\ket{\Psi_{\Gamma}}$ whose wave function is given by 
    \begin{equation}
        \frac{1}{\mu^{\vert F_D\vert}}\mathrm{eval}_{D}\ket{\otimes_{v\in V}\varphi_v} = \braket{\Psi_{\Gamma}|\otimes_{v\in V}\varphi_v}\mathrm{Id}_{\mathbb{1}},
    \end{equation}
    where $D$ is any simply connected region in $\mathbb{S}^2$ that contains all but one plaquette in its interior. 
\end{corollary}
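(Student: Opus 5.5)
The plan is to derive the corollary from Proposition~\ref{prop:: eval is an isometry}, applied to a region $D$ that contains all plaquettes but one, say $p_0$.

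\textbf{Step 1: the ground space has dimension at most one.} Since $\Gamma$ is connected and $D$ omits only the open plaquette $p_0$, we can choose $D$ as the complement of a small disk inside $p_0$. Then $D$ contains every vertex, $\Gamma\cap D=\Gamma$ is connected, and $\partial D$ crosses no edges, so $m=0$ and $\Hom(\mathbb{1},A^0)=\Hom(\mathbb{1},\mathbb{1})\cong\mathbb{C}$. All edge operators $Q_e$ are supported in $D$. All plaquettes except $p_0$ lie in $\operatorname{int}(D)$. By Proposition~\ref{prop:: eval is an isometry}, the projection
\begin{equation*}
P_D=\prod_e Q_e\prod_{p\neq p_0}B_p=\mu^{-2|F_D|}\mathrm{eval}_D^\dagger\mathrm{eval}_D
\end{equation*}
has rank at most one, because it factors through $\mathbb{C}$. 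It has rank exactly one provided $\mathrm{eval}_D$ is nonzero on $\mathcal{H}_\Gamma$. To see this, take the configuration with every edge labelled $\mathbb{1}$ and every vertex vector equal to the canonical unit morphism; its evaluation is $\mathrm{Id}_{\mathbb{1}}\neq 0$. Hence $\operatorname{rank}P_D=1$. The full ground-state projection is $B_{p_0}P_D$, since all the terms are commuting projections, so its rank is at most one.

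\textbf{Step 2: the remaining plaquette acts trivially.} We need $B_{p_0}$ to fix the range of $P_D$. Let $\Omega$ be the unit vector spanning that range. Apply Lemma~\ref{lemma:: eval intertwines partial Bp}, or more directly the mechanism in the inductive step of Proposition~\ref{prop:: eval is an isometry}, to the loop $\partial p_0$. Acting with $B^j_{p_0}$ on a state in the image of $\mathrm{eval}_D^\dagger$ corresponds, after evaluation, to inserting a closed $j$-loop in the region $p_0$. That region is outside $D$, so the loop can be contracted on the sphere, giving $B^j_{p_0}\Omega=d_j\Omega$. Consequently
\begin{equation*}
B_{p_0}\Omega=\mu^{-2}\sum_j d_j^2\,\Omega=\Omega.
\end{equation*}
Concretely, one way to do this is to enlarge $D$ slightly across one edge of $\partial p_0$ and compare the two evaluations, using Lemma~\ref{lemma:: ONB of 4-point hom space} to resolve the loop.

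This is the main obstacle: the Frobenius--Schur signs and the cyclic bookkeeping of $B_{v,p}$ around a closed loop have to combine into the single scalar $d_j$. A cleaner alternative is available. The Hamiltonian is independent of the choice of the omitted plaquette, and $P_D$ built from a different omitted plaquette $p_1$ also has rank one and already contains $B_{p_0}$. If the two rank-one ranges agree, then $B_{p_0}\Omega=\Omega$ follows. To show they agree, compare both with the region $D''$ that omits $p_0$ and $p_1$. If these plaquettes share an edge, $D''$ is a disk crossing that edge twice, so the proposition does not apply directly; one then cuts through an intermediate region and uses the isometry to identify the ranges.

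\textbf{Step 3: the formula.} Once the ground space is the span of $\Omega$, set $\ket{\Psi_\Gamma}=\Omega$, rescaled by a phase. The identity $\mu^{-|F_D|}\mathrm{eval}_D\ket{\varphi}=\braket{\Psi_\Gamma|\varphi}\mathrm{Id}_{\mathbb{1}}$ then follows from $P_D=\ket{\Psi_\Gamma}\bra{\Psi_\Gamma}$ together with the factorization $P_D=\mu^{-2|F_D|}\mathrm{eval}_D^\dagger\mathrm{eval}_D$. Specifically, $\mu^{-|F_D|}\mathrm{eval}_D$ is a rank-one partial isometry onto $\mathbb{C}$, so it equals $\bra{\Psi}$ for a unique unit vector $\Psi$. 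Independence of the choice of $D$ comes from Step 2: every choice yields the same one-dimensional ground space. The only remaining ambiguity is a phase, and that phase is fixed to be $1$ by checking that all these functionals take the same value on the all-$\mathbb{1}$ configuration.
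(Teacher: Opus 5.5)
Your proposal follows the paper's proof. Proposition~\ref{prop:: eval is an isometry}, applied to the disk omitting $p_0$, factors $\prod_e Q_e\prod_{p\neq p_0}B_p$ as $\mu^{-2|F_D|}\mathrm{eval}_D^\dagger\mathrm{eval}_D$, and Lemma~\ref{lemma:: eval intertwines partial Bp} then makes $B_{p_0}$ insert a contractible $j$-loop in $p_0$, so $\mathrm{eval}_D B_{p_0}=\mathrm{eval}_D$; this is what the loop computation literally gives, and self-adjointness of $B_{p_0}$ then yields $B_{p_0}\Omega=\Omega$. Your explicit nonvanishing check on the all-$\mathbb{1}$ configuration and your phase-fixing argument for independence of $D$ are sound refinements of points the paper leaves implicit or attributes to sphericality, whereas the ``cleaner alternative'' in Step~2 is unnecessary and would still require the same loop computation.
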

\begin{proof}
    Let $p_{0}$ be the only plaquette in $\mathbb{S}^2$ not contained in $D$, and assume that $p_0$ contains the north pole in its interior.
    Partition the vertices of $\Gamma$ into sets $V_1$ and $V_2$, where $V_1 = \{v_i\}^m_{i=1}$ consists of the vertices on the boundary of $p_0$.
    Under stereographic projection from the north pole, we have
    \begin{equation}
        \mathrm{eval}_D\ket{\otimes_{v\in V}\varphi_v} = \vcenter{\hbox{\begin{tikzpicture}
            \draw (0,0) circle (1.5);
            \VeryThinLine (-0.5,0) -- (120:1.5);
            \VeryThinLine (-0.5,0) -- (-120:1.5);
            \VeryThinLine (0.5,0) -- (60:1.5);
            \VeryThinLine (0.5,0) -- (-60:1.5);
            \node [draw, minimum width=0.5cm, minimum height=0.25cm, fill = white] at (0,0) {$\mathrm{eval}_{D'}(\varphi')$};
            \draw[fill=black] (120:1.5) ellipse (0.05 and 0.05);
            \draw[fill=black] (-120:1.5) ellipse (0.05 and 0.05);
            \draw[fill=black] (60:1.5) ellipse (0.05 and 0.05);
            \draw[fill=black] (-60:1.5) ellipse (0.05 and 0.05);
            \node at (-110:1.25) {$\varphi_{v_1}$};
            \node at (-70:1.25) {$\varphi_{v_2}$};
            \node at (70:1.25) {$\varphi_{v_3}$};
            \node at (110:1.25) {$\varphi_{v_m}$};
            \node at (0,0.85) {$\cdots$};
        \end{tikzpicture}}} 
    \end{equation}
    where $D'$ is the subregion of $D$ that contains all vertices in $V_2$, and $\ket{\varphi'} = \otimes_{v\in V_2}\ket{\varphi_v}$. 
    The intertwining relation in Lemma ~\ref{lemma:: eval intertwines partial Bp}, followed by applications of Equation~\eqref{eqn:: partition of unity}, gives
    \begin{equation}
        \mathrm{eval}_D B_{p_0}\ket{\bigotimes_{v\in V}\varphi_v} = \frac{1}{\mu^2}\sum_{j\in \Irr(\mathcal{C})}d_j\vcenter{\hbox{\begin{tikzpicture}
            \draw (0,0) circle (1.5);
            \VeryThinLine (-0.5,0) -- (120:1.5);
            \VeryThinLine (-0.5,0) -- (-120:1.5);
            \VeryThinLine (0.5,0) -- (60:1.5);
            \VeryThinLine (0.5,0) -- (-60:1.5);
            \node [draw, minimum width=0.5cm, minimum height=0.25cm, fill = white] at (0,0) {$\mathrm{eval}_{D'}(\varphi')$};
            \draw[fill=black] (120:1.5) ellipse (0.05 and 0.05);
            \draw[fill=black] (-120:1.5) ellipse (0.05 and 0.05);
            \draw[fill=black] (60:1.5) ellipse (0.05 and 0.05);
            \draw[fill=black] (-60:1.5) ellipse (0.05 and 0.05);
            \node at (-110:1.25) {$\varphi_{v_1}$};
            \node at (-70:1.25) {$\varphi_{v_2}$};
            \node at (70:1.25) {$\varphi_{v_3}$};
            \node at (110:1.25) {$\varphi_{v_m}$};
            \node at (0,0.85) {$\cdots$};
            \VeryThinLine[midarrow] (0,2) arc (90:-90:2);
            \VeryThinLine (0,-2) arc (-90:-270:2);
            \node at (2.25,0) {$j$};
        \end{tikzpicture}}} = \mathrm{eval}_D\ket{\otimes_{v\in V}\varphi_v}. 
    \end{equation}
    Thus, by Proposition ~\ref{prop:: eval is an isometry}, for any $\ket{\varphi} = \ket{\otimes_{v\in V}\varphi_v}$, we have
    \begin{equation}
        \braket{\varphi|\prod_{p}B_p|\varphi} = \frac{1}{\mu^{2\vert F_D\vert}}\braket{\varphi|\mathrm{eval}^\dagger_D\mathrm{eval}_DB_{p_0}|\varphi} = \frac{1}{\mu^{2\vert F_D\vert}} \braket{\varphi|\mathrm{eval}^\dagger_D\mathrm{eval}_D|\varphi}.
    \end{equation}
    Since $\mathrm{eval}_D$ maps $\mathcal{H}_{\Gamma}$ to the one-dimensional space $\Hom(\mathbb{1},\mathbb{1})$, the ground-state subspace of $H_{\Gamma}$ is one-dimensional and is spanned by the state $\ket{\Psi_{\Gamma}}$ defined in the statement.
    The independence of the choice of $D$ follows from the sphericality of $\mathcal{C}$. 
\end{proof}

\section{\texorpdfstring{$\mathbb{S}^2$}{PDFstring}-functional}\label{sec:: S^2-functional}

In this section, we recall the functional integral approach to TQFT and topological order proposed in \cite{Liu2024} by the first author, then we specialize to the two-dimensional case. 
The general framework of \cite{Liu2024} uses regular stratified piecewise-linear (PL) manifolds to describe higher-dimensional lattice models. 
Although our application is two-dimensional, we will nevertheless state definitions for generic $n$ and retain the PL conditions. 
When specializing to $n=2$, we suppress the PL condition, as $2$-dimensional manifolds have unique PL structures up to PL homeomorphism.

\subsection{Labelled stratifications and \texorpdfstring{$\mathbb{S}^n$}{PDFstring}-functional}

\begin{definition}[Stratified manifold]
    Let $M$ be a closed piecewise-linear $n$-dimensional manifold. 
    A stratification $\cM$ with support $|\cM| = M$, is a filtration of closed subspaces 
    \begin{equation}
        \emptyset = \cM^{-1} \subseteq \cM^0\subseteq \cM^1\subseteq \cdots \subseteq \cM^n = M
    \end{equation}
    such that for any $0\leq k\leq n$, $\cM^k\setminus \cM^{k-1}$ is an open piecewise-linear $k$-dimensional manifold with finitely many connected components, whose closure is $\cM^k$. 
    Components of $\cM^k\setminus \cM^{k-1}$ are called the $k$-stratas of $\cM$. 
    The trivial stratification of $M$ is defined by $M^{k} = \emptyset$ for all $0\leq k\leq n-1$. 
\end{definition}
By a homeomorphism between stratifications $\cM$ and $\cN$, we mean a piecewise-linear homeomorphism $f: M\to N$ such that the restriction of $f$ on each $\cM^k\setminus \cM^{k-1}$ is a homeomorphism onto $\cN^k\setminus \cN^{k-1}$ for all $0\leq k\leq n$. 

When $M$ has a non-empty boundary $\partial M$, a stratification $\cM$ of $M$ is defined similarly by requiring $M^{\circ} \cap (\cM^{k}\setminus \cM^{k-1})$ to be an open piecewise-linear $k$-dimensional manifold with finitely many connected components, whose closure is $M^{\circ} \cap \cM^{k}$ for all $0\leq k\leq n$ (here $M^{\circ} = M\backslash \partial M$ is the interior of $M$).
Moreover, we assume that $\partial M$ intersects each $\cM^k$ transversally. 
A stratification $\cM$ of $M$ induces a stratification $\partial \cM$ of $\partial M$ by taking the transversal intersection: 
\begin{equation}
    (\partial \cM)^{k-1} := \partial M \cap \cM^{k},\quad 0\leq k\leq n.
\end{equation}
We require $\partial \cM$ to be a closed stratified piecewise-linear $(n-1)$-manifold. 


\begin{definition}[One-point suspension]
    Let $\mathcal{S}$ be a stratified piecewise-linear manifold with support $\mathbb{S}^{n-1}$. 
    The one-point suspension $\Lambda \mathcal{S}$ is a stratification of the $n$-dimensional disk $\mathbb{D}^n$, centered at the origin $O_n$, defined as
    \begin{equation}
        (\Lambda \mathcal{S})^0 = \{O_n\}\cup\mathcal{S}^0,\qquad
        (\Lambda \mathcal{S})^k = \Lambda \mathcal{S}^{k-1}\cup \mathcal{S}^k,\quad 1\leq k\leq n-1,\qquad
        (\Lambda \mathcal{S})^n = \mathbb{D}^n.
    \end{equation}
\end{definition}
The one-point suspension $\Lambda \mathcal{S}$ is the simplest stratification of $\mathbb{D}^n$ such that $\partial \Lambda \mathcal{S} = \mathcal{S}$.  
For example, consider the following stratification $\mathcal{S}$ of $\mathbb{S}^1$:
\begin{equation}
    \emptyset \subset \underbrace{\vcenter{\hbox{\begin{tikzpicture}[scale=0.65]
        \draw[dashed] (1,0) arc (0:360:1);
        \draw[fill=black] (0:1) circle (1.5pt);
        \draw[fill=black] (120:1) circle (1.5pt);
        \draw[fill=black] (-120:1) circle (1.5pt);
    \end{tikzpicture}}}}_{\mathcal{S}^0}\subset 
    \underbrace{\vcenter{\hbox{\begin{tikzpicture}[scale=0.65]
        \draw (1,0) arc (0:360:1);
        \draw[fill=black] (0:1) circle (1.5pt);
        \draw[fill=black] (120:1) circle (1.5pt);
        \draw[fill=black] (-120:1) circle (1.5pt);
    \end{tikzpicture}}}}_{\mathcal{S}^1} = \mathbb{S}^1. 
\end{equation}
Then the one-point suspension $\Lambda \mathcal{S}$ is the following stratification of $\mathbb{D}^2$: 
\begin{equation}
    \emptyset\subset \vcenter{\hbox{\begin{tikzpicture}[scale=0.65]
        \draw[dashed] (1,0) arc (0:360:1);
        \draw[fill=black] (0:1) circle (1.5pt);
        \draw[fill=black] (120:1) circle (1.5pt);
        \draw[fill=black] (-120:1) circle (1.5pt);
        \draw[fill=black] (0,0) circle (1.5pt);
    \end{tikzpicture}}}\subset \vcenter{\hbox{\begin{tikzpicture}[scale=0.65]
        \draw (1,0) arc (0:360:1);
        \draw[fill=black] (0:1) circle (1.5pt);
        \draw[fill=black] (120:1) circle (1.5pt);
        \draw[fill=black] (-120:1) circle (1.5pt);
        \draw[fill=black] (0,0) circle (1.5pt);
        \VeryThinLine (0,0) -- (0:1);
        \VeryThinLine (0,0) -- (120:1);
        \VeryThinLine (0,0) -- (-120:1);
    \end{tikzpicture}}} \subset \mathbb{D}^2. 
\end{equation}

\begin{definition}[Regular chart and link boundary]
    Let $\cM$ be a stratification of $M$. 
    For a point $p\in M^{\circ}\cap (\cM^k\setminus \cM^{k-1})$, a \emph{regular chart} is a triple $(U,\phi,\mathcal{S})$, where $U$ is a closed neighborhood of $p$ in $M$, $\mathcal{S}$ is a stratified piecewise-linear manifold with support $\mathbb{S}^{n-k-1}$, and $\phi: \cM\vert_U \to \Lambda\mathcal{S} \times \mathbb{D}^k$ is a piecewise-linear homeomorphism. 
    The triple $(U,\phi,\mathcal{S})$ is subject to the following conditions:
    \begin{enumerate}
        \item $\phi(U\cap \cM^{k}) = O_{n-k}\times \mathbb{D}^k$;
        \item $\phi(p) = O_n$;
        \item for all $0\leq j\leq n$, the image $\phi\left( U\cap (\cM^j\setminus \cM^{j-1}) \right)$ has finitely many connected components;
        \item each component of $\phi\left( U\cap (\cM^j\setminus \cM^{j-1}) \right)$ is contained in a $j$-dimensional subspace of $\mathbb{R}^n$. 
    \end{enumerate}
    We call $\mathcal{S}$ the \emph{link boundary} of $p$. 
    A point $p\in \partial M\cap (\cM^k\setminus \cM^{k-1})$ is regular if it has a neighborhood $U$ such that $\partial M\cap U$ has a regular chart in $\partial\cM$ and $\cM\vert_U$ is piecewise-linearly homeomorphic to $(\partial\cM\vert_{\partial M\cap U})\times [0,1]$. 
\end{definition}

\begin{definition}[Regular stratification]\label{def:: regular stratification}
    A stratified piecewise-linear $0$-manifold is regular. 
    A stratified piecewise-linear $n$-manifold $\cM$ is called \emph{regular} if every interior point has a regular chart whose link boundary is regular and every boundary point is regular in the preceding sense. 
\end{definition}

\begin{remark}
    Regular stratifications of a two-dimensional surface $M$ correspond to embedded graphs in $M$. 
\end{remark}

In what follows all stratifications are regular. 
The link boundary $\mathcal{S}$ of $p$ captures the local lattice shape around $p$. 
For instance, consider a point $p$ sitting on an edge of a lattice in some surface $M$: 
\begin{equation}
    \phi: \vcenter{\hbox{\begin{tikzpicture}
        \draw (0.75,0) arc (0:360:0.75); 
        \draw[blue,thick] (-135:0.75) -- (45:0.75);
        \draw[fill = black] (0,0) circle (1.5pt);
        \node at (0:0.2) {$p$};
        \node at (135:1.25) {$U$};
    \end{tikzpicture}}}\rightarrow \vcenter{\hbox{\begin{tikzpicture}
        \draw (0,0) -- (2,0) -- (2,2) -- (0,2) -- (0,0);
        \draw[blue,thick] (1,0) -- (1,2);
        \draw[fill = black] (1,1) circle (1.5pt);
        \node at (0.45,1) {$\phi(p)$};
        \node at (2.5,1) {$\mathbb{D}^1$};
    \end{tikzpicture}}}
\end{equation}
Then we see that the link boundary $\mathcal{S}$ of $p$ is the trivial (and unique) stratification of $\mathbb{S}^0$. 
Alternatively, if $p$ is a vertex of the lattice with valence $m$, then the link boundary $\mathcal{S}$ of $p$ is the stratification of $\mathbb{S}^1$ with $S^0$ consisting of $m$ points. 

\begin{definition}[Local shape]
    For $0\leq k\leq n$, let $LS_k$ be a set of stratifications of $\mathbb{S}^{n-k-1}$, and set $LS_{\bullet} = \bigcup_{k=0}^n LS_k$. 
    We say a stratification $\cM$ has \emph{local shape} $LS_{\bullet}$ if for any interior point $p\in \cM^k\setminus \cM^{k-1}$, there is a regular chart $(U,\phi,\mathcal{S})$ of $p$ such that $\mathcal{S}\in LS_k$. 
    We call $LS_{\bullet}$ the \emph{local shape set}. 
\end{definition}

For $n=2$, a local shape set $LS_{\bullet}$ is determined by $LS_0$. 
This is because the stratifications of $\mathbb{S}^{2-1-1} = \mathbb{S}^0$ and $\mathbb{S}^{2-2-1} = \mathbb{S}^{-1} = \emptyset$ are unique. 
    Stratifications of $\mathbb{S}^1$ are indexed by non-negative integers, which count the number of connected components of the $0$-dimensional skeleton. 
Then for a stratification $\cM$, a point in $M^0$ with local shape $\mathbb{S}^1$ with $m$ points corresponds to a vertex of valence $m$. 

\begin{definition}[Label space]\label{def:: label space}
    Given a local shape set $LS_{\bullet}$, a label space $\mathcal{H}_{\bullet}=\{\mathcal{H}_{\mathcal{S}}\}_{\mathcal{S}\in LS_{\bullet}}$ is a family of finite-dimensional vector spaces over $\mathbb{C}$. 
    Moreover, for each $\mathcal{S}\in LS_{\bullet}$, we have a representation $\pi_{\mathcal{S}}$ of the orientation-preserving mapping class group $\mathrm{MGS}(\Lambda\mathcal{S})$ on $\mathcal{H}_{\mathcal{S}}$. 
\end{definition}

From now on, we take all label spaces to be Hilbert spaces, and the representations $\pi_{\mathcal{S}}$ of mapping class groups to be unitary representations. 

\begin{definition}[Marked stratification]
   A stratification $\cM$ is called marked by a local shape set $LS_{\bullet}$ if there is a finite set of marked points $P\subset M^{\circ}$ such that 
   \begin{enumerate}
    \item every $p\in P$ is contained in some $k$-strata for $0\leq k\leq n$;
    \item every $p\in P$ has a regular chart $(U_p,\phi_p,\mathcal{S}_p)$, with $\mathcal{S}_p\in LS_{\bullet}$.
   \end{enumerate} 
   The neighborhoods $\{U_p\}_{p\in P}$ are required to be pairwise disjoint and disjoint from $\partial M$, and each $\phi_p$ is orientation-preserving. 
\end{definition}

Let $M$ be a manifold, and $\cM$ be a marked stratification of $M$ with marked points $P$. 
For $p\in P$ with a local chart $(U_p,\phi_p,\mathcal{S}_p)$, define the local Hilbert space $\mathcal{H}_{(U_p,\phi_p,\mathcal{S}_p)}$ to be $\mathcal{H}_{\mathcal{S}_p}$. 
A simple tensor $\otimes_{p\in P}\ket{\psi_p}$ then uniquely determines a labelled stratification $\mathcal{M}[\otimes_{p\in P}\ket{\psi_p}]$ with base $M$. 
We therefore define the local Hilbert space of $\cM$ to be 
\begin{equation}
    \mathcal{H}_{\cM} = \bigotimes_{p\in P} \mathcal{H}_{(U_p,\phi_p,\mathcal{S}_p)}. 
\end{equation}
We require that the simple tensors transform covariantly when we re-parameterize the local charts. 
Let $g: \Lambda \mathcal{S}_{p_0}\rightarrow \Lambda\mathcal{S}_{p_0}$ be the one-point suspension of an orientation-preserving homeomorphism of $\mathcal{S}_{p_0}$, and let $\cM'$ be the marked stratification obtained from $\cM$ by replacing the chart $(U_{p_0},\phi_{p_0},\mathcal{S}_{p_0})$ by $(U_{p_0},g\circ \phi_{p_0},\mathcal{S}_{p_0})$.
Then we have
\begin{equation}
    \begin{aligned}
        \cM'[\pi_{\mathcal{S}_{p_0}}(g)\ket{\psi_{p_0}}\otimes_{p\in P,p\neq p_0}\ket{\psi_p}] = \cM[\ket{\psi_{p_0}}\otimes_{p\in P,p\neq p_0}\ket{\psi_p}],\quad \otimes_{p\in P}\ket{\psi_p}\in \mathcal{H}_{\cM}. 
    \end{aligned}
\end{equation}

\begin{definition}[Configuration space]
    Given a manifold $M$, a regular stratification $\mathcal{S}$ of $\partial M$, a local shape set $LS_{\bullet}$, and a label space $\mathcal{H}_{\bullet}$ indexed by $LS_{\bullet}$, the configuration space on $M$ is defined as: 
    \begin{equation}
        M_{\mathcal{S}}(\mathcal{H}_{\bullet}) = \bigoplus_{\cM,\partial \cM = \mathcal{S}} \mathcal{H}_{\cM} 
    \end{equation}
    where the direct sum is taken over all stratifications $\cM$ of $M$ with chosen local charts $(U_p,\phi_p,\mathcal{S}_p)$ for every marked point $p$. 
    When $M$ is closed, we write $M(\mathcal{H}_{\bullet})$ for short. 
\end{definition}
Elements in $M_{\mathcal{S}}(\mathcal{H}_{\bullet})$ are formal linear sums of labelled stratifications of $M$ with boundary $\mathcal{S}$. 
It describes the space of configurations on all lattices in $M$ with given local shapes, that are compatible with the boundary stratification $\mathcal{S}$. 

The theory developed in \cite{Liu2024} is based on a sphere function defined as follows. 

\begin{definition}[\texorpdfstring{$\mathbb{S}^n$}{}-functional]\label{def:: S^n-functional}
    Consider label spaces $\mathcal{H}_{\bullet} = \{\mathcal{H}_{\mathcal{S}}\}_{\mathcal{S}\in LS_{\bullet}}$ indexed by a local shape set $LS_{\bullet}$. 
    A $\mathbb{S}^n$-functional is a non-zero linear functional $Z: \mathbb{S}^n(\mathcal{H}_{\bullet})\to \mathbb{C}$. 
\end{definition}

\begin{definition}\label{def:: homeomorphism between marked stratifications}
    Consider a marked stratification $\cM$ with local charts $\{(U_p,\phi_p,\mathcal{S}_p)\}_{p\in P}$, and $\cN$ with local charts $\{(U_q,\phi_q,\mathcal{S}_q)\}_{q\in Q}$. 
    A map $f:\cM\rightarrow \cN$ is called a homeomorphism between marked stratifications, if $f$ is a homeomorphism between stratifications, $f(P) = Q$, and for each $p\in P$, we have 
    \begin{equation}
        \phi_{f(p)} = \phi_p\circ f^{-1}. 
    \end{equation}
    In this case, we will write $\cN = f(\cM)$. 
\end{definition}

\begin{definition}[Homeomorphism invariance]\label{def:: HI}
    A $\mathbb{S}^n$-functional $Z$ is \emph{homeomorphism invariant}, if for any orientation-preserving homeomorphism $f:\cM\rightarrow \cN$ between marked stratifications, we have 
    \begin{equation}
        Z(\cN[\cdot ]) = Z(\cM[\cdot])
    \end{equation}
    as functionals on $\mathcal{H}_{\cM}$. 
\end{definition}

Since $\mathbb{S}^2(\mathcal{H}_{\bullet})$ is the direct sum of these local Hilbert spaces over all marked stratifications $\cM$ of $\mathbb{S}^2$, a $\mathbb{S}^2$-functional $Z$ restricts, for each $\cM$, to a linear functional on $\mathcal{H}_{\cM}$. 
By Riesz representation, this restriction is represented by a unique vector $\ket{\tilde{\Psi}_{\cM}}\in\mathcal{H}_{\cM}$, characterized by
\begin{equation}\label{eq:: wave function from functional}
    \braket{\tilde{\Psi}_{\cM}|\otimes_{p}\psi_p} = Z(\cM[\otimes_{p}\psi_p]),\quad \otimes_{p}\psi_p\in \mathcal{H}_{\cM}. 
\end{equation}
Therefore, a single $\mathbb{S}^2$-functional determines the wave functions of a family of states $\{\ket{\tilde{\Psi}_\cM}\}_{\cM}$, parameterized by stratifications of $\mathbb{S}^2$. 
Note that these states are generally not normalized. 

\subsection{\texorpdfstring{$\mathbb{S}^2$}{PDFstring}-functional from string-net ground states}\label{subsection:: S^2-functional from LW model}

Our primary examples of $\mathbb{S}^2$-functionals come from the ground state wave functions of string-net models. 
Fixing an input UFC $\mathcal{C}$, we describe how to obtain a $\mathbb{S}^2$-functional $Z_{\mathcal{C}}$ from the Levin--Wen wave functions defined on lattices in $\mathbb{S}^2$. 

For simplicity we focus on the case of trivalent lattices. 
This means the local shape set $LS_{\bullet}$ is given by $LS_0 = \{\mathcal{S}(3)\}$.
For every $m\geq 0$, let $\mathcal{S}(m)$ denote the standard stratification of $\mathbb{S}^1$ with $m$ marked points.
For $m\geq 4$, it is represented schematically by
\begin{equation}
    \mathcal{S}(m) = \vcenter{\hbox{\begin{tikzpicture}
        \draw (1,0) arc (0:360:1);
        \draw[fill=black] (-135:1) circle (1.5pt);
        \draw[fill=black] (-45:1) circle (1.5pt);
        \draw[fill=black] (45:1) circle (1.5pt);
        \draw[fill=black] (135:1) circle (1.5pt);
        \node at (-135:1.5) {$1$};
        \node at (-45:1.5) {$2$};
        \node at (45:1.5) {$m-1$};
        \node at (135:1.5) {$m$};
        \node at (1.25,0) {$\vdots$};
    \end{tikzpicture}}}
\end{equation} 
For a vertex $p$ with regular chart $(U_p,\phi_p,\mathcal{S}(3))$, the restriction of $\phi_p$ identifies the edges incident to $p$ with these numbered $0$-cells and thereby induces the ordering used in Section ~\ref{section:: String-net wave function}. 
The label space $\mathcal{H}_{\mathcal{S}(3)}$ is  $\Hom(\mathbb{1},A^3)$. 
Edge orientations are auxiliary background data, not physical degrees of freedom.
For a stratification $\cM$ and a choice $o$ of edge orientations, let $\Gamma$ be the directed embedded graph whose vertices and edges are the components of $\cM^0$ and $\cM^1\setminus\cM^0$, respectively.
The regular charts at the vertices identify
\begin{equation}
    \mathcal{H}_{\cM}\cong\bigotimes_{p\in\cM^0}\Hom(\mathbb{1},A^{3})=\mathcal{H}_{\Gamma},
\end{equation}
where the ordering of the factors in $A^{3}$ is induced by the numbered $0$-cells of $\mathcal{S}(3)$.

Given a labelling $\displaystyle \psi=\bigotimes_{p\in \cM^0} \ket{\psi_p}$ in $\mathcal{H}_{\cM}$, where $\cM$ is a stratification of $\mathbb{S}^2$ with $\cM^1$ connected, we define its value under $Z_{\mathcal{C}}$ as follows. 
Choose a disk $D$ in $\mathbb{S}^2$ such that $\cM^1\subset D$ and $\cM^0\cap \partial D = \emptyset$. 
For each $p\in \cM^0$ with regular chart $(U_p,\phi_p,\mathcal{S}(3))$, we fill in a disk $U_p$ around $p$ and evaluate the label $\psi_p$ using the ordering induced by $\phi_p$. 
Together with the specified edge orientations, these vertex labels define a morphism $\mathrm{eval}_D(\otimes_{p\in \cM^0}\psi_p)\in\End(\mathbb{1})$ as in \ref{def:: Levin-Wen evaluation map}. 
Since $\End_{\mathcal{C}}(\mathbb{1})\cong\mathbb{C}$, there is a unique scalar $Z_{\mathcal{C}}(\psi)$ such that

\begin{equation}\label{def:: S^2 functional from Levin-Wen wave function}
    Z_{\mathcal{C}}(\psi)\mathrm{Id}_{\mathbb{1}} = \mathrm{eval}_D\left(\bigotimes_{p\in \cM^0}\psi_p\right).  
\end{equation}

This is the unnormalized categorical evaluation: no face-dependent prefactor is included in the definition.
In particular, the empty diagram evaluates to $\mathrm{Id}_{\mathbb{1}}$.
Changing $o$ only changes the local identifications used to interpret the vertex morphisms and leaves this functional unchanged.
The functional $Z_{\mathcal{C}}$ is independent of the choice of the disk $D$ and is homeomorphism invariant by the sphericality of $\mathcal{C}$.
Moreover, it follows from Corollary ~\ref{corollary:: ground state on sphere as evaluation} that 
\begin{equation}
    Z_{\mathcal{C}}(\psi) = \mu^{F-1}\braket{\Psi_{\Gamma}|\otimes_{p\in \cM^0} \psi_p},
\end{equation}
where $F$ is the number of connected components of $\cM^2\setminus \cM^1$, namely the faces of the embedded graph. 
Equivalently, under the identification $\mathcal{H}_{\cM}\cong\mathcal{H}_{\Gamma}$, one has the equality of linear functionals
\begin{equation}
    Z_{\mathcal{C}}=\mu^{F-1}\bra{\Psi_{\Gamma}}.
\end{equation}
For disconnected $\cM^1$, define $Z_{\mathcal{C}}$ by the same categorical evaluation.
Equivalently, because the tensor unit $\mathbb{1}$ of $\mathcal{C}$ is simple, this evaluation factors multiplicatively over the connected components of $\cM^1$.


\section{Axioms of Levin--Wen Wave Function}\label{sec:: axioms}

In this section, we propose axioms characterizing $\mathbb{S}^2$-functionals $Z$ whose corresponding wave functions $\tilde{\Psi}_{\cM}$ are ground-state wave functions of the Levin--Wen model associated with some UFC $\mathcal{C}$ whenever $\cM^1$ is connected.
In the rest of this paper, we assume $LS_0=\{\mathcal{S}(3)\}$. 
Thus for any stratification $\cM$, the subspace $\cM^1$ is a trivalent graph, and labels are only attached to the vertices of that graph. 
We denote the Hilbert space label at $p$ by $\mathcal{H}_p$. 
The directions of the edges are treated as auxiliary data: they do not count as degrees of freedom in the local Hilbert spaces. 

\subsection{Axioms of the \texorpdfstring{$\mathbb{S}^2$}{}-functional}

Define $\mathbb{D}^2_+$ to be the subset of $\mathbb{S}^2$ with non-negative $x$-coordinate,  and $\mathbb{D}^2_-$ to be the subset with non-positive $x$-coordinate. 
Define $\theta(x,y,z) = (-x,y,z)$. 
Then $\theta$ induces orientation-reversing homeomorphisms from stratifications of $\mathbb{D}^2_+$ to those of $\mathbb{D}^2_-$, fixing their boundaries. 

Let $\mathcal{D}$ be a stratification of $\mathbb{D}^2_+$. 
For each $p\in \mathcal{D}^0$, let 
\begin{equation}
    \hat{\theta}_p: \mathcal{H}_p\rightarrow \mathcal{H}_{\theta(p)}
\end{equation}
be an anti-unitary such that $\hat{\theta}_{\theta(p)}\circ \hat{\theta}_p$ and $\hat{\theta}_{p}\circ \hat{\theta}_{\theta(p)}$ are the identity maps on $\mathcal{H}_p$ and $\mathcal{H}_{\theta(p)}$, respectively. 
We obtain an anti-unitary map $\hat{\theta}: \mathcal{H}_{\mathcal{D}} \to \mathcal{H}_{\theta(\mathcal{D})}$ defined by 

\begin{equation}
    \hat{\theta}\left(\bigotimes_{p\in\mathcal{D}^0}\ket{\psi_p}\right) = \bigotimes_{p\in \mathcal{D}^0} \hat{\theta}_p \ket{\psi_p}. 
\end{equation}

Given stratifications $\mathcal{D}_1,\mathcal{D}_2$ of $\mathbb{D}^2_+$ with $\partial\mathcal{D}_1 = \partial\mathcal{D}_2$ and vectors $\ket{\psi_1}\in \mathcal{H}_{\mathcal{D}_1}$ and $\ket{\psi_2}\in \mathcal{H}_{\mathcal{D}_2}$, gluing $\theta(\mathcal{D}_2)$ and $\mathcal{D}_1$ along their common boundary produces an element
\begin{equation}
    \hat{\theta}(\psi_2)\otimes \psi_1\in \mathcal{H}_{\theta(\mathcal{D}_2)\cup \mathcal{D}_1} = \mathcal{H}_{\theta(\mathcal{D}_2)}\otimes \mathcal{H}_{\mathcal{D}_1}.
\end{equation}
From now on, we shall omit the stratification map $\cM[\cdot]$ when there is no confusion. 

\begin{definition}[Superposed Reflection Positivity]\label{def:: reflection positivity}
    A $\mathbb{S}^2$-functional $Z$ has \emph{superposed (s-) reflection positivity} with respect to $\hat{\theta}$, if for every $r\geq0$ and every $\xi\in\mathbb{D}^2_{\mathcal{S}(r)}(\mathcal{H}_{\bullet})$, one has
    \begin{equation}
        Z(\hat{\theta}(\xi)\otimes \xi)\geq 0. 
    \end{equation}
\end{definition}

\begin{remark}
    Superposed reflection positivity was called ``reflection positivity'' in \cite{Liu2024}. 
    The terminology here is to emphasize that the positivity condition holds not only for configurations on reflection-symmetric lattices, but also for superpositions of them. 
\end{remark}

From a $\mathbb{S}^2$-functional $Z$ with homeomorphism invariance and s-reflection positivity, we can construct a $\mathbb{S}^2$-algebra, or equivalently a spherical $C^*$ tensor category $\mathcal{C}$, as outlined in \cite{Liu2024}. 
In $2$ dimensions, the construction closely parallels the idempotent completion of a planar algebra \cite{PlanarAlgebra2021}. 
For the stratification $\mathcal{S}(m)$, a vector $\ket{\xi}\in \mathbb{D}^2_{\mathcal{S}(m)}(\mathcal{H}_{\bullet})$ is called a \emph{null vector} if 

\begin{equation}
    Z(\hat{\theta}(\xi)\otimes \eta) = 0,\quad \forall \ket{\eta}\in \mathbb{D}^2_{\mathcal{S}(m)}(\mathcal{H}_{\bullet}).
\end{equation}

Note that since $Z$ has s-reflection positivity, this is equivalent to $Z(\hat{\theta}(\xi)\otimes \xi) = 0$. 
Define $\ker_m(Z)$ to be the kernel of the quadratic form $\xi\mapsto Z(\hat{\theta}(\xi)\otimes \xi)$ on $\mathbb{D}^2_{\mathcal{S}(m)}(\mathcal{H}_{\bullet})$, and set $\tilde{V}_m = \mathbb{D}^2_{\mathcal{S}(m)}(\mathcal{H}_{\bullet})/ \ker_m(Z)$.
For a configuration $\psi\in \mathbb{D}^2_{\mathcal{S}(m)}(\mathcal{H}_{\bullet})$, we denote by $[\psi]$ its image under the quotient map. 
We say that $Z$ is \emph{locally finite} if $\tilde{V}_m$ is finite dimensional for all $m\geq 0$. 

\begin{definition}\label{def:: multiplicativity}
    A $\mathbb{S}^2$-functional $Z$ is called \emph{multiplicative} if, for every labelled stratification $\cM$ of $\mathbb{S}^2$ and every closed disk $D\subset\mathbb{S}^2$ satisfying $\partial D\cap\cM^1=\varnothing$, we have 
    \begin{equation}
        Z(\cM) = Z(\cM|_{D^c})\cdot Z(\hat{\cM|_{D^\circ}}),
    \end{equation}
    where $\hat{\cdot}$ represents the one-point compactification of a stratification of an open disk to the one of $\mathbb{S}^2$. 
    The condition on $\partial D$ ensures that both restrictions have empty boundary stratification.
\end{definition}

For a multiplicative, homeomorphism-invariant $\mathbb{S}^n$-functional $Z$, choose a labelled stratification $\cM$ with $Z(\cM)\neq 0$ and cut an unstratified top-dimensional cell by a disk.
One capped piece is homeomorphic to $\cM$, while the other is $\mathbb{S}^n_{\emptyset}$, so homeomorphism invariance and multiplicativity give $Z(\cM)=Z(\cM)Z(\mathbb{S}^n_{\emptyset})$.
Thus
\begin{equation*}
    Z(\mathbb{S}^2_{\emptyset}) = 1. 
\end{equation*}
In particular, a multiplicative, homeomorphism-invariant, s-reflection-positive $\mathbb{S}^2$-functional satisfies $\tilde{V}_0 \cong \mathbb{C}$.
Conversely, if $\tilde{V}_0 \cong \mathbb{C}$ and $Z(\mathbb{S}^2_{\emptyset})=1$, then $Z$ is multiplicative.
Suppose first that $Z$ is multiplicative.
For a labelled stratification $\xi$ of a disk with empty boundary, set $\lambda_{\xi}=Z(\hat\xi)$, where $\hat\xi$ is its one-point compactification.
Multiplicativity implies that, for every labelled stratification $\eta$ of a disk with empty boundary,
\begin{equation}
    \braket{[\eta],[\xi]-\lambda_{\xi}[\mathbb D^2]}
    =Z(\hat{\theta}(\eta)\otimes\xi)-\lambda_{\xi}Z(\hat{\theta}(\eta)\otimes\mathbb D^2)
    =0.
\end{equation}
So $[\xi]=\lambda_{\xi}[\mathbb D^2]$, and $Z(\mathbb{S}^2_{\emptyset})=1$ implies that $[\mathbb D^2]\neq 0$.
Hence $\tilde V_0\cong\mathbb C$.

Conversely, suppose that $\tilde V_0\cong\mathbb C$ and $Z(\mathbb{S}^2_{\emptyset})=1$.
Every connected component $\cM_i$ of $\cM^1$ is contained in a disk region $R_i\subset \mathbb{S}^2$.
Since $\tilde{V}_0$ is one-dimensional, we have $[\cM_i] = \lambda_i [\mathbb{D}^2]$.
The empty-sphere condition gives $\lambda_i = Z(\hat{\cM}_i)$, and replacing the components successively gives
\begin{equation}
    Z(\cM)=\prod_i Z(\hat{\cM}_i),
\end{equation}
showing $Z$ is multiplicative.

For the rest of the paper, all $\mathbb{S}^2$-functionals $Z$ are assumed to be homeomorphism invariant, multiplicative, and s-reflection positive. 
When $Z$ is also locally finite, the construction of \cite{Liu2024} produces a $C^*$-tensor category $\mathcal{C}(Z)$. 
We recall this construction and fix the notation in Appendix~\ref{app:: reconstruction from S2 functional}. 

\begin{definition}[Complete Finiteness]\label{def:: complete finiteness}
    We say that a $\mathbb{S}^2$-functional $Z$ is \emph{completely finite} if $Z$ is locally finite and the resulting $C^*$-tensor category $\mathcal{C}(Z)$ has finitely many isomorphism classes of simple objects. 
\end{definition}

With complete finiteness, the reconstructed category is a unitary spherical fusion category. 
In summary, a multiplicative $\mathbb{S}^2$-functional $Z$ with homeomorphism invariance, s-reflection positivity, and complete finiteness gives rise to a UFC $\mathcal{C}(Z)$ together with a distinguished object $A$ in $\mathcal{C}(Z)$ satisfying 
\begin{equation}
    \End_{\mathcal{C}(Z)}(A^m) \cong \mathcal{A}_m,\quad m\geq 0. 
\end{equation} 

\subsection{Axioms of the wave functions}

In this section, we introduce additional conditions that together with homeomorphism invariance, s-reflection positivity, and multiplicativity, characterize the Levin--Wen wave functions. 
To see why we need these additional constraints, note that a spherical unitary fusion category $\mathcal{C}$ with a distinguished object $A$ determines a $\mathbb{S}^2$-functional $Z_{\mathcal{C},A}$ on stratifications whose $1$-strata are embedded trivalent graphs. 
Take the local shape to be $\mathcal{S}(3)$ and the local label space to be $\Hom(\mathbb{1},A^3)$; for each such stratification $\cM$, the evaluation map $\mathrm{eval}_D$ defines $Z_{\mathcal{C},A}$ on $\mathbb{S}^2(\mathcal{H}_{\bullet})$.
However, this construction does not determine the states $\ket{\tilde{\Psi}_{\cM}}$, since the inner products on the local label spaces have not been specified. 

For a marked stratification $\cM$ of $\mathbb{S}^2$ such that $\cM^1$ intersects the boundary of $\mathbb{D}^2_+$ transversally, we write
\begin{equation}\label{eqn:: unnormalized reduced density matrix}
    \widetilde{\rho}_{\mathbb{D}^2_+}(\cM)
    := \Tr_{\mathbb{D}^2_-}\ket{\tilde{\Psi}_{\cM}}\bra{\tilde{\Psi}_{\cM}}
\end{equation}
for the unnormalized reduced density matrix on $\mathbb{D}^2_+$.

\begin{definition}[Local non-degeneracy]\label{def:: local non degeneracy}
    Let $\cM(3) = \Lambda \mathcal{S}(3)_- \cup\Lambda \mathcal{S}(3)_+$ be the stratification of $\mathbb{S}^2$.
    We say that $Z$ is \emph{locally non-degenerate} if $\widetilde{\rho}_{\mathbb{D}^2_+}(\cM(3))$ is strictly positive. 
    Equivalently, this means that the quotient map $\ket{\psi} \mapsto [\psi]$ is injective on $\mathcal{H}_{\Lambda\mathcal{S}(3)}$. 
\end{definition}
\begin{remark}
    From the perspective of wave function renormalization \cite{QImeetsQM2019}, the $0$-strata of a stratification are supersites in a renormalized lattice. 
    If the reduced density matrix on such a supersite has a nontrivial kernel, then a one-site projection can be used to remove the unentangled degrees of freedom. 
\end{remark}

\begin{definition}[Topological connectedness]\label{def:: topological connectedness}
    We say that a $\mathbb{S}^2$-functional $Z$ satisfies \emph{topological connectedness} if there exists $\tau>0 $ such that, for any stratifications $\cN$ and $\cM$ of $\mathbb{D}^2_+$ sharing the same boundary stratification with $\cN^1$ connected, the following two equalities hold: 
    \begin{itemize}
        \item[TC1:]
        \begin{equation}\label{eqn:: TC1}
        \begin{aligned}
            \widetilde{\rho}_{\mathbb{D}^2_+}(\theta(\cN)\cup \cM)
            = \tau^{\lvert F_\cN \rvert-\lvert F_\cM \rvert}\,
            \widetilde{\rho}_{\mathbb{D}^2_+}(\theta(\cM)\cup \cM).
        \end{aligned}
    \end{equation}
        \item[TC2:] Let $\Pi\in\mathcal{B}(\mathcal{H}_{\cN})$ be the range projection of $\widetilde{\rho}_{\mathbb{D}^2_+}(\theta(\cM)\cup \cN)$, then we have
    \begin{equation}\label{eqn:: TC2}
        \widetilde{\rho}_{\mathbb{D}^2_+}(\theta(\cM)\cup \cN)
            = \tau^{\lvert F_\cM \rvert-\lvert F_\cN \rvert}\,
            \Pi\widetilde{\rho}_{\mathbb{D}^2_+}(\theta(\cN)\cup \cN) \Pi.
    \end{equation}
    \end{itemize}
    Here $\lvert F_\cN \rvert$ and $\lvert F_\cM \rvert$ are the numbers of plaquettes contained in the interiors of $\cN$ and $\cM$, respectively; for a disconnected stratification, the face count is the sum of the corresponding interior plaquette counts of its connected components.
\end{definition}

\begin{definition}[Commutativity]\label{def:: commutativity}
    We say that the $\mathbb{S}^2$-functional $Z$ satisfies \emph{commutativity} if for any labelled stratifications $\ket{\psi_1},\ket{\psi_2},\ket{\psi_3}$ with boundary $\mathcal{S}(1,1)$,
    \begin{equation}
        Z\left( \vcenter{\hbox{\begin{tikzpicture}
            \VeryThinLine[blue] (0,0) circle (0.75);
            \node[circle, draw, fill=white,minimum size=0.75cm, inner sep=1mm] at (30:0.75) {$\psi_1$};
            \node[circle, draw, fill=white, minimum size=0.75cm, inner sep=1mm] at (150:0.75) {$\psi_2$};
            \node[circle, draw, fill=white, minimum size=0.75cm, inner sep=1mm] at (-90:0.75) {$\psi_3$};
        \end{tikzpicture}}} \right) = Z\left( \vcenter{\hbox{\begin{tikzpicture}
            \VeryThinLine[blue] (0,0) circle (0.75);
            \node[circle, draw, fill=white, minimum size=0.75cm, inner sep=1mm] at (30:0.75) {$\psi_2$};
            \node[circle, draw, fill=white, minimum size=0.75cm, inner sep=1mm] at (150:0.75) {$\psi_1$};
            \node[circle, draw, fill=white, minimum size=0.75cm, inner sep=1mm] at (-90:0.75) {$\psi_3$};
        \end{tikzpicture}}} \right). 
    \end{equation}
\end{definition}

\subsection{Verification of the axioms for the Levin--Wen wave function}

In this section, we verify that the $\mathbb{S}^2$-functionals constructed in Section ~\ref{subsection:: S^2-functional from LW model} do satisfy the axioms proposed in Section ~\ref{sec:: axioms}. 
To begin with, note that the $\mathbb{S}^2$-functional $Z_{\mathcal{C}}$ is multiplicative by construction and satisfies local non-degeneracy~\ref{def:: local non degeneracy} by definition of the local Hilbert space. 
Commutativity~\ref{def:: commutativity} follows from the fact that the distinguished object reconstructed from $Z_{\mathcal{C}}$ is $A \cong \bigoplus_{x\in \Irr(\mathcal{C})}x$. 

Thus we only need to prove s-reflection positivity ~\ref{def:: reflection positivity} and the topological connectedness~\ref{def:: topological connectedness}. 

The reflection operator $\hat{\theta}_v$ depends on the directions of edges incident to $v$. 
More specifically, the reflection is defined as follows: attach $\phi_c$ \emph{after} taking the modular conjugation if the edge labelled by a self-dual object $c$ is outgoing from the vertex, and \emph{before} if the edge is incoming. 
For instance, consider a trivalent vertex $v$ of the form $\vcenter{\hbox{\begin{tikzpicture}
    \draw[fill=black] (0,0) ellipse (0.06 and 0.06);
    \VeryThinLine[midarrow] (0,0) -- (-90:0.5);
    \VeryThinLine (0,0) -- (30:0.5);
    \VeryThinLine (0,0) -- (150:0.5);
    \node at (-0.25,-0.20) {$v$};
    \node at (0.25,-0.25) {$\mathdollar$};
\end{tikzpicture}}}$ and $\psi_v\in \Hom(\mathbb{1},abc)$ with $\overline{c}\cong c$, then 
\begin{equation}
    \hat{\theta}\ket{\psi_v} = \vcenter{\hbox{\begin{tikzpicture}
        \draw[fill=black] (0,0.5) ellipse (0.06 and 0.06);
        \VeryThinLine[midarrow] (-0.5,0) -- (0,0.5);
        \VeryThinLine[midarrow] (-0.5,0) -- (-0.5,-0.5);
        \VeryThinLine[midarrow] (0,0) -- (0,0.5);
        \VeryThinLine[midarrow] (0.5,0) -- (0,0.5);
        \draw[draw=black, fill=yellow] (-0.5,0) circle (2pt);
        \node at (0.25,1) {$\overline{\psi_v}$};
        \node at (-0.5,-0.75) {$c$};
        \node at (0,-0.25) {$b$};
        \node at (0.5,-0.25) {$a$};
    \end{tikzpicture}}}
\end{equation}
If the edge labelled by $c$ is incoming, we will have:
\begin{equation}
    \hat{\theta}\ket{\psi_v} = \nu_c\times \vcenter{\hbox{\begin{tikzpicture}
        \draw[fill=black] (0,0.5) ellipse (0.06 and 0.06);
        \VeryThinLine[midarrow] (-0.5,0) -- (0,0.5);
        \VeryThinLine[midarrow] (-0.5,0) -- (-0.5,-0.5);
        \VeryThinLine[midarrow] (0,0) -- (0,0.5);
        \VeryThinLine[midarrow] (0.5,0) -- (0,0.5);
        \draw[draw=black, fill=yellow] (-0.5,0) circle (2pt);
        \node at (0.25,1) {$\overline{\psi_v}$};
        \node at (-0.5,-0.75) {$c$};
        \node at (0,-0.25) {$b$};
        \node at (0.5,-0.25) {$a$};
    \end{tikzpicture}}}
\end{equation}
with $\nu_c$ due to the $180$-degree rotation of $\phi_c$. 
As reflection also reverses the directions of the edges, it is straightforward to verify that $\hat{\theta}_{\theta(p)}\circ \hat{\theta}_p$ and $\hat{\theta}_p\circ \hat{\theta}_{\theta(p)}$ are identities on $\mathcal{H}_p$ and $\mathcal{H}_{\theta(p)}$. 

\begin{proposition}
    The $\mathbb{S}^2$-functional defined by the Levin--Wen ground state associated to a unitary fusion category $\mathcal{C}$ as in Equation~\eqref{def:: S^2 functional from Levin-Wen wave function} satisfies s-reflection positivity. 
\end{proposition}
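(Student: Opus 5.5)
The plan is to reduce s-reflection positivity to positivity of the trace inner product on $\Hom(\mathbb{1},A^r)$ in the unitary category $\mathcal{C}$. Fix $r\geq 0$ and write $\xi=\sum_i c_i\,\mathcal{D}_i[\psi_i]$, a finite superposition of labelled stratifications $\mathcal{D}_i$ of $\mathbb{D}^2_+$, all with boundary $\mathcal{S}(r)$. For each $i$ we form the ``half-evaluation'' $E(\mathcal{D}_i[\psi_i])\in\Hom(\mathbb{1},A^r)$. To define it, we evaluate the string-net diagram inside $\mathbb{D}^2_+$ as in Definition~\ref{def:: Levin-Wen evaluation map}, with the $r$ boundary edges as open ends. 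If $\mathcal{D}_i^1$ has components not meeting the boundary, these are closed diagrams, and they contribute scalar factors because $\mathbb{1}$ is simple. This is consistent with the multiplicative definition of $Z_{\mathcal{C}}$ on disconnected $\cM^1$. We extend $E$ linearly, so that $E(\xi)=\sum_i c_iE(\mathcal{D}_i[\psi_i])$.

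The key step is the identity
\begin{equation*}
    Z_{\mathcal{C}}\bigl(\hat\theta(\eta)\otimes\xi\bigr)\,\mathrm{Id}_{\mathbb{1}} = E(\eta)^*\circ E(\xi)
\end{equation*}
for labelled stratifications $\xi,\eta$ of $\mathbb{D}^2_+$ with boundary $\mathcal{S}(r)$, up to the normalization conventions at the gluing edges. By sesquilinearity it suffices to check this on simple tensors. The evaluation of the glued sphere diagram, computed with a disk $D$ containing the whole graph, factors as the evaluation of the $\mathbb{D}^2_+$ part composed with that of the $\mathbb{D}^2_-$ part along the $r$ cut edges. So the identity amounts to showing that the diagram for $\theta(\mathcal{D})[\hat\theta\psi]$ evaluates to the adjoint $E(\mathcal{D}[\psi])^*$.

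Vertex by vertex, this holds because $\hat\theta_v$ is the modular conjugation $\overline{\psi_v}$, and mirror-reflecting a diagram while taking the adjoint of each vertex morphism produces the dagger of the whole diagram. The dagger of cups and caps are caps and cups, by unitarity and sphericality of $\mathcal{C}$. Reflection also reverses edge directions. Therefore a $\phi_c$ that sat at the head of an edge in $\mathcal{D}$ must appear in the correct place in $\theta(\mathcal{D})$, and this is exactly why $\hat\theta$ attaches $\phi_c$ after or before the conjugation depending on orientation, with the factor $\nu_c$ in the incoming case. Here we use $\phi_c^*=\phi_c^{-1}$ (unitarity) together with the Frobenius--Schur relation displayed in the proof of Lemma~\ref{lemma:: eval intertwines partial Bp} to match the two conventions. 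I expect this local bookkeeping, namely checking that every self-dual edge with either orientation, including edges crossing the cut, yields precisely $\phi_c^{\pm1}$ without a stray $\nu_c$, to be the main obstacle. The plan is a case analysis over an interior edge versus a cut edge, and over outgoing versus incoming direction.

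Granting the identity, we get
\begin{equation*}
    Z_{\mathcal{C}}(\hat\theta(\xi)\otimes\xi)=\tr_{\mathcal{C}}\bigl(E(\xi)^*E(\xi)\bigr)\geq 0 .
\end{equation*}
The inequality holds by positivity of the trace on the $C^*$-algebra $\End(A^r)$, equivalently $E(\xi)^*E(\xi)=\lambda\,\mathrm{Id}_{\mathbb{1}}$ with $\lambda\geq 0$ in the unitary category. Antilinearity of $\hat\theta$ is what makes the cross terms assemble into $\bar c_i c_j$, which completes the argument.
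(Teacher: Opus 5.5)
Your proposal is correct and follows essentially the same route as the paper. The paper also shows that the glued diagram $\theta(\cM_{l'})\cup\cM_l$ evaluates to the categorical trace inner product of the two half-evaluations. As in your plan, the only real work there is checking that $\phi_b$ and $\phi_b^{-1}$, together with the factor $\nu_b$ contributed by $\hat\theta$ in one orientation, cancel on every self-dual strand crossing the reflection axis. The paper then sums over $l,l'$ and concludes by positivity of the categorical inner product, exactly as you do. Your additional case analysis for interior self-dual edges is bookkeeping that the paper leaves implicit in the definition of $\hat\theta_v$, so it only makes the same argument more explicit.
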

\begin{proof}
    Fix $m\geq 0$, let $\cM_1,\dots,\cM_L$ be stratifications of $\mathbb{D}^2_+$ with common boundary $\mathcal{S}(m)$, and, for each $l$, let $\ket{\xi_l}\in \mathcal{H}_{\cM_l}$ be a labelled stratification. 
    Consider an edge $e$ of $\theta(\cM_{l'})\cup \cM_l$ that meets both $\mathbb{D}^2_+$ and $\mathbb{D}^2_-$, and let $\theta(v)$ and $w$ denote its endpoints, where $v\in \cM_{l'}^0$ and $w\in \cM_l^0$, respectively. 
    Let $b$ be the simple object assigned to $e$ in the evaluation of $\hat{\theta}\ket{\xi_{l'}}\otimes \ket{\xi_l}$.
    When $b$ is self-dual and $e$ is directed from $w$ to $\theta(v)$ (so the edge labelled by $b$ is incoming to $v$), the resulting diagram around $e$ is 
    \begin{equation}
        \vcenter{\hbox{\begin{tikzpicture}
            \VeryThinLine[dashed] (0,1) -- (0,-1.75);
            \VeryThinLine[midarrow] (1,-0.5) arc (0:-180:1 and 0.75);
            \draw[fill=black] (-1.5,0.5) ellipse (0.06 and 0.06);
            \VeryThinLine[midarrow] (-1,0.1) -- (-1.5,0.5); 
            \VeryThinLine[midarrow] (-1,0) -- (-1,-0.5);
            \draw[draw=black, fill=yellow] (-1,0.05) circle (2pt);
            \draw[draw=black, fill=yellow] (-1,-0.55) circle (2pt);
            \VeryThinLine (-1.5,0.5) -- (-1.5,0);
            \VeryThinLine (-1.5,0.5) -- (-2,0);
            \node at (-1.25,0.8) {$\overline{\xi_{l'}(v)}$};
            \draw[fill=black] (1.5,0.5) ellipse (0.06 and 0.06);
            \VeryThinLine(1.5,0.5) -- (1,0) -- (1,-0.5);
            \VeryThinLine (1.5,0.5) -- (1.5,0);
            \VeryThinLine (1.5,0.5) -- (2,0);
            \node at (1.25,0.75) {$\xi_l(w)$};
            \node at (0.75,0) {$b$};
        \end{tikzpicture}}} = \vcenter{\hbox{\begin{tikzpicture}
            \VeryThinLine[dashed] (0,1) -- (0,-1.75);
            \VeryThinLine[midarrow] (1,-0.5) arc (0:-180:1 and 0.75);
            \draw[fill=black] (-1.5,0.5) ellipse (0.06 and 0.06);
            \VeryThinLine(-1,0) -- (-1.5,0.5); 
            \VeryThinLine(-1,0) -- (-1,-0.5);
            \VeryThinLine (-1.5,0.5) -- (-1.5,0);
            \VeryThinLine (-1.5,0.5) -- (-2,0);
            \node at (-1.25,0.8) {$\overline{\xi_{l'}(v)}$};
            \draw[fill=black] (1.5,0.5) ellipse (0.06 and 0.06);
            \VeryThinLine(1.5,0.5) -- (1,0) -- (1,-0.5);
            \VeryThinLine (1.5,0.5) -- (1.5,0);
            \VeryThinLine (1.5,0.5) -- (2,0);
            \node at (1.25,0.75) {$\xi_l(w)$};
            \node at (0.75,0) {$b$};
        \end{tikzpicture}}}
    \end{equation}
    When $e$ is directed from $\theta(v)$ to $w$, we have
    \begin{equation}
        \nu_b\times \vcenter{\hbox{\begin{tikzpicture}
            \VeryThinLine[dashed] (0,1) -- (0,-1.75);
            \VeryThinLine[midarrow] (-1,-0.5) arc (180:360:1 and 0.75);
            \draw[fill=black] (-1.5,0.5) ellipse (0.06 and 0.06);
            \VeryThinLine[midarrow] (-1,0.1) -- (-1.5,0.5); 
            \VeryThinLine (-1,0) -- (-1,-0.5);
            \draw[draw=black, fill=yellow] (-1,0.05) circle (2pt);
            \VeryThinLine (-1.5,0.5) -- (-1.5,0);
            \VeryThinLine (-1.5,0.5) -- (-2,0);
            \node at (-1.25,0.8) {$\overline{\xi_{l'}(v)}$};
            \draw[fill=black] (1.5,0.5) ellipse (0.06 and 0.06);
            \VeryThinLine[midarrow] (1.5,0.5) -- (1,0);
            \VeryThinLine(1,0) -- (1,-0.5);
            \draw[draw=black, fill=yellow] (1,-0.05) circle (2pt);
            \VeryThinLine (1.5,0.5) -- (1.5,0);
            \VeryThinLine (1.5,0.5) -- (2,0);
            \node at (1.25,0.75) {$\xi_{l}(w)$};
            \node at (0.75,0) {$b$};
        \end{tikzpicture}}} = \vcenter{\hbox{\begin{tikzpicture}
            \VeryThinLine[dashed] (0,1) -- (0,-1.75);
            \VeryThinLine[midarrow] (1,-0.5) arc (0:-180:1 and 0.75);
            \draw[fill=black] (-1.5,0.5) ellipse (0.06 and 0.06);
            \VeryThinLine[midarrow] (-1,0) -- (-1.5,0.5); 
            \VeryThinLine (-1,0) -- (-1,-0.5);
            \VeryThinLine (-1.5,0.5) -- (-1.5,0);
            \VeryThinLine (-1.5,0.5) -- (-2,0);
            \node at (-1.25,0.8) {$\overline{\xi_{l'}(v)}$};
            \draw[fill=black] (1.5,0.5) ellipse (0.06 and 0.06);
            \VeryThinLine[midarrow] (1.5,0.5) -- (1,0);
            \VeryThinLine(1,0) -- (1,-0.5);
            \VeryThinLine (1.5,0.5) -- (1.5,0);
            \VeryThinLine (1.5,0.5) -- (2,0);
            \node at (1.25,0.75) {$\xi_{l}(w)$};
            \node at (0.75,0) {$b$};
        \end{tikzpicture}}}
    \end{equation}
    The two graphical identities show that, irrespective of orientation, the isomorphisms $\phi_b$ and $\phi_b^{-1}$ cancel along every self-dual strand crossing the reflection axis, while they do not occur on non-self-dual strands. 
    Therefore, evaluating the glued diagram gives precisely the categorical inner product of the evaluations of its two halves. 
    Summing over $l$ and $l'$, we obtain
    \begin{equation}
        \sum^L_{l,l'=1}Z(\hat{\theta}(\xi_{l'})\otimes \xi_l) = \Braket{\sum_{l'}\mathrm{eval}_{\mathbb{D}^2_+}(\xi_{l'}),\sum_{l}\mathrm{eval}_{\mathbb{D}^2_+}(\xi_{l})}_{\mathrm{tr}_{\mathcal{C}}}\geq 0.
    \end{equation}
    The final inequality follows from the positive definiteness of the categorical inner product in the unitary fusion category $\mathcal{C}$. 
\end{proof}

Finally, we verify the topological connectedness~\ref{def:: topological connectedness}. 

\begin{proposition}
    The $\mathbb{S}^2$-functional $Z_{\mathcal{C}}$ defined by the Levin--Wen ground state associated to a unitary fusion category $\mathcal{C}$ satisfies topological connectedness, with $\tau = \mu^2 = \sum_{i\in \Irr(\mathcal{C})} d^2_i$. 
\end{proposition}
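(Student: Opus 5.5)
The plan is to compute the unnormalized reduced density matrices explicitly through the evaluation map. For a stratification $\cM$ of $\mathbb{D}^2_+$ with boundary $\mathcal{S}(m)$, the wave function $\tilde{\Psi}_{\theta(\cN)\cup\cM}$ is given by $Z_{\mathcal{C}}$, i.e. by the categorical evaluation of the glued diagram. The reflection-positivity computation shows that gluing splits as a pairing of the evaluations of the two halves. Precisely,
\begin{equation*}
Z_{\mathcal{C}}(\hat\theta(\eta)\otimes\xi)=\braket{\mathrm{eval}_{\mathbb{D}^2_+}(\eta),\mathrm{eval}_{\mathbb{D}^2_+}(\xi)}_{\tr_{\mathcal{C}}},
\end{equation*}
where both evaluations land in $\Hom(\mathbb{1},A^m)$. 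Tracing out the $\mathbb{D}^2_-$ part, I will therefore write
\begin{equation*}
\widetilde\rho_{\mathbb{D}^2_+}(\theta(\cN)\cup\cM)=\mathrm{eval}_{\cM}^\dagger\, G_{\cN}\,\mathrm{eval}_{\cM},
\end{equation*}
with $G_\cN=\mathrm{eval}_\cN\mathrm{eval}_\cN^\dagger$ an operator on $\Hom(\mathbb{1},A^m)$. The adjoints are taken with respect to the local Hilbert space inner products and the inner product \eqref{eqn:: skein module inner product}, after matching the $\sqrt{d}$-normalizations of boundary strands; this only inserts a fixed diagonal factor that is common to all choices of $\cN$.

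The key input is Proposition~\ref{prop:: eval is an isometry}. For $\cN$ with $\cN^1$ connected, $\mu^{-|F_\cN|}\mathrm{eval}_\cN$ is a coisometry onto $\Hom(\mathbb{1},A^m)$, so $G_\cN=\mu^{2|F_\cN|}\,\mathrm{Id}$ (up to the fixed diagonal factor). TC1 then follows at once: $\widetilde\rho(\theta(\cN)\cup\cM)=\mu^{2|F_\cN|}\mathrm{eval}_\cM^\dagger\mathrm{eval}_\cM$, and the same expression with $\cN$ replaced by $\cM$ gives $\mu^{2|F_\cM|}\mathrm{eval}_\cM^\dagger\mathrm{eval}_\cM$, provided $\cM$ is also connected. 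If $\cM$ is disconnected, I will use multiplicativity to factor off closed components and apply the connected case to each boundary-touching piece, with face counts adding. This gives TC1 with $\tau=\mu^2$.

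For TC2, $\widetilde\rho(\theta(\cM)\cup\cN)=\mathrm{eval}_\cN^\dagger G_\cM\mathrm{eval}_\cN$. I need to show this equals $\mu^{2|F_\cM|-2|F_\cN|}\,\Pi\,\widetilde\rho(\theta(\cN)\cup\cN)\,\Pi$. By Proposition~\ref{prop:: eval is an isometry}, $\widetilde\rho(\theta(\cN)\cup\cN)=\mu^{4|F_\cN|}P_\cN$, where $P_\cN=\mu^{-2|F_\cN|}\mathrm{eval}_\cN^\dagger\mathrm{eval}_\cN$ is the local ground-state projection. Since $\mathrm{eval}_\cN=\mathrm{eval}_\cN P_\cN$, the range of $\widetilde\rho(\theta(\cM)\cup\cN)$ lies inside the range of $P_\cN$. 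The claim thus reduces to the following statement about $\cM$: $G_\cM$ is $\mu^{2|F_\cM|}$ times an orthogonal projection $Q_\cM$ on $\Hom(\mathbb{1},A^m)$. Once that holds, $\mathrm{eval}_\cN^\dagger Q_\cM\mathrm{eval}_\cN$ is $\mu^{2|F_\cN|}$ times a projection, namely the range projection $\Pi$, which sits under $P_\cN$. This gives $\Pi\, P_\cN\,\Pi=\Pi$ and the identity with exponent $\tau^{|F_\cM|-|F_\cN|}$.

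The main obstacle is the disconnected case, i.e. showing $G_\cM\propto Q_\cM$ with the right power of $\mu$. For $\cM$ with connected components $\cM_1,\dots,\cM_k$ meeting the boundary in consecutive arcs (planarity forces a non-crossing arrangement), I first apply the isometry statement to each component. This shows $\mathrm{eval}_\cM$ is a tensor product of coisometries, scaled by $\mu^{|F_{\cM_i}|}$, composed with the inclusion of the non-crossing subspace $\bigotimes_i\Hom(\mathbb{1},A^{m_i})\hookrightarrow\Hom(\mathbb{1},A^m)$. I will then check that this inclusion is an isometry for the trace inner product \eqref{eqn:: skein module inner product}, using sphericality and the normalization by $\sqrt{d}$ of boundary strands; nested components are handled by the same argument. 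Closed components contribute scalars that are absorbed by multiplicativity. This yields $G_\cM=\mu^{2|F_\cM|}Q_\cM$, with $Q_\cM$ the projection onto the image, which is exactly what TC2 needs.
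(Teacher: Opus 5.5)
Your overall strategy is the paper's: write the reduced density matrices through $\mathrm{eval}$, use Proposition~\ref{prop:: eval is an isometry} to make $\mathrm{eval}_\cN\mathrm{eval}_\cN^\dagger$ scalar for connected $\cN$, and use range projections when $\cM$ is disconnected. Your TC1 argument for connected $\cM$ is fine. The gap is the diagonal factor you set aside as harmless.

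Gluing pairs the two halves with the \emph{unweighted} trace $\tr_{\mathcal C}(g^*f)$, while your adjoints are taken for the skein inner product \eqref{eqn:: skein module inner product}. Let $D_m$ act by $\sqrt{d_{a_1}\cdots d_{a_m}}$ on $\Hom(\mathbb{1},a_1\cdots a_m)$. Then the correct formula is $\widetilde\rho_{\mathbb{D}^2_+}(\theta(\mathcal X)\cup\mathcal Y)=\mathrm{eval}_{\mathcal Y}^\dagger D_m\,\mathrm{eval}_{\mathcal X}\mathrm{eval}_{\mathcal X}^\dagger\,D_m\,\mathrm{eval}_{\mathcal Y}$. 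This factor does not cancel in TC2:
\begin{itemize}
\item It gives $\widetilde\rho(\theta(\cN)\cup\cN)=\mu^{2|F_\cN|}\mathrm{eval}_\cN^\dagger D_m^2\,\mathrm{eval}_\cN$. So your identity $\widetilde\rho(\theta(\cN)\cup\cN)=\mu^{4|F_\cN|}P_\cN$ is false unless every $d_a=1$. Already for $\cN=\Lambda\mathcal S(3)$, the one-site reduced density matrix acts by $d_ad_bd_c$ on the $(a,b,c)$ sector, not by the identity.
\item Likewise $\widetilde\rho(\theta(\cM)\cup\cN)=\mu^{2|F_\cM|}\mathrm{eval}_\cN^\dagger D_mQ_\cM D_m\,\mathrm{eval}_\cN$ is not a multiple of its range projection.
\end{itemize}
So the chain ``$\widetilde\rho(\theta(\cM)\cup\cN)\propto\Pi$ and $\Pi P_\cN\Pi=\Pi$'' does not prove TC2. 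TC2 is phrased as a compression precisely because these operators are not projections.

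The missing ingredient is that $Q_\cM$ commutes with $D_m$. This holds because $\mathrm{eval}_\cM$ sends each boundary-label sector of $\mathcal H_\cM$ into the matching summand $\Hom(\mathbb{1},a_1\cdots a_m)$, on which $D_m$ is scalar, so its range projection is block diagonal. With this and $\mathrm{eval}_\cN\mathrm{eval}_\cN^\dagger=\mu^{2|F_\cN|}\mathrm{Id}$ you obtain the following.
\begin{itemize}
\item Using invertibility of $D_m$ and surjectivity of $\mathrm{eval}_\cN$, the range projection of $\widetilde\rho(\theta(\cM)\cup\cN)$ is $\Pi=\mu^{-2|F_\cN|}\mathrm{eval}_\cN^\dagger Q_\cM\,\mathrm{eval}_\cN$.
\item $\Pi$ commutes with $\mathrm{eval}_\cN^\dagger D_m^2\,\mathrm{eval}_\cN$.
\item $\Pi\,\widetilde\rho(\theta(\cN)\cup\cN)\,\Pi=\mu^{2|F_\cN|}\mathrm{eval}_\cN^\dagger D_mQ_\cM D_m\,\mathrm{eval}_\cN$, which is TC2 with $\tau=\mu^2$.
\end{itemize}
The same commutation is needed in your disconnected case of TC1, where you must know that $D_mQ_\cM D_m\,\mathrm{eval}_\cM=D_m^2\,\mathrm{eval}_\cM$. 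Your description of $Q_\cM$ as the projection onto the non-crossing subspace, with closed components contributing $\mu^{2|F|}$ by multiplicativity, is correct and makes the commutation evident. You still have to state it and carry $D_m$ through the computation rather than discarding it.
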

\begin{proof}
Let the common boundary of $\cN$ and $\cM$ be $\mathcal S(m)$, and set $V_m=\Hom(\mathbb 1,A^m)$.
Equip $V_m$ with the unweighted categorical-trace inner product
\begin{equation}
    (g,f):=\tr_{\mathcal C}(g^*f).
\end{equation}
Let $D_m$ be the positive operator on $V_m$ whose restriction to each boundary-label summand is
\begin{equation}
    \left.D_m\right|_{\Hom(\mathbb 1,a_1\cdots a_m)}
    =\sqrt{d_{a_1}\cdots d_{a_m}}\,\mathrm{Id}.
\end{equation}
Then the skein-module inner product in Equation~\eqref{eqn:: skein module inner product} and the trace inner product are related by
\begin{equation}
    \braket{g,f}
    = (g,D_m^{-1}f).
\end{equation}
For $\mathcal X\in\{\cN,\cM\}$, write $\mathrm{eval}_{\mathcal X}^{\ddagger}: V_m\rightarrow \mathcal{H}_{\mathcal{X}}$ for the adjoint of $\mathrm{eval}_{\mathcal X}$ when $V_m$ is equipped with the trace inner-product, and retain $\mathrm{eval}_{\mathcal X}^{\dagger}$ for the adjoint with respect to the skein-module inner product.
The preceding relation between the two inner products gives
\begin{equation}\label{eqn:: relation between evaluation adjoints}
    \mathrm{eval}_{\mathcal X}^{\ddagger}
    =\mathrm{eval}_{\mathcal X}^{\dagger}D_m.
\end{equation}
The definition of $Z_{\mathcal{C}}$ gives
\begin{equation}
    Z_{\mathcal C}(\hat\theta(\eta)\otimes\psi)
    =(\mathrm{eval}_{\cN}\eta,\mathrm{eval}_{\cM}\psi),
    \qquad
    \eta\in\mathcal H_{\cN},\quad \psi\in\mathcal H_{\cM}.
\end{equation}
Consequently, the definition of the partial trace gives
\begin{equation}\label{eqn:: connect eval and reduced density matrix}
    \widetilde{\rho}_{\mathbb D^2_+}(\theta(\mathcal X)\cup\mathcal Y)
    =\mathrm{eval}_{\mathcal Y}^{\ddagger}\left( \mathrm{eval}_{\mathcal X}
    \mathrm{eval}_{\mathcal X}^{\ddagger} \right)\mathrm{eval}_{\mathcal Y},
    \qquad
    \mathcal X,\mathcal Y\in\{\cN,\cM\}.
\end{equation}

By Proposition ~\ref{prop:: eval is an isometry}, $\mu^{-\lvert F_{\mathcal X}\rvert}\mathrm{eval}_{\mathcal X}$ is a partial isometry with respect to $\braket{\cdot,\cdot}$ when $\mathcal X^1$ is connected.
When $\mathcal X^1$ is disconnected, multiplicativity factors the evaluation over its connected components, and the definition of $\lvert F_{\mathcal X}\rvert$ makes the corresponding powers of $\mu$ additive.
Applying the same result to the connected components therefore defines the range projection $P_{\mathcal X}$ by
\begin{equation}
    P_{\mathcal X}
    :=\mu^{-2\lvert F_{\mathcal X}\rvert}
    \mathrm{eval}_{\mathcal X}\mathrm{eval}_{\mathcal X}^{\dagger},
    \qquad
    \mathcal X\in\{\cN,\cM\}.
\end{equation}
Since the evaluation maps preserve summands with fixed boundary labels, each $P_{\mathcal X}$ does as well.
Since $D_m$ is scalar on each such summand, $P_{\mathcal X}D_m=D_mP_{\mathcal X}$.
Equation~\eqref{eqn:: relation between evaluation adjoints} now gives $\mathrm{eval}_{\mathcal X}\mathrm{eval}_{\mathcal X}^{\ddagger}
    =\mu^{2\lvert F_{\mathcal X}\rvert}P_{\mathcal X}D_m$, and hence
\begin{equation}\label{eqn:: reduced density matrix from range projection}
    \widetilde{\rho}_{\mathbb D^2_+}(\theta(\mathcal X)\cup\mathcal Y)
    =\mu^{2\lvert F_{\mathcal X}\rvert}
    \mathrm{eval}_{\mathcal Y}^{\ddagger}
    P_{\mathcal X}D_m\mathrm{eval}_{\mathcal Y}.
\end{equation}

Since $\cN^1$ is connected, repeated use of the orthonormal-basis resolution in Lemma ~\ref{lemma:: ONB of 4-point hom space} represents every element of $V_m$ as the evaluation of a state in $\mathcal H_{\cN}$.
Thus $\mathrm{eval}_{\cN}$ is surjective and $P_{\cN}=\mathrm{Id}_{V_m}$.
By $P_{\cM}D_m\mathrm{eval}_{\cM}=D_m\mathrm{eval}_{\cM}$, we have
\begin{equation}
    \widetilde{\rho}_{\mathbb D^2_+}(\theta(\cN)\cup\cM)
    =\mu^{2(\lvert F_{\cN}\rvert-\lvert F_{\cM}\rvert)}
    \widetilde{\rho}_{\mathbb D^2_+}(\theta(\cM)\cup\cM).
\end{equation}
This proves TC1 as in Equation~\eqref{eqn:: TC1}.

For TC2, note that by Equation~\eqref{eqn:: reduced density matrix from range projection} the range projection of $\rho_{\mathbb{D}^2_+}(\theta(\cM)\cup \cN)$ is 
\begin{equation}
    \Pi =\mu^{-2\lvert F_{\cN}\rvert}
    \mathrm{eval}_{\cN}^{\dagger}P_{\cM}\mathrm{eval}_{\cN}
    \in\mathcal B(\mathcal H_{\cN}).
\end{equation}
Using $\mathrm{eval}_{\cN}^{\ddagger}=\mathrm{eval}_{\cN}^{\dagger}D_m$, Equation~\eqref{eqn:: reduced density matrix from range projection} with $\mathcal X=\mathcal Y=\cN$ spells 
\begin{equation}
    \widetilde{\rho}_{\mathbb D^2_+}(\theta(\cN)\cup\cN)
    =\mu^{2\lvert F_{\cN}\rvert}
    \mathrm{eval}_{\cN}^{\dagger}D_m^2\mathrm{eval}_{\cN}.
\end{equation}
By $P_{\cM}D_m=D_mP_{\cM}$, $\Pi$ commutes with $\widetilde{\rho}_{\mathbb D^2_+}(\theta(\cN)\cup\cN)$. 
Taking $(\mathcal X,\mathcal Y)=(\cM,\cN)$ in Equation~\eqref{eqn:: reduced density matrix from range projection} yields
\begin{equation}
\begin{aligned}
    \widetilde{\rho}_{\mathbb D^2_+}(\theta(\cM)\cup\cN)
    &=\mu^{2\lvert F_{\cM}\rvert}
    \mathrm{eval}_{\cN}^{\dagger}D_mP_{\cM}D_m\mathrm{eval}_{\cN} =\mu^{2(\lvert F_{\cM}\rvert-\lvert F_{\cN}\rvert)}\, \Pi
    \widetilde{\rho}_{\mathbb D^2_+}(\theta(\cN)\cup\cN)\Pi.
\end{aligned}
\end{equation}
Since $\tau=\mu^2$, this proves Equation~\eqref{eqn:: TC2} and completes the proof.
\end{proof}

Thus, for the raw categorical evaluation, the topological-connectedness scalar is $\tau=\mu^2$.
Equivalently, the global dimension is recovered from the functional and its induced reduced density matrices as $\mu=\sqrt{\tau}$.

\begin{remark}
    We notice that TC1 also follows from the fact that Levin--Wen wave functions on different lattices are related by local partial isometries. 
    It was shown in \cite{KoenigReichardtVidal2009,HSW2012LevinWenGSD} that the Levin--Wen wave function is invariant under local mutations of the underlying lattice. 
\end{remark}


\section{Characterizing the Levin--Wen Wave Function}\label{sec:: uniqueness of Levin-Wen wave function}

In this section, we prove that a multiplicative, s-reflection-positive, and homeomorphism-invariant $\mathbb{S}^2$-functional $Z$ determines a unitary fusion category $\mathcal{C}(Z)$ and reproduces its Levin--Wen wave functions on every stratification $\cM$ such that $\cM^1$ is connected, provided that $Z$ also satisfies:
\begin{itemize}
    \item Local non-degeneracy~\ref{def:: local non degeneracy},
    \item Topological connectedness~\ref{def:: topological connectedness},
    \item Commutativity~\ref{def:: commutativity}. 
\end{itemize}


\subsection{Consequences of the topological connectedness condition}
We now derive several consequences of topological connectedness~\ref{def:: topological connectedness}. 
If $\cM$ and $\cN$ are stratifications of $\mathbb{D}^2_+$ such that $\partial\cM = \partial\cN$. 
Then $\ket{\tilde{\Psi}_{\theta(\cN)\cup \cM}} \in \mathcal{H}_{\theta(\cN)}\otimes \mathcal{H}_{\cM}$ determines a linear map $K_{\cN,\cM}: \mathcal{H}_{\cM}\rightarrow \mathcal{H}_{\cN}$, defined by 
\begin{equation}\label{eqn:: matrix coefficient of the kernel}
    \braket{\eta_{\cN}|K_{\cN,\cM}|\psi_{\cM}} = \braket{\tilde{\Psi}_{\theta(\cN)\cup \cM}| \hat{\theta}(\eta_{\cN})\otimes \psi_{\cM}}. 
\end{equation}
The linear maps $K_{\cN,\cM}$ form the matrix entries of the inner-product matrix defined by $Z$. 
Let $\cM_1,\dots,\cM_L$ be stratifications of $\mathbb{D}^2_+$ sharing the same boundary, and write $K_{l,l'}$ for $K_{\cM_{l},\cM_{l'}}$. 
Then the kernel of the inner product induced by $Z$ on $\bigoplus^L_{l=1}\mathcal{H}_{\cM_l}$ is given by 
\begin{equation}\label{eqn:: inner-product matrix}
    \begin{bmatrix}
        K_{1,1} & K_{1,2} & \cdots & K_{1,L}\\
        K_{2,1} & K_{2,2} & \cdots & K_{2,L}\\
        \vdots & & \ddots& \vdots\\
        K_{L,1} & K_{L,2} & \cdots & K_{L,L}
    \end{bmatrix}
\end{equation}
In particular, the functional $Z$ is s-reflection positive if and only if Equation~\eqref{eqn:: inner-product matrix} is positive semidefinite for every choice of stratifications $\{\cM_l\}^L_{l=1}$. 

\begin{lemma}\label{lemma:: reduced density matrices in terms of the kernel}
    Let $Z$ be a $\mathbb{S}^2$-functional with s-reflection positivity. 
    Then for stratifications $\cM,\cN$ of $\mathbb{D}^2_+$ with the same boundary, we have $K_{\cM,\cN} = K_{\cN,\cM}^\dagger$, and 
    \begin{equation}
        K_{\cM,\cN}K_{\cN,\cM} = \widetilde{\rho}_{\mathbb{D}^2_+}(\theta(\cN)\cup \cM). 
    \end{equation}
\end{lemma}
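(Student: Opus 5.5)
The plan is to compute everything in coordinates, using orthonormal bases of $\mathcal{H}_{\cM}$ and $\mathcal{H}_{\cN}$. Fix orthonormal bases $\{\psi_i\}$ of $\mathcal{H}_{\cM}$ and $\{\eta_a\}$ of $\mathcal{H}_{\cN}$. Since $\hat\theta$ is anti-unitary, $\{\hat\theta(\eta_a)\}$ is an orthonormal basis of $\mathcal{H}_{\theta(\cN)}$, and likewise for $\cM$. By the definition \eqref{eqn:: matrix coefficient of the kernel} and Riesz representation \eqref{eq:: wave function from functional}, the matrix entries are
\begin{equation*}
    \braket{\eta_a|K_{\cN,\cM}|\psi_i}=\overline{Z(\hat\theta(\eta_a)\otimes\psi_i)}.
\end{equation*}

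For the adjoint relation $K_{\cM,\cN}=K_{\cN,\cM}^\dagger$, it suffices to show
\begin{equation*}
    Z(\hat\theta(\psi)\otimes\eta)=\overline{Z(\hat\theta(\eta)\otimes\psi)}
\end{equation*}
for $\psi\in\mathcal{H}_{\cM}$ and $\eta\in\mathcal{H}_{\cN}$. To get this, apply s-reflection positivity (Definition~\ref{def:: reflection positivity}) to the superposition $\xi=\psi+c\,\eta\in\mathbb{D}^2_{\mathcal{S}(m)}(\mathcal{H}_\bullet)$ with $c\in\{1,i\}$. Expanding $Z(\hat\theta(\xi)\otimes\xi)$ sesquilinearly (recall $\hat\theta$ is antilinear and $Z$ is linear), the diagonal terms are real and nonnegative. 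Reality of the whole expression for both choices of $c$ then forces the Hermitian symmetry of the cross terms, by the standard polarization argument. Here one must check that $\theta(\cN)\cup\cM$ and $\theta(\cM)\cup\cN$ are exactly the glued stratifications appearing in the definition, which holds since $\hat\theta_{\theta(p)}\hat\theta_p=\mathrm{id}$.

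For the product formula, compute the partial trace directly. Expand
\begin{equation*}
    \ket{\tilde\Psi_{\theta(\cN)\cup\cM}}=\sum_{a,i}\braket{\hat\theta(\eta_a)\otimes\psi_i|\tilde\Psi}\,\hat\theta(\eta_a)\otimes\psi_i,
\end{equation*}
whose coefficients are $\overline{\braket{\tilde\Psi|\hat\theta(\eta_a)\otimes\psi_i}}=\braket{\eta_a|K_{\cN,\cM}|\psi_i}$. Tracing out $\mathbb{D}^2_-$, i.e.\ summing over $a$, the reduced density matrix has entries
\begin{equation*}
    \braket{\psi_i|\widetilde\rho|\psi_j}=\sum_a \braket{\eta_a|K_{\cN,\cM}|\psi_i}\,\overline{\braket{\eta_a|K_{\cN,\cM}|\psi_j}},
\end{equation*}
which is the matrix of $K_{\cN,\cM}^{T}\overline{K_{\cN,\cM}}$ in these bases. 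The main obstacle is bookkeeping: the complex conjugations coming from $\hat\theta$ and from Riesz must be tracked so that one lands on $K^\dagger K$ rather than its transpose. If the conjugations land the other way, I would instead identify $\mathcal{H}_{\theta(\cN)}$ with $\mathcal{H}_{\cN}$ through $\hat\theta$, so that the antiunitary relabelling converts transpose-conjugate into adjoint. The result is $\braket{\psi_j|K_{\cN,\cM}^\dagger K_{\cN,\cM}|\psi_i}$, up to this orientation. Combining with the adjoint relation $K_{\cN,\cM}^\dagger=K_{\cM,\cN}$ gives $\widetilde\rho_{\mathbb{D}^2_+}(\theta(\cN)\cup\cM)=K_{\cM,\cN}K_{\cN,\cM}$.
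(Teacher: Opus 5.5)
Your proof follows the paper's route, and the adjoint relation is fine, but the partial-trace step rests on a sign-of-conjugation error that you flag without repairing. For the adjoint relation, the paper applies s-reflection positivity to the pair $\{\cM,\cN\}$, gets a positive semidefinite and hence self-adjoint $2\times 2$ block operator, and reads off the off-diagonal blocks. Your polarization with $c\in\{1,i\}$ is exactly the proof of that implication.

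The error is in your starting formula $\braket{\eta_a|K_{\cN,\cM}|\psi_i}=\overline{Z(\hat\theta(\eta_a)\otimes\psi_i)}$, which carries a spurious conjugate. By \eqref{eqn:: matrix coefficient of the kernel} and \eqref{eq:: wave function from functional}, $\braket{\eta|K_{\cN,\cM}|\psi}=\braket{\tilde\Psi_{\theta(\cN)\cup\cM}|\hat\theta(\eta)\otimes\psi}=Z(\hat\theta(\eta)\otimes\psi)$. This is linear in $\psi$ and, because $\hat\theta$ is antilinear, antilinear in $\eta$, as a matrix coefficient must be; your version has these reversed. The expansion coefficients of $\tilde\Psi$ are therefore $\braket{\hat\theta(\eta_a)\otimes\psi_i|\tilde\Psi}=\overline{\braket{\eta_a|K_{\cN,\cM}|\psi_i}}$, and the partial trace gives directly
\begin{equation*}
\braket{\psi_i|\widetilde\rho_{\mathbb{D}^2_+}(\theta(\cN)\cup\cM)|\psi_j}=\sum_a\overline{\braket{\eta_a|K_{\cN,\cM}|\psi_i}}\,\braket{\eta_a|K_{\cN,\cM}|\psi_j}=\braket{\psi_i|K_{\cN,\cM}^\dagger K_{\cN,\cM}|\psi_j}.
\end{equation*}
This is precisely the paper's computation.

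Your proposed fallback would not have rescued the conjugated version. The basis $\{\hat\theta(\eta_a)\}$ of $\mathcal H_{\theta(\cN)}$ is summed out in the partial trace, so relabelling $\mathcal H_{\theta(\cN)}$ through $\hat\theta$ cannot change the matrix of an operator on $\mathcal H_{\cM}$. What your computation actually yields is $\braket{\psi_i|\widetilde\rho|\psi_j}=\braket{\psi_j|K^\dagger K|\psi_i}$. That says $\widetilde\rho$ is the entrywise complex conjugate of $K^\dagger K$ in the basis $\{\psi_i\}$, which is a different operator in general. As written, the key identity is asserted ``up to orientation'' rather than proved. Once the spurious conjugate is removed, no reorientation step is needed and your argument is complete.
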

\begin{proof}
Apply Equation~\eqref{eqn:: inner-product matrix} to the pair $\{\cM,\cN\}$.
By s-reflection positivity, the resulting $2\times 2$ block operator is positive semidefinite, hence self-adjoint.
Its two off-diagonal blocks therefore satisfy
\begin{equation}
    K_{\cM,\cN}=K_{\cN,\cM}^{\dagger}.
\end{equation}

Let $\{\ket{e_i}\}_i$ be an orthonormal basis of $\mathcal H_{\cN}$.
Since $\hat\theta$ is anti-unitary, $\{\hat\theta\ket{e_i}\}_i$ is an orthonormal basis of $\mathcal H_{\theta(\cN)}$.
Thus, for any $\ket{\varphi},\ket{\psi}\in\mathcal H_{\cM}$, the definition of the partial trace and Equation~\eqref{eqn:: matrix coefficient of the kernel} give
\begin{equation}
\begin{aligned}
    &\braket{\varphi|\widetilde{\rho}_{\mathbb{D}^2_+}(\theta(\cN)\cup \cM)|\psi}\\
    &=\sum_i
    \braket{\hat\theta(e_i)\otimes\varphi|\tilde{\Psi}_{\theta(\cN)\cup \cM}}
    \braket{\tilde{\Psi}_{\theta(\cN)\cup \cM}|\hat\theta(e_i)\otimes\psi}\\
    &=\sum_i
    \overline{\braket{e_i|K_{\cN,\cM}|\varphi}}
    \braket{e_i|K_{\cN,\cM}|\psi}\\
    &=\braket{\varphi|K_{\cN,\cM}^{\dagger}K_{\cN,\cM}|\psi}
    =\braket{\varphi|K_{\cM,\cN}K_{\cN,\cM}|\psi}.
\end{aligned}
\end{equation}
Since $\varphi$ and $\psi$ are arbitrary, the claim follows.
\end{proof}

When $Z$ has s-reflection positivity, topological connectedness~\ref{def:: topological connectedness} can be recast in terms of the operators $K_{\cN,\cM}$.
Indeed, Lemma~\ref{lemma:: reduced density matrices in terms of the kernel} shows TC1~\eqref{eqn:: TC1} is equivalent to 
\begin{equation}\label{eqn:: TC in terms of the kernel}
    K_{\cM,\cN}K_{\cN,\cM} = \tau^{\lvert F_\cN \rvert-\lvert F_\cM \rvert} K_{\cM,\cM}^2,
\end{equation}
whenever $\cN^1$ is connected, and TC2~\eqref{eqn:: TC2} is equivalent to
\begin{equation}\label{eqn:: projected TC in terms of the kernel}
    K_{\cN,\cM}K_{\cM,\cN}
    =\tau^{\lvert F_\cM \rvert-\lvert F_\cN \rvert}\Pi K_{\cN,\cN}^2\Pi,
\end{equation}
where $\Pi\in\mathcal{B}(\mathcal{H}_{\cN})$ is the range projection specified in TC2.

For a stratification $\cM$ of $\mathbb{D}^2_+$ with boundary $\mathcal S(m)$, let $q_{\cM}:\mathcal H_{\cM}\rightarrow\tilde V_m$ be the quotient map, and equip $\tilde{V}_m$ with the inner product $\braket{\cdot,\cdot }_{\mathcal{C}(Z)}$ induced by the categorical trace of $\mathcal{C}(Z)$. 
By construction of $\mathcal C(Z)$, we have
\begin{equation}
    \braket{q_{\cM}(\eta),q_{\cM}(\xi)}_{\mathcal{C}(Z)} = Z(\hat{\theta}(\eta)\otimes \xi),\quad \forall \eta,\xi\in \mathcal H_{\cM}.
\end{equation}
The definition of $K$ then reads 
\begin{equation}\label{eq:: K as pullback of quotient inner product}
    K_{\cN,\cM}=q_{\cN}^{\dagger}q_{\cM}. 
\end{equation}

\begin{theorem}\label{Thm:: key consequence of TC}
    Let $Z$ be a $\mathbb{S}^2$-functional with s-reflection positivity. 
    Suppose that $Z$ satisfies the topological connectedness~\ref{def:: topological connectedness}.
    Then for any stratification $\cN$ of $\mathbb{D}^2_+$ such that $\cN^1$ is connected, the quotient map $q_{\cN}:\mathcal H_{\cN}\rightarrow\tilde V_m$ is surjective. 
\end{theorem}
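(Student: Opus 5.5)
Every element of $\tilde V_m$ is a finite sum $\sum_l q_{\cM_l}(\psi_l)$ over stratifications $\cM_l$ of $\mathbb{D}^2_+$ with boundary $\mathcal S(m)$. So it suffices to prove that for every single stratification $\cM$ with $\partial\cM=\partial\cN$ we have $\operatorname{range}(q_{\cM})\subseteq\operatorname{range}(q_{\cN})$. Since $\tilde V_m$ is an inner-product space and the ranges are finite dimensional, this is equivalent to $\ker(q_{\cN}^\dagger)\subseteq\ker(q_{\cM}^\dagger)$. Equivalently, it suffices to find an operator $X:\mathcal H_{\cM}\to\mathcal H_{\cN}$ with $q_{\cM}=q_{\cN}X$.

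The approach is to translate TC1 into a statement about these maps using \eqref{eq:: K as pullback of quotient inner product}, namely $K_{\cN,\cM}=q_{\cN}^\dagger q_{\cM}$. By Lemma~\ref{lemma:: reduced density matrices in terms of the kernel} and \eqref{eqn:: TC in terms of the kernel}, for $\cN^1$ connected we have
\begin{equation*}
q_{\cM}^\dagger q_{\cN}q_{\cN}^\dagger q_{\cM}=\tau^{|F_\cN|-|F_\cM|}\,q_{\cM}^\dagger q_{\cM}q_{\cM}^\dagger q_{\cM}.
\end{equation*}
Write $P_{\cN}=q_{\cN}q_{\cN}^\dagger$ and $P_{\cM}=q_{\cM}q_{\cM}^\dagger$. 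These are positive operators on $\tilde V_m$, and $P_{\cN}$ has the same range as $q_{\cN}$. The identity then reads
\begin{equation*}
q_{\cM}^\dagger\bigl(P_{\cN}-c\,P_{\cM}\bigr)q_{\cM}=0,\qquad c=\tau^{|F_\cN|-|F_\cM|}>0.
\end{equation*}

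Now take $v\in\ker(q_{\cN}^\dagger)$ that lies in the range of $q_{\cM}$, say $v=q_{\cM}\psi$. Pairing the identity with $\psi$ gives
\begin{equation*}
\|q_{\cN}^\dagger v\|^2=c\,\|q_{\cM}^\dagger v\|^2 .
\end{equation*}
The left side vanishes, so $q_{\cM}^\dagger v=0$. But $v$ is in the range of $q_{\cM}$, so $v\in\operatorname{range}(q_{\cM})\cap\ker(q_{\cM}^\dagger)=0$. This shows $\operatorname{range}(q_{\cM})\cap\operatorname{range}(q_{\cN})^\perp=0$. That is weaker than containment, so one more step is needed.

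The main step is to upgrade this to containment. On $W=\operatorname{range}(q_{\cM})$ the identity says that the compressions of $P_{\cN}$ and $cP_{\cM}$ to $W$ agree. Let $E$ be the orthogonal projection onto $\operatorname{range}(q_{\cN})$. I would first try to show directly that $P_{\cN}$ and $cP_{\cM}$ agree on all of $\tilde V_m$ (i.e.\ not only after compression), which would force $W\subseteq\operatorname{range}(P_{\cN})$. For this I would apply TC1 a second time to the stratification $\cM'=\cM\sqcup\cN$ in a superposed sense, or more concretely use the positivity of the block matrix \eqref{eqn:: inner-product matrix} for the pair $\{\cN,\cM\}$. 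As a fallback, I would use TC2 applied with the roles as stated, $\cN$ connected. By \eqref{eqn:: projected TC in terms of the kernel},
\begin{equation*}
q_{\cN}^\dagger P_{\cM}q_{\cN}=\tau^{|F_\cM|-|F_\cN|}\,\Pi\, q_{\cN}^\dagger P_{\cN}q_{\cN}\,\Pi .
\end{equation*}
Combining this with the TC1 identity should give $P_{\cN}=cP_{\cN}^{1/2}EP_{\cM}EP_{\cN}^{1/2}$-type relations. From these, $E P_{\cM}E$ and $P_{\cM}$ have equal trace, so $(1-E)P_{\cM}(1-E)=0$. Hence $(1-E)q_{\cM}=0$, which is exactly the required containment of $\operatorname{range}(q_{\cM})$ in $\operatorname{range}(q_{\cN})$.

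I expect the trace-comparison step to be the main obstacle. It requires carefully matching the two scalings $\tau^{\pm(|F_\cN|-|F_\cM|)}$ and checking that the traces of the relevant positive operators coincide. Once containment is established for every $\cM$, surjectivity of $q_{\cN}$ follows.
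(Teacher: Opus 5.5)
Your reduction to $\operatorname{range}(q_{\cM})\subseteq\operatorname{range}(q_{\cN})$ is correct. So are the rewriting of TC1 as $E_{\cM}P_{\cN}E_{\cM}=cP_{\cM}$ (with $E_{\cM}$ the projection onto $\operatorname{range}(q_{\cM})$) and the conclusion $\operatorname{range}(q_{\cM})\cap\operatorname{range}(q_{\cN})^{\perp}=0$. But the step you call the main one is never carried out, and neither suggested route works as stated. Showing $P_{\cN}=cP_{\cM}$ on all of $\tilde V_m$ is impossible in general. It would make $q_{\cM}$ onto whenever $q_{\cN}$ is, which fails for disconnected $\cM^1$: take $m=4$ and $\cM$ two parallel arcs without vertices, so $\mathcal H_{\cM}=\mathbb C$, whereas $\dim\tilde V_4>1$ for the Levin--Wen functional of any nontrivial $\mathcal C$. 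TC1 is a statement about individual stratifications, so it cannot be applied ``in a superposed sense''. The fallback relation $P_{\cN}=cP_{\cN}^{1/2}EP_{\cM}EP_{\cN}^{1/2}$ is also wrong. It is quadratic on the right and linear on the left, and in the Levin--Wen model with $\cM^1$ connected, where $P_{\cN}=cP_{\cM}$, it would say $P_{\cN}=P_{\cN}^2$. So there is no derivation of $\Tr(EP_{\cM}E)=\Tr(P_{\cM})$.

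Your trace idea does work when $\cM^1$ is also connected. TC1 with the roles exchanged gives $E_{\cN}P_{\cM}E_{\cN}=c^{-1}P_{\cN}$. Hence $c\Tr P_{\cM}=\Tr(E_{\cM}P_{\cN})\leq\Tr P_{\cN}=c\Tr(E_{\cN}P_{\cM})\leq c\Tr P_{\cM}$, so $\Tr((1-E_{\cN})P_{\cM})=0$. This is a tidy alternative to the paper's polar-decomposition construction of $\eta$.

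However, TC1 requires its first argument to be connected, so this reversal is unavailable for disconnected $\cM^1$. That is exactly where TC2 must enter, and the missing idea is to feed TC1 into TC2. Transported to $\tilde V_m$ through the polar decomposition of $q_{\cN}$, TC2 reads $P_{\cN}^{1/2}P_{\cM}P_{\cN}^{1/2}=c^{-1}\Pi'P_{\cN}^{2}\Pi'$, with $\Pi'$ the range projection of the left side. Substituting $cP_{\cM}=E_{\cM}P_{\cN}E_{\cM}$ yields $(P_{\cN}^{1/2}E_{\cM}P_{\cN}^{1/2})^{2}=\Pi'P_{\cN}^{2}\Pi'$.

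Now $P_{\cN}^{1/2}E_{\cM}P_{\cN}^{1/2}\leq\Pi'P_{\cN}\Pi'$; this is what the block positivity you mention provides, and it is the paper's Schur-complement step. On the other hand, $(\Pi'P_{\cN}^{2}\Pi')^{1/2}\geq\Pi'P_{\cN}\Pi'$ by operator monotonicity of $t\mapsto\sqrt t$. Equality is therefore forced, so $\Pi'$ commutes with $P_{\cN}$. Then $E_{\cN}E_{\cM}E_{\cN}=\Pi'$ is a projection, hence $E_{\cM}$ commutes with $E_{\cN}$. Finally $cP_{\cM}=E_{\cM}E_{\cN}P_{\cN}E_{\cN}E_{\cM}$ has range inside $\operatorname{range}(q_{\cN})$. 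Without an argument of this kind the proposal proves only the trivial-intersection statement, not surjectivity.
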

\begin{proof}
Let $\cN$ and $\cM$ be stratifications of $\mathbb{D}^2_+$ with the same boundary such that $\cN^1$ and $\cM^1$ are connected.
Set $n_{\cN}=\lvert F_\cN \rvert$ and $n_{\cM}=\lvert F_\cM \rvert$, and write
\begin{equation}
    K_{\cN,\cM}=u\lvert K_{\cN,\cM}\rvert
\end{equation}
for the polar decomposition, where $u:\mathcal{H}_{\cM}\rightarrow\mathcal{H}_{\cN}$ is the associated partial isometry.
By Equation~\eqref{eqn:: TC in terms of the kernel} and the positivity of $K_{\cM,\cM}$, we have $\lvert K_{\cN,\cM}\rvert
    = \tau^{(n_{\cN}-n_{\cM})/2}K_{\cM,\cM}$. 
Interchanging $\cM$ and $\cN$ gives
\begin{equation}
    \lvert K_{\cM,\cN}\rvert
    = \tau^{(n_{\cM}-n_{\cN})/2}K_{\cN,\cN}.
\end{equation}
The identities $u^{\dagger}K_{\cN,\cM}=K_{\cM,\cN}u=\lvert K_{\cN,\cM}\rvert$ and $u^{\dagger}\lvert K_{\cM,\cN}\rvert u=\lvert K_{\cN,\cM}\rvert$ therefore imply
\begin{equation}
    \begin{aligned}
        u^{\dagger}K_{\cN,\cM}
        &=K_{\cM,\cN}u
        =\tau^{(n_{\cN}-n_{\cM})/2}K_{\cM,\cM},\\
        u^{\dagger}K_{\cN,\cN}u
        &=\tau^{n_{\cN}-n_{\cM}}K_{\cM,\cM}.
    \end{aligned}
\end{equation}
For $\ket{\psi}\in\mathcal{H}_{\cM}$, define
\begin{equation}\label{eq:: constructive eta for connected stratifications}
    \ket{\eta}
    =\tau^{(n_{\cM}-n_{\cN})/2}u\ket{\psi}
    \in\mathcal{H}_{\cN}.
\end{equation}
Using Equation~\eqref{eq:: K as pullback of quotient inner product} and the preceding identities, we obtain
\begin{equation}
    \begin{aligned}
        \left\lVert q_{\cM}(\psi)-q_{\cN}(\eta)\right\rVert^2 &=
        \braket{\psi|K_{\cM,\cM}|\psi}
        -\tau^{(n_{\cM}-n_{\cN})/2}\braket{\psi|K_{\cM,\cN}u|\psi}\\
        &\qquad
        -\tau^{(n_{\cM}-n_{\cN})/2}\braket{\psi|u^{\dagger}K_{\cN,\cM}|\psi}
        +\tau^{n_{\cM}-n_{\cN}}\braket{\psi|u^{\dagger}K_{\cN,\cN}u|\psi}\\
        &=\left(1-1-1+1\right)\braket{\psi|K_{\cM,\cM}|\psi}
        =0.
    \end{aligned}
\end{equation}
Thus $q_{\cM}(\psi)=q_{\cN}(\eta)$, so $\image(q_{\cM})\subseteq\image(q_{\cN})$.
Interchanging $\cM$ and $\cN$ gives the reverse inclusion.
Hence $\image(q_{\cM})=\image(q_{\cN})$ whenever $\cM^1,\cN^1$ are connected.
Denote this common image by $W\subseteq\tilde V_m$.

Now let $\mathcal P$ be an arbitrary stratification of $\mathbb{D}^2_+$ with $\partial\mathcal P=\mathcal S(m)$, and write $n_{\mathcal{P}}$ for $\lvert F_{\mathcal P}\rvert$. 
By s-reflection positivity,
\begin{equation}
    \begin{bmatrix}
        K_{\cN,\cN} & K_{\cN,\mathcal P}\\
        K_{\mathcal P,\cN} & K_{\mathcal P,\mathcal P}
    \end{bmatrix}
    \geq 0.
\end{equation}
Write $K_{\cN,\mathcal P}=v\lvert K_{\cN,\mathcal P}\rvert$ for the polar decomposition.
Equations~\eqref{eqn:: TC in terms of the kernel} and~\eqref{eqn:: projected TC in terms of the kernel} give
\begin{equation}
    \begin{aligned}
        K_{\mathcal P,\cN}K_{\cN,\mathcal P}
        &=\tau^{n_{\cN}-n_{\mathcal{P}}}K_{\mathcal P,\mathcal P}^{2},\\
        K_{\cN,\mathcal P}K_{\mathcal P,\cN}
        &=\tau^{n_{\mathcal{P}}-n_{\cN}}\Pi K_{\cN,\cN}^{2}\Pi.
    \end{aligned}
\end{equation}
Consequently,
\begin{equation}
    K_{\cN,\mathcal P}
    =\tau^{(n_{\cN}-n_{\mathcal{P}})/2}vK_{\mathcal P,\mathcal P}.
\end{equation}
Let $p=vv^{\dagger}$.
Then $p$ is the support projection of $K_{\cN,\mathcal P}K_{\mathcal P,\cN}$.
Since $\Pi$ is the range projection of the same operator, $p=\Pi$, and
\begin{equation}
    pK_{\cN,\cN}^{2}p
    =\Pi K_{\cN,\cN}^{2}\Pi
    =\tau^{2(n_{\cN}-n_{\mathcal{P}})}
    vK_{\mathcal P,\mathcal P}^{2}v^{\dagger}.
\end{equation}
Therefore,
\begin{equation}\label{eqn:: compressed square root}
    \left(pK_{\cN,\cN}^{2}p\right)^{1/2}
    =\tau^{n_{\cN}-n_{\mathcal{P}}}
    vK_{\mathcal P,\mathcal P}v^{\dagger}.
\end{equation}

In the remainder of the proof, the inverse of a positive operator is taken on its range subspace.
Applying the generalized Schur-complement criterion to the lower-right block $K_{\mathcal P,\mathcal P}$ of the block matrix above, which also gives the required range inclusion, we obtain
\begin{equation}
    \begin{aligned}
        K_{\cN,\cN} \geq K_{\cN,\mathcal P}K_{\mathcal P,\mathcal P}^{-1}K_{\mathcal P,\cN} =\tau^{n_{\cN}-n_{\mathcal{P}}}
        vK_{\mathcal P,\mathcal P}v^{\dagger}.
    \end{aligned}
\end{equation}
Compressing this inequality by $p$ and using Equation~\eqref{eqn:: compressed square root} gives
\begin{equation}
    pK_{\cN,\cN}p
    \geq\left(pK_{\cN,\cN}^{2}p\right)^{1/2}.
\end{equation}
On the other hand,
\begin{equation}
    pK_{\cN,\cN}^{2}p-\left(pK_{\cN,\cN}p\right)^{2}
    =pK_{\cN,\cN}(1-p)K_{\cN,\cN}p
    \geq0.
\end{equation}
The operator monotonicity of the function $t\mapsto \sqrt{t}$ gives the reverse inequality.
Hence
\begin{equation}
    pK_{\cN,\cN}p
    =\left(pK_{\cN,\cN}^{2}p\right)^{1/2}
    =\tau^{n_{\cN}-n_{\mathcal{P}}}
    vK_{\mathcal P,\mathcal P}v^{\dagger}.
\end{equation}
Equality also implies
\begin{equation}
    pK_{\cN,\cN}(1-p)K_{\cN,\cN}p=0,
\end{equation}
and hence $p$ commutes with $K_{\cN,\cN}$. 
Thus we have $pK_{\cN,\cN}^{-1}p
    =\tau^{n_{\mathcal{P}}-n_{\cN}}
    vK_{\mathcal P,\mathcal P}^{-1}v^{\dagger}$. 
Since $v^{\dagger}v$ is the support projection of $K_{\mathcal P,\mathcal P}$, it follows that
\begin{equation}
    \begin{aligned}
        K_{\mathcal P,\cN}K_{\cN,\cN}^{-1}K_{\cN,\mathcal P}
        &=\tau^{n_{\cN}-n_{\mathcal{P}}}
        K_{\mathcal P,\mathcal P}v^{\dagger}
        \left(pK_{\cN,\cN}^{-1}p\right)
        vK_{\mathcal P,\mathcal P}\\
        &=K_{\mathcal P,\mathcal P}K_{\mathcal P,\mathcal P}^{-1}K_{\mathcal P,\mathcal P} =K_{\mathcal P,\mathcal P}.
    \end{aligned}
\end{equation}
By Equation~\eqref{eq:: K as pullback of quotient inner product}, this identity is equivalent to
\begin{equation}
    q_{\mathcal P}^{\dagger}
    \left(1-q_{\cN}K_{\cN,\cN}^{-1}q_{\cN}^{\dagger}\right)
    q_{\mathcal P}
    =0.
\end{equation}
Since the operator $q_{\cN}K_{\cN,\cN}^{-1}q_{\cN}^{\dagger}$ is the orthogonal projection onto $\image(q_{\cN})=W$, it follows that $\image(q_{\mathcal P})\subseteq W$.
By definition, $\tilde V_m$ is spanned by the images of the quotient maps $q_{\mathcal P}$ over all stratifications $\mathcal P$ with boundary $\mathcal S(m)$.
Hence $W=\tilde V_m$, so $q_{\cN}$ is surjective. 
\end{proof}

\begin{corollary}
    A multiplicative, s-reflection-positive $\mathbb{S}^2$-functional $Z$ that satisfies topological connectedness~\ref{def:: topological connectedness} is automatically locally finite. 
\end{corollary}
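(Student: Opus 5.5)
The plan is to derive local finiteness directly from Theorem~\ref{Thm:: key consequence of TC}. By definition, $Z$ is locally finite when $\tilde V_m=\mathbb{D}^2_{\mathcal S(m)}(\mathcal H_\bullet)/\ker_m(Z)$ is finite dimensional for every $m\geq 0$. The theorem says that for any stratification $\cN$ of $\mathbb{D}^2_+$ with boundary $\mathcal S(m)$ and connected $\cN^1$, the quotient map $q_{\cN}:\mathcal H_{\cN}\to\tilde V_m$ is surjective. Since $\mathcal H_{\cN}=\bigotimes_{p\in\cN^0}\mathcal H_p$ is a finite tensor product of finite-dimensional label spaces (Definition~\ref{def:: label space}), this gives $\dim\tilde V_m\leq\dim\mathcal H_{\cN}<\infty$.

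So the whole argument reduces to one point: for each $m\geq 0$ we must exhibit at least one stratification $\cN$ of $\mathbb{D}^2_+$ with $\partial\cN=\mathcal S(m)$, trivalent local shape $LS_0=\{\mathcal S(3)\}$, and connected $\cN^1$. I would do this by an explicit case split.

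For $m\geq 3$, I would take a trivalent tree inside the disk whose $m$ leaves are the boundary points: a path of $m-2$ interior vertices, each with one edge going to the boundary, and the two end vertices each carrying one extra boundary edge.

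For $m=2$, I would take a single arc joining the two boundary points. This has no $0$-strata in the interior, so $\mathcal H_{\cN}\cong\mathbb C$ is the empty tensor product.

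For $m=1$ and $m=0$, a tree cannot be used, since a trivalent graph with a single boundary leg or no boundary leg is not a tree. Instead:
\begin{itemize}
\item for $m=1$, I would attach a trivalent vertex to the boundary edge and put a loop (a ``lollipop'') at that vertex, which is connected;
\item for $m=0$, I would take the empty stratification, or alternatively note that multiplicativity already gives $\tilde V_0\cong\mathbb C$, as shown after Definition~\ref{def:: multiplicativity}.
\end{itemize}

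The main point to watch is that Theorem~\ref{Thm:: key consequence of TC} is stated for connected $\cN^1$, and that connectedness may or may not be meant to include lattices with loops or the empty lattice. The lollipop is connected, so the hypothesis of TC applies to it. If there is any doubt about the degenerate cases $m=0,1$, the fallback for $m=0$ is the multiplicativity argument. For $m=1$, I would use the same multiplicativity-style reasoning, or simply rely on the lollipop lattice, since TC was formulated for arbitrary connected $\cN^1$.

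With one such $\cN$ available for every $m$, surjectivity of $q_{\cN}$ bounds $\dim\tilde V_m$ for all $m$, which is exactly local finiteness.
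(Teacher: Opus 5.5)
Your proposal is correct and is essentially the paper's argument: choose a connected $\cN$ with $\partial\cN=\mathcal S(m)$, apply Theorem~\ref{Thm:: key consequence of TC}, and bound $\dim\tilde V_m\le\dim\mathcal H_{\cN}<\infty$. Your explicit lattices, together with the multiplicativity fallback for $m=0$, just supply the existence step that the paper leaves implicit.

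One caution concerns $m=2$. There you should take a connected lattice with vertices, for example two trivalent vertices joined by a double edge with one boundary leg each, rather than the bare arc. Applied to the bare arc, where $\mathcal H_{\cN}\cong\mathbb C$, TC would force $\dim\tilde V_2\le 1$, i.e.\ $\End(A)\cong\mathbb C$. That instance of TC in fact fails for $Z_{\mathcal C}$ whenever $\mathcal C$ has more than one simple object.
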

\begin{proof}
For each $m\geq 0$, choose a stratification $\cM$ of $\mathbb{D}^2_+$ with boundary $\mathcal S(m)$ such that $\cM^1$ is connected.
By Theorem ~\ref{Thm:: key consequence of TC}, the quotient map $q_{\cM}:\mathcal H_{\cM}\rightarrow\tilde V_m$ is surjective.
Since the local label spaces are finite-dimensional, $\mathcal H_{\cM}$ is finite-dimensional.
Consequently, $\tilde V_m$ is finite-dimensional.
As this holds for every $m\geq0$, $Z$ is locally finite.
\end{proof}

\begin{lemma}\label{lemma:: A contains all simple objects}
    Assume that $Z$ is multiplicative, s-reflection-positive, and satisfies topological connectedness~\ref{def:: topological connectedness}. 
    Then every simple object of $\mathcal C(Z)$ is isomorphic to a direct summand of $A=(1,\mathrm{Id}_1)$.
\end{lemma}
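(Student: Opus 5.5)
We use the structure of $\mathcal C(Z)$ recalled in Appendix~\ref{app:: reconstruction from S2 functional}. Objects are pairs $(m,P)$ with $P$ a projection in $\mathcal A_m=\End(A^m)\cong\tilde V_{2m}$, and $A=(1,\mathrm{Id}_1)$. A simple object is therefore a minimal projection $P\in\mathcal A_m$ for some $m$. Our aim is to show that any such $(m,P)$ admits a nonzero morphism to or from $A$; simplicity then makes $(m,P)$ isomorphic to a summand of $A$. Since $\Hom(A^m,A)\cong\tilde V_{m+1}$, it suffices to find a nonzero element $x\in\tilde V_{m+1}$ with $xP\neq 0$, where $xP$ is read as composition $A^m\to A^m\to A$.

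The key input is Theorem~\ref{Thm:: key consequence of TC}. For every $k$, the space $\tilde V_k$ is spanned by images of labelled stratifications of a disk whose $1$-skeleton is connected and has a prescribed shape. Apply this to $\tilde V_{2m}=\End(A^m)$. Choose a connected trivalent stratification $\cN$ of the disk with boundary $\mathcal S(2m)$ in the following form: the $m$ top boundary points are joined by a trivalent tree to a single internal edge, that edge runs to the bottom, and a second tree there fans out to the $m$ bottom boundary points. This is the graph of an ``$m\to1\to m$'' factorization. Surjectivity of $q_{\cN}$ then shows that every element of $\End(A^m)$, in particular $P$, is a finite sum $\sum_i g_i\circ f_i$ with $f_i\in\Hom(A^m,A)$ and $g_i\in\Hom(A,A^m)$. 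Here each $f_i,g_i$ is the image of a labelled tree.

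The factorization is the step we expect to be the main obstacle. The difficulty is justifying that cutting $\cN$ along the middle edge realizes the corresponding element of $\tilde V_{2m}$ as a composite in $\mathcal C(Z)$. This should follow from the definition of composition in the appendix, where gluing disks along boundary arcs is composition. Once the cut is placed transversally to the single internal edge, the two halves are disks with boundaries $\mathcal S(m+1)$, that is, elements of $\Hom(A^m,A)$ and $\Hom(A,A^m)$. The cut must cross $\cN^1$ exactly once, which is exactly why $\cN$ is chosen with a single connecting edge. If the appendix's conventions require marked points to avoid the cut, we first isotope using homeomorphism invariance.

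With the factorization in hand, $P=P^2=\sum_i (Pg_i)(f_iP)$, and $P\neq0$, so some $f_iP\in\Hom((m,P),A)$ is nonzero. Set $y=f_iP$. Then $yy^*\in\End((m,P))\cong\mathbb C$ because $(m,P)$ is simple, and $yy^*\neq0$ by the $C^*$ property. Hence $yy^*=c\,P$ with $c>0$. It follows that $c^{-1/2}y$ is an isometry from $(m,P)$ into $A$, exhibiting $(m,P)$ as isomorphic to the subobject of $A$ cut out by the projection $c^{-1}y^*y$. The edge case $m=0$ is the tensor unit, handled the same way via a disk whose boundary has one point. Equivalently, any nonzero $x\in\tilde V_1$ gives $\mathbb 1\hookrightarrow A$, and such an $x$ exists because $\tilde V_2\neq0$ factors through $\tilde V_1$ by the same argument.
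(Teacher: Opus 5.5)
Your proposal is essentially the paper's proof. The paper also takes a connected stratification $B_m$ of shape $m\to1\to m$ whose middle cut meets the $1$-skeleton in one edge. For $m=0$ it uses a closed connected trivalent graph cut along a separating edge, which is what your ``disk whose boundary has one point'' amounts to. It then uses Theorem~\ref{Thm:: key consequence of TC} to get $\mathcal A_m=\mathcal A(1,m)\mathcal A(m,1)$ and concludes by semisimplicity.

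One slip to fix: for $y=f_iP\in\Hom((m,P),A)$ it is $y^*y$, not $yy^*$, that lies in $\End((m,P))$. So $y^*y=cP$, and $c^{-1}yy^*\in\mathcal A_1$ is the projection cutting out the copy of $(m,P)$ in $A$. Your remark about $\tilde V_2$ is unnecessary, since the $m=0$ factorization of $\tilde V_0\cong\mathbb C$ already forces $\tilde V_1\neq0$.
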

\begin{proof}
It suffices to prove that, for every $m\geq 0$, the following factorization holds: 
\begin{equation}\label{eq:: Am factors through A}
    \mathcal A_m=\mathcal A(1,m)\mathcal A(m,1). 
\end{equation}
Choose a connected stratification $B_m$ of a rectangle with boundary $\mathcal S(m,m)$ that has a horizontal cut meeting $B_m^1$ in one point. 
For $m\geq 1$, take a trivalent tree from the $m$ upper boundary points to one intermediate boundary point and stack its reflected copy below it.
For $m=0$, take a connected closed trivalent graph with a separating edge and cut that edge.
Then every labelled vector in $\mathcal H_{B_m}$ is a finite sum of tensors $\eta\otimes \xi$, where $\xi$ labels the upper piece and $\eta$ labels the lower piece.
For each simple tensor, vertical composition gives
\begin{equation}
    q_{B_m}(\eta\otimes \xi)=q_{\mathrm{low}}(\eta)q_{\mathrm{up}}(\xi)\in \mathcal A(1,m)\mathcal A(m,1).
\end{equation}
Since $B_m^1$ is connected, Theorem ~\ref{Thm:: key consequence of TC} implies that $q_{B_m}$ is surjective onto $\mathcal A_m\cong\tilde V_{2m}$.
Therefore, Equation~\eqref{eq:: Am factors through A} holds.
If a simple object occurs as a summand of $A^m$, the factorization provides a nonzero morphism to or from $A$.
Semisimplicity then implies that this simple object is a direct summand of $A$, proving the claim.
\end{proof}

\begin{corollary}\label{corollary:: A is sum of simple objects}
    Under the assumptions of Lemma ~\ref{lemma:: A contains all simple objects}, $Z$ has complete finiteness. 
    If $Z$ also satisfies commutativity ~\ref{def:: commutativity}, then
    \begin{equation}
        A \cong \bigoplus_{x\in \Irr(\mathcal{C}(Z))} x. 
    \end{equation}
    In particular, $Z$ is completely finite. 
\end{corollary}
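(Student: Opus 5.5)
Complete finiteness follows from Lemma~\ref{lemma:: A contains all simple objects} together with local finiteness. By the preceding corollary, $Z$ is locally finite, so $\End_{\mathcal{C}(Z)}(A)\cong\mathcal A_1\cong\tilde V_2$ is a finite-dimensional $C^*$-algebra. Hence $A$ decomposes as a finite direct sum of simple objects, and only finitely many isomorphism classes occur in this decomposition. By Lemma~\ref{lemma:: A contains all simple objects}, every simple object of $\mathcal C(Z)$ is isomorphic to a summand of $A$. So $\Irr(\mathcal C(Z))$ is finite, which is complete finiteness. Since multiplicativity gives $\tilde V_0\cong\mathbb C$, the unit is simple and $\mathcal C(Z)$ is a unitary fusion category.

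For the decomposition of $A$, I write $A\cong\bigoplus_{x}x^{\oplus n_x}$ with $n_x\geq1$ for every $x$ (again by the lemma). It suffices to show that $\End(A)$ is commutative, because $\End(A)\cong\bigoplus_x M_{n_x}(\mathbb C)$, and this is commutative exactly when every $n_x=1$.

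To get commutativity of $\End(A)$ from Definition~\ref{def:: commutativity}, I identify $\mathcal A_1=\End(A)$ with $\tilde V_2$, the quotient of labelled stratifications of a disk with boundary $\mathcal S(1,1)$. By Theorem~\ref{Thm:: key consequence of TC}, every class in $\tilde V_2$ is represented by such a labelled disk. Composition in $\mathcal A_1$ is vertical stacking of these disks along a strand. The commutativity axiom says that for a circle carrying three such insertions $\psi_1,\psi_2,\psi_3$, swapping $\psi_1$ and $\psi_2$ does not change $Z$.

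Now take $x,y\in\mathcal A_1$ represented by $\psi_1,\psi_2$ and an arbitrary $\psi_3$. The closed circle evaluates to the categorical trace $\tr(\psi_3\psi_2\psi_1)$, where the cyclic order is read off the circle. The axiom therefore gives
\begin{equation*}
\tr(z\,xy)=\tr(z\,yx)\quad\text{for all } z\in\mathcal A_1 .
\end{equation*}
The main point to check carefully is that the circle picture, after homeomorphism invariance and the conventions of Appendix~\ref{app:: reconstruction from S2 functional}, really is this trace of the composite, with consistent orientation and no reflection or adjoint entering. Once that is settled, I take $z=(xy-yx)^*$ and use faithfulness of the trace on the $C^*$-algebra $\mathcal A_1$. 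This gives $xy=yx$, so $\End(A)$ is commutative, all $n_x=1$, and $A\cong\bigoplus_{x\in\Irr(\mathcal C(Z))}x$.
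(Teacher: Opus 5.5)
Your proposal is correct and follows the paper's proof. As there, finite-dimensionality of $\mathcal A_1$ together with Lemma~\ref{lemma:: A contains all simple objects} makes $\Irr(\mathcal C(Z))$ finite, and the commutativity axiom gives $\tr(z(xy-yx))=0$ for all $z\in\mathcal A_1$, so faithfulness of the trace makes $\End(A)$ commutative and forces every $n_x=1$. Your appeal to Theorem~\ref{Thm:: key consequence of TC} is unnecessary, since linearity of $Z$ already extends the axiom from single labelled stratifications to all of $\tilde V_2$.
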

\begin{proof}
By Lemma ~\ref{lemma:: A contains all simple objects}, every simple object of $\mathcal C(Z)$ occurs as a direct summand of $A$.
Since $\mathcal A_1$ is finite-dimensional, $A$ has finite length, so the lemma also implies that $\Irr(\mathcal C(Z))$ is finite and hence $Z$ is completely finite.
Write $A\cong\bigoplus_{x\in\Irr(\mathcal C(Z))}x^{\oplus n_x}$, where $n_x\geq 1$ for every $x$.
The graphical commutativity condition gives $\tr((xy-yx)z)=0$ for all $z\in\End(A)$.
Faithfulness of the categorical trace therefore implies that $\End(A)=\mathcal A_1$ is commutative, hence $n_x=1$ for every $x$, which proves the claim. 
\end{proof}

\begin{proposition}\label{prop:: canonical operator from connected stratifications}
    Let $Z$ be a multiplicative $\mathbb{S}^2$-functional with s-reflection positivity that satisfies topological connectedness.
    For any stratification $\cN$ of $\mathbb{D}^2_+$ with connected $\cN^1$ and $\partial\cN = \mathcal{S}(m)$, set $\mathfrak{D}_{\cN}=q_{\cN}q^\dagger_{\cN} \in \mathcal{B}(\tilde{V}_m)$.
    Then $\mathfrak{D}_{\cN}$ is strictly positive, and for any other stratification $\cN'$ with connected $(\cN')^1$ and $\partial\cN'=\mathcal S(m)$,
    \begin{equation}\label{eq:: scaling of canonical operators}
        \mathfrak{D}_{\cN'}
        =\tau^{\lvert F_{\cN'} \rvert-\lvert F_\cN \rvert}\mathfrak{D}_{\cN}.
    \end{equation}
    In particular, $\tau^{-\lvert F_\cN \rvert}\mathfrak{D}_{\cN}$ is independent of the choice of $\cN$.
\end{proposition}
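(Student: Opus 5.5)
Strict positivity comes directly from Theorem~\ref{Thm:: key consequence of TC}. Since $\cN^1$ is connected, $q_{\cN}:\mathcal H_{\cN}\to\tilde V_m$ is surjective. Hence $\mathfrak D_{\cN}=q_{\cN}q_{\cN}^\dagger$ has trivial kernel. Indeed, $\braket{v|q_{\cN}q_{\cN}^\dagger|v}=\lVert q_{\cN}^\dagger v\rVert^2$, and $q_{\cN}^\dagger v=0$ forces $v\perp\image(q_{\cN})=\tilde V_m$. Since $\tilde V_m$ is finite dimensional by the corollary on local finiteness, $\mathfrak D_{\cN}$ is strictly positive.

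For the scaling identity, I would reuse the polar-decomposition step from the proof of Theorem~\ref{Thm:: key consequence of TC}, now with $\cM=\cN'$. Both stratifications have connected $1$-skeleton, so TC1 in the form \eqref{eqn:: TC in terms of the kernel} applies in both directions. Write $K_{\cN,\cN'}=u\lvert K_{\cN,\cN'}\rvert$. That proof gives, with $n=\lvert F_\cN\rvert$ and $n'=\lvert F_{\cN'}\rvert$, the identity $q_{\cN'}(\psi)=\tau^{(n'-n)/2}q_{\cN}(u\psi)$ for every $\psi\in\mathcal H_{\cN'}$; that is,
\begin{equation*}
q_{\cN'}=\tau^{(n'-n)/2}\,q_{\cN}u .
\end{equation*}
Consequently,
\begin{equation*}
\mathfrak D_{\cN'}=q_{\cN'}q_{\cN'}^\dagger=\tau^{n'-n}\,q_{\cN}\,uu^\dagger\,q_{\cN}^\dagger .
\end{equation*}

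It then remains to show $q_{\cN}uu^\dagger q_{\cN}^\dagger=q_{\cN}q_{\cN}^\dagger$. Here $uu^\dagger$ is the range projection of $K_{\cN,\cN'}=q_{\cN}^\dagger q_{\cN'}$. By the displayed identity, the range of $K_{\cN,\cN'}$ equals $\tau^{(n'-n)/2}\,q_{\cN}^\dagger q_{\cN}u(\mathcal H_{\cN'})$. This is contained in $\image(q_{\cN}^\dagger)=(\ker q_{\cN})^\perp$, so $uu^\dagger\le$ the support projection $s$ of $q_{\cN}^\dagger q_{\cN}$. Conversely, $\image(q_{\cN'})=\tilde V_m$ by surjectivity, so $\image(K_{\cN,\cN'})=q_{\cN}^\dagger(\tilde V_m)=\image(q_{\cN}^\dagger)$. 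Hence $uu^\dagger=s$. Since $q_{\cN}s=q_{\cN}$, the claim follows, giving \eqref{eq:: scaling of canonical operators}. Multiplying by $\tau^{-n'}$ then yields the independence of $\tau^{-\lvert F_\cN\rvert}\mathfrak D_{\cN}$.

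The main obstacle is making sure the relation $q_{\cN'}=\tau^{(n'-n)/2}q_{\cN}u$ is established as an operator identity and not just an equality of images. The earlier proof shows $\lVert q_{\cN'}(\psi)-\tau^{(n'-n)/2}q_{\cN}(u\psi)\rVert=0$ for each $\psi$ (up to relabelling which side is which). This is exactly what is needed, but the exponent conventions must be tracked carefully in the swap of $\cM$ and $\cN$.
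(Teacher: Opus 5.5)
Your proof is correct, but for the scaling identity it takes a different route from the paper. Your strict-positivity argument is the same as the paper's.

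\textbf{Your route.} You import the operator identity $q_{\cN'}=\tau^{(n'-n)/2}q_{\cN}u$ from the proof of Theorem~\ref{Thm:: key consequence of TC}. This step uses TC1 in both directions, which is legitimate since both $\cN^1$ and $(\cN')^1$ are connected. You then remove $uu^\dagger$ by identifying it with the support projection of $q_{\cN}^\dagger q_{\cN}$, which uses surjectivity of $q_{\cN'}$. Your inclusion step is redundant, because the equality $\image(K_{\cN,\cN'})=q_{\cN}^\dagger(\tilde V_m)$ already gives everything.

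\textbf{The paper's route.} It applies TC1 only once, in the kernel form \eqref{eqn:: TC in terms of the kernel} with $\cM=\cN$ and with $\cN'$ in the connected slot. Via $K_{\cN,\cM}=q_{\cN}^\dagger q_{\cM}$, this reads
\begin{equation*}
q_{\cN}^{\dagger}\,\mathfrak D_{\cN'}\,q_{\cN}=\tau^{\lvert F_{\cN'}\rvert-\lvert F_\cN\rvert}\,q_{\cN}^{\dagger}\,\mathfrak D_{\cN}\,q_{\cN}.
\end{equation*}
It then cancels $q_{\cN}^\dagger$ on the left, since it is injective, and $q_{\cN}$ on the right, since it is surjective.

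\textbf{Comparison.} The paper's argument is shorter and needs neither polar decompositions nor range-projection bookkeeping. Yours yields a slightly stronger byproduct: an explicit partial isometry $u:\mathcal H_{\cN'}\to\mathcal H_{\cN}$ that intertwines the two quotient maps up to the factor $\tau^{(n'-n)/2}$. This is the local lattice-change isometry one expects physically.
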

\begin{proof}
By Theorem ~\ref{Thm:: key consequence of TC}, $q_{\cN}$ is surjective, so $q_{\cN}^{\dagger}$ is injective.
Therefore, for every nonzero $x\in\tilde V_m$,
\begin{equation}
    \braket{x,\mathfrak D_{\cN}x}
    =\lVert q_{\cN}^{\dagger}x\rVert^2
    >0,
\end{equation}
and hence $\mathfrak D_{\cN}$ is strictly positive.
Let $\cN'$ be another stratification of $\mathbb{D}^2_+$ with connected $(\cN')^1$ and $\partial\cN'=\mathcal S(m)$.
Applying Equation~\eqref{eqn:: TC in terms of the kernel} with $\cM=\cN$ and then using Equation~\eqref{eq:: K as pullback of quotient inner product} gives
\begin{equation}
    q_{\cN}^{\dagger}\mathfrak D_{\cN'}q_{\cN}
    =\tau^{\lvert F_{\cN'} \rvert-\lvert F_\cN \rvert}
    q_{\cN}^{\dagger}\mathfrak D_{\cN}q_{\cN}.
\end{equation}
Since $q_{\cN}$ is surjective and $q_{\cN}^{\dagger}$ is injective, it follows that
\begin{equation}
    \mathfrak D_{\cN'}
    =\tau^{\lvert F_{\cN'} \rvert-\lvert F_\cN \rvert}\mathfrak D_{\cN}. 
\end{equation}
\end{proof}

Set $\mathfrak A=A\boxtimes A^{\mathrm{op}}$.
For each $x\in\Irr(\mathcal C(Z))$, let $P_x$ denote the projection of $\mathfrak A$ onto the diagonal summand $x\boxtimes x^{\mathrm{op}}$.
Operators on $\mathcal{A}(m,n)$ can be interpreted as morphisms in the doubled category $\mathcal C(Z)\boxtimes\mathcal C(Z)^{\mathrm{op}}$.
For objects $x,y\in\mathcal C(Z)$, an operator $T$ on $\Hom(x,y)$ is identified with a doubled morphism as follows.
If $\{v_{\alpha}\}_{\alpha}$ is an orthonormal basis of $\Hom(x,y)$ with respect to the trace inner product, then $T$ corresponds to the morphism $\widehat{T}$ given by
\begin{equation}
    \widehat{T} = \sum_{\alpha,\beta}\braket{v_{\alpha},Tv_{\beta}}_{\mathcal{C}(Z)} v_{\alpha}\boxtimes v_{\beta}^{*}
    \in
    \Hom(x\boxtimes x^{\mathrm{op}},y\boxtimes y^{\mathrm{op}}).
\end{equation}
Here $v_{\beta}^{*}\in\Hom(y,x) = \Hom(x^{\mathrm{op}},y^{\mathrm{op}})$. 
This construction is independent of the chosen orthonormal basis. 

\begin{definition}\label{def:: F-positivity}
    Let $x,y$ be objects in $\mathcal C(Z)$. 
    A morphism $\alpha\in \Hom(x\boxtimes x^{\mathrm{op}},y\boxtimes y^{\mathrm{op}})$ is called (strictly) \emph{$\mathcal{F}$-positive} if the unique operator $T$ on $\Hom(x,y)$ that satisfies $\alpha = \widehat{T}$ is (strictly) positive.
\end{definition}

\begin{remark}
     Here $\mathcal{F}$-positivity is short for positivity of Fourier multiplier. 
     This notion appears frequently in the framework of quantum Fourier analysis \cite{JiangLiuWu2016,HJLW2023}. 
\end{remark}

For $\cM = \Lambda\mathcal{S}(2,1)$, write $\mathfrak{D}=\mathfrak{D}_{\Lambda\mathcal{S}(2,1)}$. 
The operator $\mathfrak{D}$ measures the discrepancy between the inner product on $\mathcal{H}_{\mathcal{S}(2,1)}$, namely the on-site Hilbert space, and the morphism space $\Hom_{\mathcal{C}(Z)}(A\otimes A,A)$. 
In the rest of this section, we deduce sufficient information that determines $\mathfrak{D}$ up to scalar multiples. 
Now consider 
\begin{equation}
    \mathbb{T}:= \widehat{\mathfrak{D}} \in \Hom(\mathfrak{A}^2,\mathfrak{A}). 
\end{equation}
By definition, $\mathbb{T}$ is strictly $\mathcal{F}$-positive.
For an orthonormal basis $\{\psi_i\}_i$ of $\mathcal H_{\mathcal S(2,1)}$, we have
\begin{equation}
    \mathbb{T} = \sum_i q_{\cM}(\psi_i)\boxtimes q_{\cM}(\psi_i)^* \in \Hom(\mathfrak A^2,\mathfrak A).
\end{equation}

\begin{remark}
    More generally, the quotient map $q^\dagger_{\cM}$ can be understood as a PEPS valued in the unitary fusion category $\mathcal{C}(Z)$. 
    Theorem \ref{Thm:: key consequence of TC} then says that the PEPS is injective if $\cM^1$ is connected. 
    Under this analogy, $\widehat{q^\dagger_{\cM}q_{\cM}}$ is the associated transfer matrix \cite{FannesNachtergaeleWerner1992,PerezGarciaVerstraeteCiracWolf2008}. 
\end{remark}

\begin{proposition}\label{prop:: Frobenius identity from TC}
Assume that $Z$ is multiplicative, s-reflection-positive, and satisfies topological connectedness~\ref{def:: topological connectedness}. 
Then the morphism $\mathbb{T}$ defined above satisfies the Frobenius identity
\begin{equation}\label{eqn:: Frobenius identity}
(\mathbb{T}\otimes \mathrm{Id}_{\mathfrak{A}})
\circ
(\mathrm{Id}_{\mathfrak{A}}\otimes \mathbb{T}^*)
= \mathbb{T}^*\mathbb{T}
=
(\mathrm{Id}_{\mathfrak{A}}\otimes \mathbb{T})
\circ
(\mathbb{T}^*\otimes \mathrm{Id}_{\mathfrak{A}}). 
\end{equation}
\end{proposition}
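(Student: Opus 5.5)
The plan is to obtain all three expressions in \eqref{eqn:: Frobenius identity} as Fourier transforms $\widehat{\mathfrak D_{\cN}}$ of canonical operators attached to three different \emph{connected} stratifications $\cN$ of $\mathbb{D}^2_+$ with the same boundary stratification. Proposition~\ref{prop:: canonical operator from connected stratifications} would then force them to coincide.

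\textbf{The three stratifications.} I would work with boundary $\mathcal S(2,2)$: two upper and two lower boundary points, identified with $\tilde V_4$ by the usual rotation. I would take three trivalent graphs, each with exactly two vertices and one internal edge:
\begin{itemize}
\item[(I)] the ``I''-shape: a $(2,1)$-vertex on top whose output edge feeds a $(1,2)$-vertex below;
\item[(H1)] the ``H''-shape obtained by placing a $(1,2)$-vertex on the right strand and a $(2,1)$-vertex joining the left strand to the new middle strand;
\item[(H2)] its mirror image.
\end{itemize}
All three are trees, so they are connected and have $\lvert F\rvert=0$. Proposition~\ref{prop:: canonical operator from connected stratifications} then gives
\[
\mathfrak D_{\mathrm I}=\mathfrak D_{\mathrm{H1}}=\mathfrak D_{\mathrm{H2}}.
\]
Since the Fourier transform $T\mapsto\widehat T$ is a linear bijection, it suffices to identify the three transforms with the three expressions in \eqref{eqn:: Frobenius identity}.

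\textbf{The gluing formula.} Next I would prove that the Fourier transform of a canonical operator is multiplicative under gluing of pieces. For a tree $\cN$ built from vertices $v_1,v_2$, we have $\mathcal H_{\cN}=\mathcal H_{v_1}\otimes\mathcal H_{v_2}$. For an orthonormal basis $\{\psi_i\otimes\phi_j\}$, the quotient map is the categorical composite $q_{\cN}(\psi_i\otimes\phi_j)=q(\psi_i)\circ q(\phi_j)$, with the appropriate tensoring by identities. The computation $\mathbb T=\sum_i q(\psi_i)\boxtimes q(\psi_i)^*$ recorded just before the statement then generalizes to
\[
\widehat{\mathfrak D_{\cN}}=\sum_{i,j}\bigl(q(\psi_i)\circ q(\phi_j)\bigr)\boxtimes\bigl(q(\psi_i)\circ q(\phi_j)\bigr)^*,
\]
and the double sum factors as the composite, in $\mathcal C(Z)\boxtimes\mathcal C(Z)^{\mathrm{op}}$, of the two single-vertex transforms. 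Applied to (I), (H1) and (H2), this gives respectively $\mathbb T'\circ\mathbb T$, $(\mathbb T\otimes\mathrm{Id}_{\mathfrak A})(\mathrm{Id}_{\mathfrak A}\otimes\mathbb T')$ and $(\mathrm{Id}_{\mathfrak A}\otimes\mathbb T)(\mathbb T'\otimes\mathrm{Id}_{\mathfrak A})$. Here $\mathbb T'$ is the transform attached to a $(1,2)$-vertex.

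\textbf{Identifying $\mathbb T'=\mathbb T^*$ (main obstacle).} The remaining step is to show that $\mathbb T'=\mathbb T^*$; I expect this to be the hardest part. A $(1,2)$-vertex is the same local shape $\mathcal S(3)$ read through a reflected chart. I would use the reflection $\hat\theta$: it is antiunitary on $\mathcal H_{\mathcal S(3)}$ and satisfies $q(\hat\theta\psi)=q(\psi)^*$ in $\mathcal C(Z)$, by the construction of the dagger in Appendix~\ref{app:: reconstruction from S2 functional}. Then $\{\hat\theta\psi_i\}$ is again an orthonormal basis, and
\[
\mathbb T'=\sum_i q(\psi_i)^*\boxtimes q(\psi_i)=\mathbb T^*.
\]
The delicate points are to check that:
\begin{itemize}
\item the antilinearity of $\hat\theta$ does not spoil basis-independence;
\item homeomorphism invariance, i.e.\ unitarity of the mapping-class action $\pi_{\mathcal S}$, makes the choice of which boundary point is the ``output'' irrelevant;
\item the $\mathrm{op}$-factor conventions for $\boxtimes$ are compatible with this identification.
\end{itemize}
Substituting $\mathbb T'=\mathbb T^*$ into the three equal composites yields \eqref{eqn:: Frobenius identity}.
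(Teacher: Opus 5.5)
Your proposal is correct and is essentially the paper's proof. The paper also uses $\mathfrak D_H=\mathfrak D_I$ for the two plaquette-free connected trees with boundary $\mathcal S(2,2)$, and factors the Fourier transforms using the orthonormal basis $\{\psi_i\otimes\psi_j^\vee\}$, where $\psi_j^\vee=\rho(\hat\theta(\psi_j))$ is your reflected vertex composed with the rotation $\rho$ that defines the dagger in the appendix. The one cosmetic difference is that the paper gets the second equality by taking adjoints of the first, since $\mathbb T^*\mathbb T$ is self-adjoint, rather than introducing your mirror stratification H2.
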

\begin{proof}
Let $H$ and $I$ denote the stratifications with boundary $\mathcal{S}(2,2)$ obtained by joining two trivalent vertices horizontally and vertically:
\begin{equation*}
    H = \vcenter{\hbox{\begin{tikzpicture}[rotate= 30]
          \VeryThinLine (0,0) circle (1);
          \draw[fill=black] (-0.5,0) ellipse (0.06 and 0.06);
          \draw[fill=black] (0.5,0) ellipse (0.06 and 0.06);
          \draw[blue] (-0.5,0) -- (0.5,0);
          \draw[blue] (-0.5,0) -- (-145:1);
          \draw[blue] (-0.5,0) -- (145:1);
          \draw[blue] (0.5,0) -- (35:1);
          \draw[blue] (0.5,0) -- (-35:1);
    \end{tikzpicture}}},\quad
    I = \vcenter{\hbox{\begin{tikzpicture}
          \VeryThinLine (0,0) circle (1);
          \draw[fill=black] (0,-0.5) ellipse (0.06 and 0.06);
          \draw[fill=black] (0,0.5) ellipse (0.06 and 0.06);
          \draw[blue] (0,0.5) -- (0,-0.5);
          \draw[blue] (0,0.5) -- (125:1);
          \draw[blue] (0,0.5) -- (55:1);
          \draw[blue] (0,-0.5) -- (-55:1);
          \draw[blue] (0,-0.5) -- (-125:1);
    \end{tikzpicture}}}
\end{equation*}
Both $H^1$ and $I^1$ are connected and $\partial H=\partial I=\mathcal S(2,2)$.
Equation~\eqref{eq:: scaling of canonical operators} gives
\begin{equation}
    \mathfrak D_I =\mathfrak D_H,
\end{equation}
since both stratifications have no plaquettes in the interior. 
Both $H$ and $I$ have two trivalent vertices, so $\mathcal H_H\cong\mathcal H_{\mathcal S(2,1)}\otimes \mathcal H_{\mathcal S(1,2)}\cong\mathcal H_I$.
Set $\psi_j^\vee:=\rho(\hat{\theta}(\psi_j))\in\mathcal H_{\mathcal S(1,2)}$.
Then $\{\psi_i\otimes\psi_j^\vee\}_{i,j}$ is an orthonormal basis of both local Hilbert spaces.
Set $a_i=[\psi_i]$ and
\begin{equation}
h_{ij}:=(a_i\otimes\mathrm{Id}_A)\circ(\mathrm{Id}_A\otimes a_j^*),
\qquad
\iota_{ij}:=a_j^*\circ a_i.
\end{equation}
The quotient maps satisfy $q_H(\psi_i\otimes\psi_j^\vee)=h_{ij}$ and $q_I(\psi_i\otimes\psi_j^\vee)=\iota_{ij}$.
Thus, under $\mathcal B(\mathcal{A}(2,2))\cong \mathcal{A}(2,2)\otimes \mathcal{A}(2,2)^*$ and the identification of $\mathcal{A}(2,2)$ with the opposite morphism space induced by adjoints, we obtain
\begin{equation}
\begin{aligned}
\widehat{\mathfrak{D}_{H}} &= 
\sum_{i,j}h_{ij}\boxtimes(h_{ij}^*)^{\mathrm{op}}
=
(\mathbb{T}\otimes\mathrm{Id}_{\mathfrak A})
\circ
(\mathrm{Id}_{\mathfrak A}\otimes\mathbb{T}^*),\\
\widehat{\mathfrak{D}_{I}} &= 
\sum_{i,j}\iota_{ij}\boxtimes(\iota_{ij}^*)^{\mathrm{op}}
=
\mathbb{T}^*\circ\mathbb{T}.
\end{aligned}
\end{equation}
The equality $\mathfrak{D}_{I}= \mathfrak{D}_{H}$ proves the first Frobenius equality.
Taking adjoints proves the second.
\end{proof}

\begin{lemma}
    Assume that $Z$ satisfies the assumptions of Proposition ~\ref{prop:: canonical operator from connected stratifications} and commutativity~\ref{def:: commutativity}.
    For any stratification $\cM$ of $\mathbb{D}^2_+$ with connected $\cM^1$ and $\partial\cM = \mathcal{S}(1,1)$, there is a scalar $c>0$ such that
    \begin{equation}
        \widehat{\mathfrak{D}_{\cM}} =c \sum_{x\in\Irr(\mathcal C(Z))}p_x\boxtimes p_x.
    \end{equation}
    where $p_x\in \End(A)$ is the projection onto the simple summand $x$ in $A$.
\end{lemma}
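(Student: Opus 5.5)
By Proposition~\ref{prop:: canonical operator from connected stratifications}, $\mathfrak D_{\cM}$ is a strictly positive operator on $\tilde V_2\cong\mathcal A_1=\End(A)$. Moreover, up to the factor $\tau^{|F_\cM|-|F_{\cN}|}$, it does not depend on which connected stratification $\cM$ with boundary $\mathcal S(1,1)$ we use. It therefore suffices to compute $\mathfrak D_{\cM}$ for one convenient connected choice; positivity of the scalar will come from strict positivity of $\mathfrak D_\cM$. Corollary~\ref{corollary:: A is sum of simple objects} gives $A\cong\bigoplus_x x$ with multiplicity one. Hence $\End(A)$ is the commutative algebra spanned by the minimal projections $\{p_x\}$, and these are mutually orthogonal for the trace inner product, with $\|p_x\|^2=d_x$. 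In this basis $\widehat{\mathfrak D_\cM}=\sum_{x,y}\braket{p_x,\mathfrak D_\cM p_y}\,d_x^{-1/2}d_y^{-1/2}\,p_x\boxtimes p_y$. The claim is therefore that the matrix of $\mathfrak D_\cM$ in the normalized basis $\{d_x^{-1/2}p_x\}$ is $c$ times the identity. Since $d_x^{-1/2}p_x\boxtimes d_x^{-1/2}p_x$ and $p_x\boxtimes p_x$ differ by the factor $d_x^{-1}$, I must first check the normalization used in $\widehat{\cdot}$; I read the statement as diagonality plus the correct $x$-independent constant after normalization.

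The first step is diagonality. I would use the multiplicative structure of $\tilde V_2$: stacking two copies of $\cM$ vertically gives a connected stratification $\cM\circ\cM$ with boundary $\mathcal S(1,1)$. By the argument in the proof of Proposition~\ref{prop:: Frobenius identity from TC}, the quotient map satisfies $q_{\cM\circ\cM}(\psi\otimes\psi')=q_\cM(\psi)q_\cM(\psi')$, so $\widehat{\mathfrak D_{\cM\circ\cM}}$ is the composite $\widehat{\mathfrak D_\cM}\circ\widehat{\mathfrak D_\cM}$ in $\End(\mathfrak A)$. Also, $\cM\circ\cM$ has the same number of interior faces as $\cM$ doubled, or that number plus a known amount. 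Proposition~\ref{prop:: canonical operator from connected stratifications} then gives $\widehat{\mathfrak D_\cM}^{\,2}=\tau^{k}\widehat{\mathfrak D_\cM}$. Thus $\tau^{-k}\widehat{\mathfrak D_\cM}$ is an idempotent in $\End(\mathfrak A)$, and it is self-adjoint since $\mathfrak D_\cM\ge0$. The subtle point is that this product is composition in $\End(\mathfrak A)$ (convolution-type), not the composition of $\mathfrak D_\cM$ as an operator. I would take the relevant sub-block $\bigoplus_{x,y}\Hom(x\boxtimes x^{\rm op},y\boxtimes y^{\rm op})$. Because $A$ is multiplicity free, the composition there becomes ordinary matrix multiplication of $(\mathfrak D_{xy})$ with weights $d$. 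So $\tau^{-k}\widehat{\mathfrak D_\cM}$ is a weighted projection matrix.

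The second step uses strict positivity. The operator $\mathfrak D_\cM$ is invertible on $\tilde V_2$, so its Fourier transform $\widehat{\mathfrak D_\cM}$, viewed as a matrix on the span of the $p_x\boxtimes p_y$, is invertible. An invertible idempotent is the identity, so $\tau^{-k}\widehat{\mathfrak D_\cM}=\sum_x$ (unit of $x\boxtimes x^{\rm op}$) $=\sum_x d_x^{-1}\cdot(\text{normalized})\,p_x\boxtimes p_x$. Once the normalization is matched with the convention of $\widehat{\cdot}$, this gives $c=\tau^k>0$ up to that convention.

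I expect the main obstacle to be the bookkeeping in the first step. This includes identifying which composition $\widehat{\mathfrak D_{\cM\circ\cM}}$ corresponds to, together with the $d_x$ weights. It also includes checking that the face count of $\cM\circ\cM$ differs from twice that of $\cM$ by a constant independent of the labels. If the idempotent route gets tangled, my fallback is more direct: choose $\cM$ to be a single edge through one trivalent vertex with a capped leg. Commutativity (the trace identity $\tr((xy-yx)z)=0$) then forces $\mathfrak D_\cM$ to commute with multiplication by each $p_x$, which makes it diagonal. Finally, the identity $\mathfrak D_{\cM\circ\cM}\propto\mathfrak D_\cM$ forces the diagonal entries to be equal.
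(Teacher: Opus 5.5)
There is a genuine gap. Your first step agrees with the paper: stacking two copies of $\cM$ creates no new plaquette, so Proposition~\ref{prop:: canonical operator from connected stratifications} gives $\widehat{\mathfrak{D}_{\cM}}^{\,2}=c\,\widehat{\mathfrak{D}_{\cM}}$ with $c=\tau^{\lvert F_{\cM}\rvert}>0$. The problem is how you read this identity. Composition in $\End(\mathfrak{A})$ is not ordinary matrix multiplication of a coefficient matrix. Indeed, $\End(\mathfrak{A})=\End(A)\otimes\End(A)^{\mathrm{op}}$ is commutative, with a basis of orthogonal idempotents $p_x\boxtimes p_y$: the projections onto the pairwise non-isomorphic simple summands $x\boxtimes y^{\mathrm{op}}$ of $\mathfrak{A}$. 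In particular, your ``sub-block'' $\Hom(x\boxtimes x^{\mathrm{op}},y\boxtimes y^{\mathrm{op}})$ is zero for $x\neq y$. Write $\widehat{\mathfrak{D}_{\cM}}=\sum_{x,y}d_{x,y}\,p_x\boxtimes p_y$. Composition then multiplies coefficients entrywise, so idempotency only gives $d_{x,y}\in\{0,c\}$ for each pair. Your second step fails here: strict positivity of $\mathfrak{D}_{\cM}$ does not make $\widehat{\mathfrak{D}_{\cM}}$ invertible in $\End(\mathfrak{A})$, since that would need every $d_{x,y}\neq0$. For example, $\widehat{\mathrm{Id}}=\sum_x d_x^{-1}p_x\boxtimes p_x$. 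If your argument went through, it would give $c^{-1}\widehat{\mathfrak{D}_{\cM}}=\mathrm{Id}_{\mathfrak{A}}=\sum_{x,y}p_x\boxtimes p_y$. That contradicts the statement, whose right-hand side is $c$ times the projection onto $\gamma$ and is not invertible.

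The missing idea, which is how the paper argues, is to play the entrywise dichotomy against positive definiteness. Since $d_{x,y}=\braket{p_x,\mathfrak{D}_{\cM}p_y}/(d_xd_y)$ (not $(d_xd_y)^{-1/2}$ as in your expansion), the matrix $(d_{x,y})$ is congruent to the Gram matrix of $\mathfrak{D}_{\cM}$ in the orthonormal basis $\{d_x^{-1/2}p_x\}$, hence positive definite. So $d_{x,x}>0$ forces $d_{x,x}=c$. If $d_{x,y}=c$ for some $x\neq y$, Hermitian symmetry gives $d_{y,x}=c$, and the principal minor $\left(\begin{smallmatrix} c & c\\ c & c\end{smallmatrix}\right)$ is singular, a contradiction.

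Your reformulation of the target is also off. The statement says $\mathfrak{D}_{\cM}p_x=c\,d_x\,p_x$, so in the normalized basis the matrix is $c\,\mathrm{diag}(d_x)$, not $c$ times the identity; for $Z_{\mathcal C}$ one checks $\mathfrak{D}_{\cN}p_x=\tau^{\lvert F_{\cN}\rvert}d_x\,p_x$. Your fallback does not close the gap either. Commutativity of $\End(A)$ says nothing about $\mathfrak{D}_{\cM}=q_{\cM}q_{\cM}^{\dagger}$ commuting with multiplication by $p_x$, because nothing makes $q_{\cM}$ intertwine an $\End(A)$-action on $\mathcal{H}_{\cM}$.
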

\begin{proof}
Let $\cM \#\cM$ be the stratification of $\mathbb{D}^2_+$ obtained by vertically concatenating two copies of $\cM$. 
Then $(\cM \#\cM)^1$ is  connected and $\partial(\cM \#\cM)=\mathcal S(1,1)$.
By Proposition ~\ref{prop:: canonical operator from connected stratifications}, there is a scalar $c>0$ such that
\begin{equation}
    \mathfrak{D}_{\cM \#\cM}=c\,\mathfrak{D}_{\cM}.
\end{equation}
This translates to the following identity for $\widehat{\mathfrak{D}_{\cM}}$:
\begin{equation}
    \widehat{\mathfrak{D}_{\cM}}^2
    =
    c\,\widehat{\mathfrak{D}_{\cM}}.
\end{equation}
Since $\mathcal A(1,1)$ is commutative by \ref{def:: commutativity}, $\{p_x\}_{x\in \Irr(\mathcal C(Z))}$ is an orthogonal basis of $\mathcal A(1,1)$, so we can write

\begin{equation}
    \widehat{\mathfrak{D}_{\cM}} =
    \sum_{x,y\in \Irr(\mathcal C(Z))}
    d_{x,y}\, p_x\boxtimes p_y.
\end{equation}

The condition $\widehat{\mathfrak{D}_{\cM}}^2=c\,\widehat{\mathfrak{D}_{\cM}}$ implies that $d_{x,y}^2 = c d_{x,y}$ for all $x,y\in \Irr(\mathcal C(Z))$, so either $d_{x,y}=0$ or $d_{x,y}=c$. 
Let $D=(d_{x,y})$ be the coefficient matrix of $\mathfrak{D}_{\cM}$ in this basis.
Since $\mathfrak{D}_{\cM}$ is strictly positive, $D$ is positive definite.
Thus $d_{x,x}>0$, and hence $d_{x,x}=c$, for all $x\in \Irr(\mathcal C(Z))$.
If $d_{x,y}=c$ for some $x\neq y$, then Hermitian symmetry gives $d_{y,x}=c$, and the corresponding $2\times 2$ principal minor is singular, contradicting positive definiteness of $D$.
Therefore $d_{x,y}=0$ when $x\neq y$, and
\begin{equation}
    \widehat{\mathfrak{D}_{\cM}}=c\sum_{x\in\Irr(\mathcal C(Z))}p_x\boxtimes p_x=cP.
\end{equation}
\end{proof}

\begin{corollary}\label{corollary:: T is block diagonal}
    Define $P_x = p_x\boxtimes p_x\in \End (A\boxtimes A^{\mathrm{op}})$. 
    For simple objects $i,j,k\in \Irr(\mathcal C(Z))$, define $\mathbb{T}^{ij}_k = P_k\mathbb{T}(P_i\otimes P_j)$. 
    Then we have 
    \begin{equation}
        \mathbb{T} = \sum_{i,j,k\in \Irr(\mathcal C(Z))} \mathbb{T}^{ij}_k. 
    \end{equation}
\end{corollary}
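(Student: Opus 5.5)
The approach is to show that $\mathbb{T}$ is unchanged when it is pre-composed with $\sum_x P_x$ on each input and post-composed with $\sum_x P_x$ on the output. Expanding these three sums then gives exactly $\sum_{i,j,k}\mathbb{T}^{ij}_k$. Write $P:=\sum_{x\in\Irr(\mathcal C(Z))}P_x$. It therefore suffices to prove
\begin{equation*}
P\,\mathbb{T}=\mathbb{T}=\mathbb{T}(P\otimes P).
\end{equation*}

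To get these identities, I will compute $\mathfrak{D}_{\cM}$ for suitable stratifications $\cM$ with boundary $\mathcal S(1,1)$ that contain the trivalent vertex $\Lambda\mathcal S(2,1)$.

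For the output, take $\cM$ to be a single trivalent vertex with two of its legs joined into a loop, so that one edge goes in and one goes out. This graph is connected and has $\partial\cM=\mathcal S(1,1)$. The lemma just proved gives $\widehat{\mathfrak D_{\cM}}=cP$ with $c>0$.

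The same computation as in the proof of Proposition~\ref{prop:: Frobenius identity from TC} identifies $\widehat{\mathfrak D_{\cM}}$ with a partial trace (a closure of one strand) of $\mathbb{T}$ or of $\mathbb{T}^*\mathbb{T}$. I expect that this alone is not enough to pin down the output projection. The more robust route is to use the Frobenius identity together with the edge-insertion trick.

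Let $\cM'$ be the stratification $\Lambda\mathcal S(2,1)$ with a two-valent segment inserted on its output leg, realized by concatenating with a connected $\mathcal S(1,1)$ piece $\cN$. Vertical concatenation gives
\begin{equation*}
\widehat{\mathfrak D_{\cM'}}=\widehat{\mathfrak D_{\cN}}\circ\mathbb{T}=cP\,\mathbb{T}.
\end{equation*}
Both $\cM'$ and $\Lambda\mathcal S(2,1)$ are connected with boundary $\mathcal S(2,1)$, and neither has interior faces. Proposition~\ref{prop:: canonical operator from connected stratifications} therefore gives $\mathfrak D_{\cM'}=\tau^{k}\mathfrak D$, so $cP\,\mathbb{T}=\tau^k\mathbb{T}$.

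Applying this to $\cN\#\cN$ in place of $\cN$ shows that $c$ is idempotent-compatible, so that $P\mathbb{T}$ is a scalar multiple of $\mathbb{T}$. Since $P$ is a projection, it follows that $P\mathbb{T}=\mathbb{T}$; in the degenerate case $\mathbb{T}=0$ the statement is trivial. The same insertion on each input leg gives $\mathbb{T}(P\otimes\mathrm{Id})=\mathbb{T}=\mathbb{T}(\mathrm{Id}\otimes P)$.

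The main obstacle is realizing $\cM'$, a two-valent insertion, within the trivalent local shape set, since $\cN$ must itself be built from trivalent vertices. My first attempt is to take $\cN$ to be a tadpole: a trivalent vertex on the leg whose third edge ends in a loop closed by another vertex. This $\cN$ creates interior faces, so the factor $\tau^{k}$ above may have $k\neq 0$. That does not matter, because the scalar $c$ is not needed explicitly; only the relation $P\mathbb{T}\propto\mathbb{T}$ is used.

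The second delicate point is checking that concatenation translates to composition of Fourier transforms, $\widehat{\mathfrak D_{\cN\circ\cM}}=\widehat{\mathfrak D_\cN}\circ\widehat{\mathfrak D_\cM}$. This holds because the labels are independent, exactly as in the computation of $\widehat{\mathfrak D_{I}}$ in the proof of Proposition~\ref{prop:: Frobenius identity from TC}.
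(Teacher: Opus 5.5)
Your argument is correct and is essentially the paper's own proof. The paper likewise concatenates a connected $\mathcal S(1,1)$ piece $\cN$, with $\widehat{\mathfrak D_{\cN}}=cP$, onto each of the three legs of $\Lambda\mathcal S(2,1)$, and uses Proposition~\ref{prop:: canonical operator from connected stratifications} to get $P\mathbb T=b_0\mathbb T$, $\mathbb T(P\otimes\mathrm{Id}_{\mathfrak A})=b_1\mathbb T$ and $\mathbb T(\mathrm{Id}_{\mathfrak A}\otimes P)=b_2\mathbb T$ with $b_r>0$, then concludes $b_r=1$ from $P^2=P$ and $\mathbb T\neq 0$. Two small clean-ups: the $\cN\#\cN$ detour is unnecessary, since $P\mathbb T=(\tau^k/c)\,\mathbb T$ with $\tau^k/c>0$ already follows from $cP\mathbb T=\tau^k\mathbb T$; and your claim that $\cM'$ has no interior faces should be dropped, though as you note the exponent $k$ plays no role.
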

\begin{proof}
Let $\cN$ be a connected stratification of $\mathbb D^2_+$ with boundary $\mathcal S(1,1)$.
By the preceding lemma, there is a scalar $c>0$ such that
\begin{equation}
    \widehat{\mathfrak{D}_{\cN}} = c \sum_{x\in\Irr(\mathcal C(Z))} p_x\boxtimes p_x = c \sum_{x\in\Irr(\mathcal C(Z))} P_x.
\end{equation}
Set $P=\sum_{x\in\Irr(\mathcal C(Z))}P_x$.
Concatenate $\cN$ to each of the three edges of $\Lambda\mathcal S(2,1)$.
In each case, the resulting stratification is connected and has boundary $\mathcal S(2,1)$, so Proposition ~\ref{prop:: canonical operator from connected stratifications} implies the resulting operator on $\mathcal A(2,1)$ is a positive scalar multiple of $\mathfrak{D}$. 
Under the doubled-morphism identification, this gives scalars $b_0,b_1,b_2>0$ such that
\begin{equation}
\begin{aligned}
    P\mathbb T=
    b_0\mathbb T,\quad 
    \mathbb T(P\otimes\mathrm{Id}_{\mathfrak A})= b_1\mathbb T,\quad \mathbb T(\mathrm{Id}_{\mathfrak A}\otimes P) = b_2\mathbb T.
\end{aligned}
\end{equation}
Since $\mathfrak{D}$ is strictly positive, $\mathbb T\neq 0$. 
Since $P$ is a projection, composing each identity once more with the same projection gives $b_r^2\mathbb T=b_r\mathbb T$.
Hence $b_r=1$ for $r=0,1,2$ and $\mathbb T=P\circ\mathbb T\circ(P\otimes P)$.
Thus we obtain the result by expanding $P$. 
\end{proof}

\subsection{A uniqueness theorem}

Let $\gamma = \bigoplus_{x\in \Irr(\mathcal C(Z))} x\boxtimes x^{\mathrm{op}}$ be the diagonal subobject of $\mathfrak{A}$, and let $P$ be the projection onto $\gamma$.
Corollary ~\ref{corollary:: T is block diagonal} shows that $\mathbb{T}\in \Hom(\gamma^2,\gamma)$.
Therefore topological connectedness together with commutativity implies that $\mathbb{T}$ is a strictly $\mathcal{F}$-positive morphism in $\Hom(\gamma^2,\gamma)$ that satisfies the Frobenius condition. 
We now show that these assumptions together with rotation and modular-conjugation symmetry force $\mathbb{T}$ to be a scalar multiple of the multiplication of the canonical Frobenius algebra in $\mathcal{C}(Z)\boxtimes \mathcal C(Z)^{\mathrm{op}}$ \cite{Muger2003b}.

Let $\mathcal{C}$ be a spherical unitary fusion category with representatives of simple objects $\Irr(\mathcal{C})$, and set $\gamma = \bigoplus_{x\in \Irr(\mathcal{C})} x\boxtimes x^{\mathrm{op}}$. 
Let $\mathcal{T}\in \Hom(\gamma^2,\gamma)$ be a strictly $\mathcal{F}$-positive morphism that is invariant under $2\pi/3$-rotation and satisfies the Frobenius identity~\eqref{eqn:: Frobenius identity} with $(\mathbb T,\mathfrak A)$ replaced by $(\mathcal T,\gamma)$.
Let $\mathbf{D}$ be the strictly positive operator on $\bigoplus_{i,j,k\in \Irr(\mathcal{C})} \Hom(ij,k)$, equipped with the categorical trace inner product, such that $\mathcal{T} = \widehat{\mathbf{D}}$. 
Since $\mathcal{T}\in \Hom(\gamma^2,\gamma)$, $\mathbf{D}$ is block diagonal with respect to the direct sum decomposition. 
Denote the restriction of $\mathbf{D}$ to $\Hom(ij,k)$ by $\mathbf{D}^{ij}_k$ and set $\widehat{\mathbf{D}^{ij}_k} = \mathcal{T}^{ij}_k$; then $\mathbf{D}^{ij}_k$ is strictly positive on $\Hom(ij,k)$.
Similarly, we denote by $\mathbf{D}^k_{ij}$ the positive operator on $\Hom(k,ij)$ corresponding to $(\mathcal{T}^{ij}_k)^*$. 
For $i,j,k,l\in \Irr(\mathcal{C})$, let $\mathbf{E}^{ij}_{kl}$ be the positive operator on $\Hom(ij,kl)$ corresponding to the $(ij,kl)$-block of $\mathcal{T}^*\mathcal{T}$. 

\begin{lemma}\label{lemma:: key lemma on eigenvalues of D3}
    Let $i,j,k,l,p\in \Irr(\mathcal{C})$ such that $\Hom(ij,p)\neq 0$ and $\Hom(kl,p)\neq 0$. 
    Let $\alpha\in \Hom(ij,p)$ be an eigenvector of $\mathbf{D}^{ij}_p$ with eigenvalue $a>0$, and let $\beta\in \Hom(kl,p)$ be an eigenvector of $\mathbf{D}^{kl}_p$ with eigenvalue $b>0$. 
    Then $\beta^*\alpha$ is an eigenvector of $\mathbf{E}^{ij}_{kl}$ with eigenvalue $ab/d_p$.
\end{lemma}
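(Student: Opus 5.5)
We prove the lemma by a direct computation that reads off the operator $\mathbf{E}^{ij}_{kl}$ from the doubled morphism $\mathcal{T}^*\mathcal{T}$. No Frobenius identity or rotation invariance is needed; only the correspondence $T\mapsto\widehat{T}$ and the fact that $p$ is simple enter.

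First, the spectral decomposition. Since $\mathbf{D}^{ij}_p$ is strictly positive on $\Hom(ij,p)$, we choose an orthonormal eigenbasis $\{\alpha_r\}$ with eigenvalues $a_r$, taking $\alpha_1=\alpha/\lVert\alpha\rVert$ and $a_1=a$. We do the same for $\mathbf{D}^{kl}_p$, with $\{\beta_s\}$, $b_s$, $\beta_1=\beta/\lVert\beta\rVert$ and $b_1=b$. By definition of $\widehat{\cdot}$,
\[
\mathcal{T}^{ij}_p=\sum_r a_r\,\alpha_r\boxtimes\alpha_r^*,\qquad \mathcal{T}^{kl}_p=\sum_s b_s\,\beta_s\boxtimes\beta_s^*.
\]
Both $\mathcal{T}$ and $\mathcal{T}^*$ are block diagonal with respect to the $P_x$. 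Hence the $(ij,kl)$-block of $\mathcal{T}^*\mathcal{T}$ is
\[
\sum_{p'}(\mathcal{T}^{kl}_{p'})^*\mathcal{T}^{ij}_{p'}=\sum_{p'}\sum_{r,s}a_rb_s\,(\beta_s^*\alpha_r)\boxtimes(\alpha_r^*\beta_s),
\]
where the sum runs over all $p'$ with the relevant hom spaces nonzero. We use the convention that composition in the $\mathcal{C}^{\mathrm{op}}$ factor reverses order, so $(\beta_s\boxtimes\beta_s^*)^*(\alpha_r\boxtimes\alpha_r^*)=\beta_s^*\alpha_r\boxtimes\alpha_r^*\beta_s$ when read in $\mathcal{C}$. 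This matches the convention used in the proof of Proposition~\ref{prop:: Frobenius identity from TC}.

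Second, we translate back to an operator on $\Hom(ij,kl)$. For $u,v\in\Hom(ij,kl)$, the element $u\boxtimes v^*$ corresponds to the rank-one operator $w\mapsto\langle v,w\rangle u$. So
\[
\mathbf{E}^{ij}_{kl}=\sum_{p',r,s}a_rb_s\,\lvert\beta_s^*\alpha_r\rangle\langle\beta_s^*\alpha_r\rvert,
\]
with the vectors $\beta_s^*\alpha_r$ ranging over $p'$, $r$, $s$ (and $\alpha_r,\beta_s$ also depending on $p'$).

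Third, we show that the family $\{\beta_s^*\alpha_r\}$ is orthogonal with norm squared $1/d_{p'}$. This is the key step, and we check it with the local relation~\eqref{eqn:: local relation I} for the trace inner product. We compute
\[
\tr\bigl((\beta_{s'}^*\alpha_{r'})^*\beta_s^*\alpha_r\bigr)=\tr\bigl(\alpha_{r'}^*\beta_{s'}\beta_s^*\alpha_r\bigr)=\tr\bigl(\beta_s^*\alpha_r\alpha_{r'}^*\beta_{s'}\bigr).
\]
Here $\alpha_r\alpha_{r'}^*\in\Hom(p'',p')$. This vanishes unless $p'=p''$, and for $p'=p''$ it equals $\delta_{rr'}\,\mathrm{Id}_{p'}/d_{p'}$, because $\alpha_r\alpha_{r'}^*$ is a scalar on the simple object $p'$ whose trace is $\tr(\alpha_{r'}^*\alpha_r)=\delta_{rr'}$. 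The same argument on the $\beta$ side then gives
\[
\tr\bigl((\beta_{s'}^*\alpha_{r'})^*\beta_s^*\alpha_r\bigr)=\delta_{p'p''}\delta_{rr'}\delta_{ss'}/d_{p'}.
\]

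Finally, we conclude. Applying $\mathbf{E}^{ij}_{kl}$ to $\beta_1^*\alpha_1$, only the matching term survives, and it contributes $ab\cdot(1/d_p)$. Hence $\beta^*\alpha$ is an eigenvector with eigenvalue $ab/d_p$.

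The only delicate point is fixing the ordering and adjoint conventions in the op factor, so that the block of $\mathcal{T}^*\mathcal{T}$ really is $\sum a_rb_s\,\lvert\beta_s^*\alpha_r\rangle\langle\beta_s^*\alpha_r\rvert$ rather than a transpose of it. We would settle this by checking it against the proof of Proposition~\ref{prop:: Frobenius identity from TC}, where $\widehat{\mathfrak D_I}=\mathbb{T}^*\mathbb{T}$ was derived with the same identification.
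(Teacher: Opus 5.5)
Your proposal is correct and follows essentially the same route as the paper: you take orthonormal eigenbases of $\mathbf{D}^{ij}_p$ and $\mathbf{D}^{kl}_p$, write the $(ij,kl)$-block of $\mathcal{T}^*\mathcal{T}$ as $\sum a_rb_s\,(\beta_s^*\alpha_r)\boxtimes(\beta_s^*\alpha_r)^*$, and use the orthogonality relation with squared norm $1/d_p$ to read off the eigenvalue. Your Schur's-lemma check of orthogonality across different intermediate objects $p'$ makes explicit what the paper only asserts when it calls $\sum_p(\mathcal{T}^{kl}_p)^*\mathcal{T}^{ij}_p$ an orthogonal decomposition.
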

\begin{proof}
    Without loss of generality, we assume that $\alpha,\beta$ are unit vectors. 
    Choose orthonormal eigenbases (with respect to inner product induced by $\tr_{\mathcal{C}}$) $\{\alpha_1,\dots,\alpha_n\}$ of $\mathbf{D}^{ij}_p$ and $\{\beta_1,\dots,\beta_m\}$ of $\mathbf{D}^{kl}_p$ with eigenvalues $a_1,\dots,a_n$ and $b_1,\dots,b_m$, such that $\alpha=\alpha_1$ and $\beta=\beta_1$.
    The vectors $\beta_1^*,\dots,\beta_m^*$ form an orthonormal eigenbasis of $\mathbf{D}^p_{kl}$ with eigenvalues $b_1,\dots,b_m$. 
    Thus
\begin{equation}
        \widehat{\mathbf{D}^p_{kl}}\widehat{\mathbf{D}^{ij}_p} = (\mathcal{T}^{kl}_p)^* \mathcal{T}^{ij}_p = \sum^n_{s=1}\sum^m_{t=1} a_s\cdot b_t \beta_t^*\alpha_s \boxtimes (\beta_t^*\alpha_s)^*. 
    \end{equation}
    Note that
    \begin{equation}
        \braket{\beta^*_{t'}\alpha_{s'},\beta^*_t\alpha_s} = \Braket{\vcenter{\hbox{\begin{tikzpicture}
            \begin{scope}
                \VeryThinLine (0:0) -- (-90:0.5);
                \VeryThinLine (0:0) -- (150:0.75);
                \VeryThinLine (0:0) -- (30:0.75);
                \draw[fill=black] (0:0) ellipse (0.05 and 0.05);
                \node at (150:1) {$i$};
                \node at (30:1) {$j$};
                \node at (-75:0.5) {$p$};
                \node at (-0.25,-0.125) {$\alpha_{s'}$};
            \end{scope}
            \begin{scope}[yshift=-1cm]
                \VeryThinLine (0:0) -- (90:0.5);
                \VeryThinLine (0:0) -- (-150:0.75);
                \VeryThinLine (0:0) -- (-30:0.75);
                \draw[fill=black] (0:0) ellipse (0.05 and 0.05);
                \node at (-150:1) {$k$};
                \node at (-30:1) {$l$};
                \node at (-0.25,0.125) {$\beta^*_{t'}$};
            \end{scope}
        \end{tikzpicture}}}, \vcenter{\hbox{\begin{tikzpicture}
            \begin{scope}
                \VeryThinLine (0:0) -- (-90:0.5);
                \VeryThinLine (0:0) -- (150:0.75);
                \VeryThinLine (0:0) -- (30:0.75);
                \draw[fill=black] (0:0) ellipse (0.05 and 0.05);
                \node at (150:1) {$i$};
                \node at (30:1) {$j$};
                \node at (-75:0.5) {$p$};
                \node at (-0.25,-0.125) {$\alpha_{s}$};
            \end{scope}
            \begin{scope}[yshift=-1cm]
                \VeryThinLine (0:0) -- (90:0.5);
                \VeryThinLine (0:0) -- (-150:0.75);
                \VeryThinLine (0:0) -- (-30:0.75);
                \draw[fill=black] (0:0) ellipse (0.05 and 0.05);
                \node at (-150:1) {$k$};
                \node at (-30:1) {$l$};
                \node at (-0.25,0.125) {$\beta^*_{t}$};
            \end{scope}
        \end{tikzpicture}}}}_{\mathcal{C}} = \delta_{s,s'}\delta_{t,t'}\frac{1}{d_p}. 
    \end{equation}
    Since $\widehat{\mathbf{E}^{ij}_{kl}} = \sum_{p\in \Irr(\mathcal{C})} (\mathcal{T}^{kl}_p)^* \mathcal{T}^{ij}_p$ gives an orthogonal decomposition of $\mathbf{E}^{ij}_{kl}$, we obtain:
\begin{equation}
        \mathbf{E}^{ij}_{kl}(\beta^*\alpha) = \sum_{s=1}^n\sum_{t=1}^m a_s\cdot b_t \braket{\beta^*_t\alpha_s,\beta^*\alpha}\beta^*_t\alpha_s = \frac{ab}{d_p}\beta^*\alpha. 
    \end{equation} 
    This proves the claim. 
\end{proof}

\begin{proposition}\label{prop:: key relation among eigenvalues}
    Let $i,j,k,l,p\in \Irr(\mathcal{C})$ such that $\Hom(ij,p)\neq 0$ and $\Hom(kl,p)\neq 0$. 
    Let $\alpha\in \Hom(ij,p)$ be an eigenvector of $\mathbf{D}^{ij}_p$ with eigenvalue $a>0$, and let $\beta\in \Hom(kl,p)$ be an eigenvector of $\mathbf{D}^{kl}_p$ with eigenvalue $b>0$. 
    Then we have
    \begin{equation}
        \frac{a}{b} = \sqrt{\frac{d_id_j}{d_kd_l}}. 
    \end{equation}
\end{proposition}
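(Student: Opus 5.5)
The plan is to exploit the \emph{other} side of the Frobenius identity~\eqref{eqn:: Frobenius identity}. Lemma~\ref{lemma:: key lemma on eigenvalues of D3} already diagonalizes the $I$-channel operator $\mathbf{E}^{ij}_{kl}$ on vectors of the form $\beta^*\alpha$. The identity $(\mathcal{T}\otimes \mathrm{Id}_\gamma)(\mathrm{Id}_\gamma\otimes \mathcal{T}^*)=\mathcal{T}^*\mathcal{T}$ says that the same operator on $\Hom(ij,kl)$ also arises from the $H$-channel composite. That composite is built from the blocks $\mathcal{T}^{q j}_{l}$ and $(\mathcal{T}^{i}_{k q})^*$ summed over $q\in\Irr(\mathcal{C})$. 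Comparing the two descriptions produces relations among eigenvalues of different blocks $\mathbf{D}^{\bullet\bullet}_\bullet$. I would then solve these relations.

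\textbf{Step 1: a reduction.} The claim says that $a/b$ is determined by the dimensions alone. It therefore suffices to show that every eigenvalue $a$ of $\mathbf{D}^{ij}_p$ equals $c_p\sqrt{d_id_j}$ for a constant $c_p$ depending only on $p$. To prove this, I compare $\mathbf{D}^{ij}_p$ with the single reference block $\mathbf{D}^{p\mathbb{1}}_p$, which is one-dimensional. Concretely, I specialize Lemma~\ref{lemma:: key lemma on eigenvalues of D3} to $(k,l)=(p,\mathbb{1})$, with $\beta=d_p^{-1/2}\mathrm{Id}_p$ and eigenvalue $b_0$. Then $\Hom(ij,p\mathbb{1})=\Hom(ij,p)$, and the only intermediate object is $p$. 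So $\mathbf{E}^{ij}_{p\mathbb{1}}$ is simply $(b_0/d_p)\,\mathbf{D}^{ij}_p$, and the $I$-channel side contains no new information beyond $a$.

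\textbf{Step 2: the $H$-channel.} For the block $ij\to p\mathbb{1}$, the $H$-channel composite factors through an intermediate $q$ with $\Hom(qj,\mathbb{1})\neq0$, forcing $q=\bar j$. Hence this side equals a ``rotated'' version of $\mathbf{D}^{i}_{p\bar j}$ multiplied by the scalar eigenvalue of the one-dimensional block $\mathbf{D}^{\bar j j}_{\mathbb{1}}$. Using Frobenius reciprocity $\Hom(ij,p)\cong\Hom(i,p\bar j)$, I would show that under the doubled-morphism identification this is the Fourier multiplier of $\mathbf{D}^{i}_{p\bar j}$, transported by the reciprocity map with explicit normalization factors. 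These factors are $\sqrt{d_j}$-type ratios, since the reciprocity map is unitary only up to $\sqrt{d}$ factors for the trace inner product. Equating the two channels gives an identity of the form
\begin{equation*}
\frac{b_0}{d_p}\,\mathbf{D}^{ij}_p \;=\; \kappa(i,j,p)\, e_{\bar j}\;\mathcal{R}\bigl(\mathbf{D}^{p\bar j}_{i}\bigr),
\end{equation*}
where $e_{\bar j}$ is the eigenvalue of $\mathbf{D}^{\bar j j}_{\mathbb{1}}$, $\mathcal{R}$ denotes conjugation by the reciprocity map, and $\kappa$ is an explicit dimension factor. This relates the spectrum of $\mathbf{D}^{ij}_p$ to that of $\mathbf{D}^{p\bar j}_i$.

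\textbf{Step 3: closing the relations.} Next I use the $2\pi/3$-rotation invariance of $\mathcal{T}$. It identifies $\mathbf{D}^{ij}_p$ with the rotated blocks $\mathbf{D}^{j\bar p}_{\bar i}$ and similar, again up to the reciprocity conjugations. Combining this with Step 2 and the analogous relation obtained from the second Frobenius equality (capping on the left instead of the right) should yield a closed system. I would apply it first to the unit blocks $\mathbf{D}^{j\bar j}_{\mathbb{1}}$ and $\mathbf{D}^{p\mathbb{1}}_p$ to get $e_j\propto d_j$ and $b_0\propto d_p$. Then I would get $\mathbf{D}^{ij}_p=\lambda\sqrt{d_id_jd_p}\,\mathrm{Id}$ on each block. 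This gives $a/b=\sqrt{d_id_j/(d_kd_l)}$ for eigenvectors in $\Hom(ij,p)$ and $\Hom(kl,p)$ sharing the same $p$.

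\textbf{Main obstacle.} I expect the main difficulty to be Step 2: correctly computing the $H$-channel composite $(\mathcal{T}\otimes\mathrm{Id})(\mathrm{Id}\otimes\mathcal{T}^*)$ as a Fourier multiplier on $\Hom(ij,kl)$. This means tracking the bending of strings in both the $\mathcal{C}$ and $\mathcal{C}^{\mathrm{op}}$ factors together with all the $\sqrt{d}$ normalizations. My first attempt would be to check it in the abelian case (pointed $\mathcal{C}$), where all hom spaces are one-dimensional and the computation reduces to scalars. I would then promote it to general blocks by choosing eigenbases as in the proof of Lemma~\ref{lemma:: key lemma on eigenvalues of D3}.
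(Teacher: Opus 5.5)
\textbf{There is a genuine gap: the system in Step 3 cannot close.} Every relation you extract is the Frobenius identity on a block $\Hom(ij,kl)$ with a unit \emph{outer} label, and such relations carry no information about non-unit blocks. In such a block each channel has a single intermediate object. Your transport $\mathcal{R}\colon\xi\mapsto(\mathrm{Id}_p\otimes\eta)(\xi^*\otimes\mathrm{Id}_j)$ from $\Hom(p\bar j,i)$ to $\Hom(ij,p)$ sends an orthonormal basis to an orthogonal family of common norm $d_j^{-1/2}$. So it is a scalar multiple of an anti-unitary, and $\kappa(i,j,p)$ cannot supply any $\sqrt{d_id_j}$-dependence. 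Moreover, up to that scalar, $\mathcal{R}$ is the vertex symmetry $\rho J$, under which $\mathcal{T}$ is assumed invariant. Hence $\mathcal{R}(\mathbf{D}^{p\bar j}_i)=\mathbf{D}^{ij}_p$, and since $\mathbf{D}^{ij}_p$ is invertible, your Step 2 identity collapses to the scalar relation $b_0/d_p=e_{\bar j}/d_j$. Capping on the left, or using the other Frobenius equality, gives only relations of the same kind. Together they force $\mathbf{D}^{x\mathbb{1}}_x$, $\mathbf{D}^{\mathbb{1}x}_x$, $\mathbf{D}^{\bar x x}_{\mathbb{1}}\propto d_x$, but leave the spectrum of a generic $\mathbf{D}^{ij}_p$ free. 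For example, take $\mathrm{Vec}_{\mathbb{Z}_3}$ with simples $\mathbb{1}=0,1,2$, set $\mathbf{D}^{11}_2=\mathbf{D}^{22}_1=2\lambda$, and set all other blocks equal to $\lambda$. This choice is strictly positive and invariant under rotation and conjugation ($\rho$ maps $\Hom(11,2)$ to itself, and $J$ swaps it with $\Hom(22,1)$). It satisfies every unit-outer-leg relation, yet it is not of the form $\lambda\sqrt{d_id_jd_p}$.

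\textbf{The missing ingredient} is the Frobenius identity on a block with generic outer legs, where the two channels genuinely differ. The quickest instance is $\Hom(ij,ij)$. There $\alpha^*\alpha$ is an eigenvector of $\mathbf{E}^{ij}_{ij}$ with eigenvalue $a^2/d_p$ (by the Lemma). Meanwhile $\mathrm{Id}_{ij}$ is an eigenvector of the $H$-channel operator, through $q=\mathbb{1}$, with eigenvalue $\mathbf{D}^{i\mathbb{1}}_i\mathbf{D}^{\mathbb{1}j}_j$. Since $\langle\alpha^*\alpha,\mathrm{Id}_{ij}\rangle=\tr(\alpha^*\alpha)\neq 0$, self-adjointness gives $a^2=d_p\,\mathbf{D}^{i\mathbb{1}}_i\mathbf{D}^{\mathbb{1}j}_j$. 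Specializing to $j=\mathbb{1}$ or $i=\mathbb{1}$ shows the unit blocks equal $\lambda d_x$, hence $a=\lambda\sqrt{d_id_jd_p}$. This is exactly the relation that fails on $\Hom(11,11)$ in the $\mathbb{Z}_3$ example.

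The paper uses the same mechanism for general $(k,l)$. It picks $H$-channel eigenvectors $\zeta\in\Hom(iq,k)$ and $\delta\in\Hom(ql,j)$ whose tetrahedral overlap with $\beta^*\alpha$ is nonzero, which yields $ab/d_p=cd/d_q$. It then uses spherical isotopy and rotation/conjugation invariance to read the same tetrahedral diagram as an overlap in the other two pairs of channels. This gives $ac/d_i=bd/d_l$ and $ad/d_j=bc/d_k$, and multiplying these two yields $a^2/b^2=d_id_j/(d_kd_l)$ directly.
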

\begin{proof}
    By the previous lemma, $\beta^*\alpha$ is an eigenvector of $\mathbf{E}^{ij}_{kl}$ with eigenvalue $ab/d_p$. 
    The morphism space $\Hom(ij,kl)$ decomposes as
    \begin{equation}
        \bigoplus_{p\in \Irr(\mathcal{C})} \Hom(ij,p) \Hom(p,kl) = \Hom(ij,kl) = \bigoplus_{q\in \Irr(\mathcal{C})} \left( \Hom(iq,k)\otimes \mathrm{Id}_l \right)\left( \mathrm{Id}_i\otimes \Hom(j,ql) \right),
    \end{equation} 
    For each simple object $q$, choose orthonormal eigenbases of $\mathbf{D}^{iq}_k$ and $\mathbf{D}^{ql}_j$. 
    The morphisms $(\zeta\otimes \mathrm{Id}_l)(\mathrm{Id}_i\otimes \delta^*)$ obtained from these bases span $\Hom(ij,kl)$. 
    Since $\beta^*\alpha\neq 0$, there exists a simple object $q$, an eigenvector $\zeta\in \Hom(iq,k)$ of $\mathbf{D}^{iq}_k$ with eigenvalue $c>0$, and an eigenvector $\delta\in \Hom(ql,j)$ of $\mathbf{D}^{ql}_j$ with eigenvalue $d>0$, such that
    \begin{equation}\label{eqn:: non-zero overlap}
        0\neq \braket{\beta^*\alpha,(\zeta\otimes \mathrm{Id}_l)(\mathrm{Id}_i\otimes \delta^*)} = \vcenter{\hbox{\begin{tikzpicture}[scale=0.72, baseline=-0.5ex, every node/.style={font=\scriptsize, inner sep=1pt}]
    \coordinate (B) at (0,2.0);
    \coordinate (A) at (0,0.75);
    \coordinate (G) at (0,-1.55);
    \coordinate (D) at (1.35,-0.55);

    \VeryThinLine[midarrow]
        (G) .. controls (-0.35,-1.95) and (-0.85,-2.05) .. (-1.18,-1.5)
            .. controls (-1.85,-0.4) and (-1.95,0.45) .. (-1.35,1.1)
            .. controls (-0.9,1.6) and (-0.45,2.15) .. (B);

    \VeryThinLine[midarrow]
        (D) .. controls (2.05,-2.0) and (2.95,-1.35) .. (2.85,0.05)
            .. controls (2.75,1.85) and (0.85,2.85) .. (B);

    \VeryThinLine[midarrow] (B) -- (A);
    \VeryThinLine[midarrow]
        (A) .. controls (-0.55,0.25) and (-0.9,-0.15) .. (-0.9,-0.65)
            .. controls (-0.85,-1.15) and (-0.45,-1.35) .. (G);
    \VeryThinLine[midarrow]
        (A) .. controls (0.45,0.25) and (0.85,0.0) .. (D);
    \VeryThinLine[midarrow] (D) -- (G);

    \draw[fill=black] (B) circle (2pt);
    \draw[fill=black] (A) circle (2pt);
    \draw[fill=black] (G) circle (2pt);
    \draw[fill=black] (D) circle (2pt);

    \node at (-2.15,0.35) {$k$};
    \node at (3.12,0.35) {$l$};
    \node at (0.28,1.35) {$p$};
    \node at (-0.55,-0.35) {$i$};
    \node at (0.78,0.34) {$j$};
    \node at (0.94,-1.35) {$q$};

    \node at (-0.42,1.65) {$\beta$};
    \node at (-0.55,0.55) {$\alpha^{*}$};
    \node at (-0.42,-1.58) {$\zeta$};
    \node at (1.62,-0.35) {$\delta^{*}$};
\end{tikzpicture}}}. 
    \end{equation}
    As in Lemma ~\ref{lemma:: key lemma on eigenvalues of D3}, the morphism $(\zeta\otimes \mathrm{Id}_l)(\mathrm{Id}_i\otimes \delta^*)$ is an eigenvector with eigenvalue $cd/d_q$ of the operator on $\Hom(ij,kl)$ corresponding to the $(ij,kl)$-block of $(\mathcal{T}\otimes \mathrm{Id}_{\gamma})(\mathrm{Id}_{\gamma}\otimes\mathcal{T}^*)$.
    The Frobenius condition identifies this operator with $\mathbf{E}^{ij}_{kl}$.
    Since $\mathbf{E}^{ij}_{kl}$ is self-adjoint and the two eigenvectors above have nonzero inner product, their eigenvalues must be the same:
    \begin{equation}
        \frac{ab}{d_p} = \frac{cd}{d_q}.
    \end{equation}

    By spherical isotopy, the diagram in Equation~\eqref{eqn:: non-zero overlap} is also the overlap between morphisms $\vcenter{\hbox{\begin{tikzpicture}[scale = 0.75,rotate= 30]
          \draw[fill=black] (-0.5,0) ellipse (0.06 and 0.06);
          \draw[fill=black] (0.5,0) ellipse (0.06 and 0.06);
          \VeryThinLine (-0.5,0) -- (0.5,0);
          \VeryThinLine (-0.5,0) -- (-145:1);
          \VeryThinLine (-0.5,0) -- (145:1);
          \VeryThinLine (0.5,0) -- (35:1);
          \VeryThinLine (0.5,0) -- (-35:1);
    \end{tikzpicture}}}$ and $\vcenter{\hbox{\begin{tikzpicture}[scale = 0.75]
          \draw[fill=black] (0,-0.5) ellipse (0.06 and 0.06);
          \draw[fill=black] (0,0.5) ellipse (0.06 and 0.06);
          \VeryThinLine (0,0.5) -- (0,-0.5);
          \VeryThinLine (0,0.5) -- (125:1);
          \VeryThinLine (0,0.5) -- (55:1);
          \VeryThinLine (0,-0.5) -- (-55:1);
          \VeryThinLine (0,-0.5) -- (-125:1);
    \end{tikzpicture}}}$, with the inner strands labelled by $(j,k)$ and $(i,l)$ and the vertices labelled by the respective one-click rotations of the corresponding morphisms. 
    For $a,b,c\in\Irr(\mathcal{C})$, denote by $J$ the modular conjugation $J:\Hom(ab,c)\rightarrow \Hom(\overline{b}\overline{a},\overline{c})$, and denote by $\rho:\Hom(ab,c)\rightarrow \Hom(\overline{c}a,\overline{b})$ the one-click rotation. 
    Then we have 
    \begin{equation}
        \begin{aligned}
            0\neq \braket{\beta^*\alpha,(\zeta\otimes \mathrm{Id}_l)(\mathrm{Id}_i\otimes \delta^*)} &= \left\langle
(\rho J(\alpha))^{*}\rho J(\zeta),
(\beta\otimes\mathrm{Id}_{\overline{j}})
\bigl(\mathrm{Id}_k\otimes(\rho^{2}(\delta))^{*}\bigr)
\right\rangle\\
&= \left\langle
(\alpha\otimes\mathrm{Id}_{\overline{l}})
\bigl(\mathrm{Id}_i\otimes(\rho J(\delta))^{*}\bigr),
(\rho J(\beta))^{*}\zeta
\right\rangle. 
        \end{aligned}
    \end{equation}
    Rotation and modular-conjugation invariance of $\mathcal{T}$ thus gives
    \begin{equation}
        \begin{aligned}
            \frac{ac}{d_i} &= \frac{bd}{d_l}\\
            \frac{ad}{d_j} &= \frac{bc}{d_k}.
        \end{aligned}
    \end{equation}
    Solving these three equations over positive real numbers gives
    \begin{equation}
        (a,b,c,d) = (t\sqrt{d_id_jd_p},t\sqrt{d_kd_ld_p},t\sqrt{d_id_qd_k},t\sqrt{d_qd_ld_j}),\quad t>0. 
    \end{equation}
    Hence $a/b = \sqrt{d_id_j/(d_kd_l)}$.
\end{proof}

\begin{theorem}\label{thm:: unique positive Frobenius morphism}
    Let $\mathcal{C}$ be a unitary fusion category. 
    Set $\gamma = \bigoplus_{x\in \Irr(\mathcal{C})}x\boxtimes x^{\mathrm{op}}$, and let $\mathcal{T}\in \Hom(\gamma^2,\gamma)$.
    Suppose that $\mathcal{T}$ is invariant under the one-click rotation and modular conjugation, strictly $\mathcal{F}$-positive, and satisfies the Frobenius identity~\eqref{eqn:: Frobenius identity}.
    Then there exists $\lambda>0$ such that
    \begin{equation}\label{eqn:: the unique solution to the constraints}
        \mathcal{T} = \bigoplus_{i,j,k\in \Irr(\mathcal{C})}\lambda\sum_{t^{ij}_k} \sqrt{d_id_jd_k} P_k \left( t^{ij}_{k}\boxtimes (t^{ij}_{k})^* \right) P_i\otimes P_j,
    \end{equation}
    where $t^{ij}_{k}$ belongs to an orthonormal basis of $\Hom(ij,k)$ with respect to the trace inner product, and $P_i$ is the projection onto the diagonal component $i\boxtimes i^{\mathrm{op}}$.
\end{theorem}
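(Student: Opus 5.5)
We will show that each block operator $\mathbf{D}^{ij}_k$ on $\Hom(ij,k)$ is the scalar $\lambda\sqrt{d_id_jd_k}$ for one global constant $\lambda>0$. The displayed formula then follows at once. Indeed, $\mathcal{T}=\widehat{\mathbf{D}}=\bigoplus_{i,j,k}\widehat{\mathbf{D}^{ij}_k}$, and for a scalar operator $c\,\mathrm{Id}$ on $\Hom(ij,k)$ the definition of the hat map gives $c\sum_{t}t\boxtimes t^*$ over any orthonormal basis. Conjugating by $P_k$ and $P_i\otimes P_j$ only records the block. So the content of the proof lies entirely in Proposition~\ref{prop:: key relation among eigenvalues}, which already encodes the Frobenius identity together with rotation and conjugation invariance.

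\textbf{Step 1: each block is scalar.} Fix $i,j,p$ with $\Hom(ij,p)\neq 0$, and take two eigenvectors $\alpha,\alpha'$ of $\mathbf{D}^{ij}_p$ with eigenvalues $a,a'$. Apply Proposition~\ref{prop:: key relation among eigenvalues} twice with the same $\beta\in\Hom(kl,p)$, where we may take $(k,l)=(i,j)$ and $\beta$ an eigenvector. This gives
\[
a/b=\sqrt{d_id_j/(d_kd_l)}=a'/b,
\]
so $a=a'$. Hence $\mathbf{D}^{ij}_p=a^{ij}_p\,\mathrm{Id}$ for some $a^{ij}_p>0$. Strict positivity guarantees every eigenvalue is positive, so the proposition applies.

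\textbf{Step 2: normalise within a fixed output $p$.} For fixed $p$, the proposition applied to any two nonzero pairs $(i,j)$ and $(k,l)$ with target $p$ gives
\[
a^{ij}_p/\sqrt{d_id_j}=a^{kl}_p/\sqrt{d_kd_l}.
\]
So $a^{ij}_p=\mu_p\sqrt{d_id_j}$ for some $\mu_p>0$ depending only on $p$. Write $\mu_p=\lambda_p\sqrt{d_p}$.

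\textbf{Step 3: show $\lambda_p$ is independent of $p$.} This is the main obstacle, because the proposition only compares blocks with the same output. The tool is again rotation invariance. The one-click rotation $\rho$ maps $\Hom(ij,p)$ onto $\Hom(\bar p i,\bar j)$, and invariance of $\mathcal{T}$ under $\rho$ should intertwine the corresponding blocks of $\mathbf{D}$ (possibly after modular conjugation). This gives
\[
a^{ij}_p=a^{\bar p i}_{\bar j},
\]
that is, $\lambda_p\sqrt{d_id_jd_p}=\lambda_{\bar j}\sqrt{d_{\bar p}d_id_{\bar j}}$. Since $d_{\bar x}=d_x$, this yields $\lambda_p=\lambda_{\bar j}$ whenever $\Hom(ij,p)\neq0$.

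Now take $i=\bar p$ and $j=\mathbb{1}$: then $\Hom(ij,p)\neq 0$ is not guaranteed for $i=\bar p$, so use $i=p$, $j=\mathbb{1}$ instead. This gives $\lambda_p=\lambda_{\mathbb 1}$ for every $p$.

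The verification needed here is that $\rho$ is unitary for the trace inner product on these hom spaces in a spherical unitary category, so that the operator $\mathbf{D}$ transported by $\rho$ is exactly the rotated block. This is the same identification the paper uses in the proof of Proposition~\ref{prop:: key relation among eigenvalues}. If the normalisation there involves dimension factors, we will redo Step 3 using the rotated overlap identities from that proof rather than a direct conjugation.

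Setting $\lambda=\lambda_{\mathbb 1}$ then gives $\mathbf{D}^{ij}_k=\lambda\sqrt{d_id_jd_k}\,\mathrm{Id}$, and the formula in the statement follows.
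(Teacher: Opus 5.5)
Your proof is correct and is essentially the paper's argument: the Proposition fixes each block relative to any other block with the same output, and a single rotation between a unit-leg block such as $\Hom(p\mathbb{1},p)$ and the block $\Hom(\bar{p}p,\mathbb{1})$ ties each output back to $\mathbf{D}^{\mathbb{1}\mathbb{1}}_{\mathbb{1}}=\lambda$; the paper uses the same two ingredients, applying the Proposition at output $\mathbb{1}$, then rotating, then applying it at output $k$. The hedge at the end of your Step 3 is unnecessary, because one-click rotation is unitary for the trace inner product in a unitary spherical category (as the paper uses in its rotation-invariance lemma), and since your blocks are already scalar, any intertwining isomorphism would preserve the scalar anyway.
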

\begin{proof}
    Since $\Hom(\mathbb{1}\otimes \mathbb{1},\mathbb{1})$ is one-dimensional, $\mathbf{D}^{\mathbb{1},\mathbb{1}}_{\mathbb{1}} = \lambda$ for some $\lambda>0$.
    For a simple object $x$, the space $\Hom(\overline{x}\otimes x,\mathbb{1})$ is also one-dimensional; write $\mathbf{D}^{\overline{x},x}_{\mathbb{1}} = \lambda_x$.
    Proposition ~\ref{prop:: key relation among eigenvalues} gives $\lambda_x/\lambda = \sqrt{d_{\overline{x}}d_x/(d_{\mathbb{1}}d_{\mathbb{1}})} = d_x$, and hence $\mathbf{D}^{\overline{x},x}_{\mathbb{1}} = \lambda\cdot d_x$.
    By the $2\pi/3$-rotation invariance of $\mathcal{T}$, the corresponding rotated vector in $\Hom(\mathbb{1}\otimes x,x)$ is an eigenvector of $\mathbf{D}^{\mathbb{1}x}_x$ with the same eigenvalue $\lambda\cdot d_x$.

    Now let $i,j,k$ be simple objects such that $\Hom(i\otimes j,k)$ is nonzero, and let $\alpha$ be an eigenvector of $\mathbf{D}^{ij}_k$ with eigenvalue $\lambda^{ij}_{k}$. 
    The morphism $\beta = \vcenter{\hbox{\begin{tikzpicture}
        \VeryThinLine[midarrow] (30:0.5) -- (0,0);
        \VeryThinLine[midarrow] (0,0) -- (-90:0.5);
        \draw[dashed] (0,0) -- (150:0.5);
        \node at (30:0.75) {$k$};
    \end{tikzpicture}}}$ is an eigenvector of $\mathbf{D}^{\mathbb{1}k}_k$ with eigenvalue $\lambda\cdot d_k$. 
    Therefore, Proposition ~\ref{prop:: key relation among eigenvalues} gives
    \begin{equation}
        \lambda^{ij}_k = \lambda\cdot d_k\cdot \sqrt{\frac{d_id_j}{d_k}} = \lambda\cdot\sqrt{d_id_jd_k}. 
    \end{equation}
    Since our choice of $\alpha$ is arbitrary, we obtain
    \begin{equation}
        \mathbf{D}^{ij}_k = \lambda\cdot \sqrt{d_id_jd_k}. 
    \end{equation}
    This completes the proof. 
\end{proof}

\begin{lemma}\label{lemma:: rotation invariance of T}
    Let $Z$ satisfy the assumptions of Proposition ~\ref{prop:: canonical operator from connected stratifications}, and let $\mathfrak D=\mathfrak D_{\Lambda\mathcal S(2,1)}$.
    The one-click rotation of the morphism space $\mathcal A(2,1)$ leaves $\mathfrak D$ invariant.
    Moreover, $\mathfrak D$ is invariant under categorical modular conjugation.
\end{lemma}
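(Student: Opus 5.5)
The plan is to derive both invariances from homeomorphism invariance of $Z$ together with the description $\mathfrak D=q_{\cM}q_{\cM}^\dagger$, where $\cM=\Lambda\mathcal S(2,1)$, and the pullback relation $K_{\cN,\cM}=q_{\cN}^\dagger q_{\cM}$ from \eqref{eq:: K as pullback of quotient inner product}.

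\textbf{Rotation.} Let $g$ be the orientation-preserving homeomorphism of $\Lambda\mathcal S(2,1)=\Lambda\mathcal S(3)$ that rotates the disk by $2\pi/3$. It permutes the boundary points cyclically.
\begin{enumerate}
\item On $\tilde V_3\cong\mathcal A(2,1)$ (after the standard identification of $\mathcal S(3)$ with $\mathcal S(2,1)$), $g$ induces the one-click rotation $\rho$. Since $Z$ is homeomorphism invariant and pairings of reflected diagrams are preserved, $\rho$ is unitary for $\braket{\cdot,\cdot}_{\mathcal C(Z)}$.
\item On the label space, $g$ acts on the single vertex by the unitary $\pi_{\mathcal S(3)}(g)$ of Definition~\ref{def:: label space}. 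By the covariance rule for charts and Definition~\ref{def:: homeomorphism between marked stratifications}, $q_{\cM}\circ\pi_{\mathcal S(3)}(g)=\rho\circ q_{\cM}$: relabelling the chart and then rotating the picture gives back the same labelled stratification.
\item Hence
\begin{equation*}
\rho\,\mathfrak D\,\rho^{\dagger}=q_{\cM}\,\pi(g)\pi(g)^{\dagger}q_{\cM}^{\dagger}=\mathfrak D .
\end{equation*}
\end{enumerate}

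\textbf{Modular conjugation.} Here the reflection $\theta$ replaces the rotation.
\begin{enumerate}
\item The anti-unitary $\hat\theta_p:\mathcal H_p\to\mathcal H_{\theta(p)}$, composed with the chart identification of $\theta(\Lambda\mathcal S(2,1))$ with $\Lambda\mathcal S(1,2)$, intertwines $q$ with the categorical dagger $x\mapsto x^*$. This is exactly how $\mathcal C(Z)$ and its $*$-structure are built from $\hat\theta$ in the appendix.
\item Composing with a rotation (identifying $\mathcal S(1,2)$ back with $\mathcal S(2,1)$) turns $*$ into the modular conjugation $J$. So there is an anti-unitary $\Theta$ on $\mathcal H_{\Lambda\mathcal S(2,1)}$ with $q_{\cM}\Theta=Jq_{\cM}$, where $J$ is anti-unitary on $\tilde V_3$.
\item For anti-unitaries, $\Theta\Theta^\dagger=1$ still holds, and $(q\Theta)(q\Theta)^\dagger=q\Theta\Theta^\dagger q^\dagger$. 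Therefore $J\mathfrak D J^{-1}=\mathfrak D$.
\end{enumerate}

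\textbf{Main obstacle.} The delicate step is verifying that $\hat\theta_p$, pushed through the quotient, is exactly the categorical dagger composed with a rotation. Any residual unitary twist coming from the choice of $\hat\theta_p$ must be shown to be absorbed.

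I would first check that $\pi(g)$ and $\hat\theta_p$ are genuinely unitary and anti-unitary on $\mathcal H_p$. Then I would check that the induced maps on $\tilde V_3$ coincide with $\rho$ and $J$ by testing against the pairing $Z(\hat\theta(\eta)\otimes\xi)$. Because $\braket{q(\eta),q(\xi)}=Z(\hat\theta(\eta)\otimes\xi)$, it suffices to show that the pairing is invariant under applying $g$ (respectively $\theta$) to the glued sphere, which is again homeomorphism invariance.

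As a fallback, the corresponding statements for $\mathbb T=\widehat{\mathfrak D}$ follow directly, since the Fourier identification commutes with $\rho$ and $J$.
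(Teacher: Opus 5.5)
Your proposal is correct and follows essentially the same route as the paper. Chart covariance gives $q\,\pi(g)=\rho\,q$ with both maps unitary, so $\rho\,\mathfrak D\,\rho^{-1}=(q\pi(g))(q\pi(g))^\dagger=\mathfrak D$, and the anti-unitary intertwining $q\Theta=Jq$ gives $J\mathfrak D J^{-1}=\mathfrak D$ in the same way; your explicit composition of $\hat\theta$ with a rotation back onto $\Lambda\mathcal S(2,1)$ just makes precise an identification the paper leaves implicit in writing $q\hat\theta=Jq$ (one small correction: anti-unitarity of $J$ on $\tilde V_3$ also needs the Hermitian symmetry supplied by s-reflection positivity, since $\theta$ is orientation-reversing and homeomorphism invariance alone does not cover it).
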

\begin{proof}
Let $\cM=\Lambda\mathcal S(2,1)$ be equipped with its marked point and the standard regular chart, so that $\mathcal H_{\cM}=\mathcal H_{\mathcal S(2,1)}$.
Let $r$ be the orientation-preserving one-click rotation of $\Lambda\mathcal S(2,1)$, and let $U:=\pi_{\mathcal S(2,1)}([r])\in\mathcal B(\mathcal H_{\cM})$ be the corresponding mapping-class-group action on the label space, which is unitary. 

Write $q=q_{\cM}:\mathcal H_{\cM}\rightarrow\mathcal A(2,1)$, and denote by $R$ the one-click rotation on $\mathcal A(2,1)$.
The covariance of labels under reparameterization of the regular chart implies that rotating the labelled marked stratification $\cM[\psi]$ by $r$ is represented by $\cM[U\psi]$.
It then follows that
\begin{equation}\label{eq:: rotation equivariance of local quotient map}
    qU=Rq.
\end{equation}
The spherical $C^*$-structure on $\mathcal C(Z)$ makes planar rotation unitary for the categorical trace inner product, so $R^\dagger=R^{-1}$.
Taking adjoints in Equation~\eqref{eq:: rotation equivariance of local quotient map} and using the unitarity of $U$ and $R$ gives $Uq^\dagger R^{-1}=q^\dagger$. 
Consequently,
\begin{equation}
\begin{aligned}
    R\mathfrak D R^{-1}=Rq q^\dagger R^{-1} =qUq^\dagger R^{-1} =qq^\dagger
    =\mathfrak D.
\end{aligned}
\end{equation}
On the local label space, reflection is the existing anti-unitary $\hat\theta$.
By s-reflection positivity it descends to the categorical modular conjugation $J$, and the quotient map satisfies $q\hat\theta=Jq$.
Therefore,
\begin{equation}
    J\mathfrak D J^{-1}
    =(q\hat\theta)(q\hat\theta)^\dagger
    =qq^\dagger
    =\mathfrak D.
\end{equation}
\end{proof}

\begin{corollary}\label{corollary:: local Hilbert space of reconstructed model}
    Let $Z$ be a homeomorphism-invariant, multiplicative, s-reflection-positive, locally non-degenerate $\mathbb{S}^2$-functional. 
    Suppose that $Z$ satisfies topological connectedness~\ref{def:: topological connectedness} and commutativity~\ref{def:: commutativity}.
    Then there is a vector-space isomorphism
    \begin{equation}
        \mathcal{H}_{\mathcal{S}(3)} \cong \bigoplus_{i,j,k\in \Irr(\mathcal{C}(Z))} \Hom(i\otimes j,k).
    \end{equation}
    Moreover, the summands on the right-hand side are mutually orthogonal, and there exists $\lambda>0$ such that the inner product on each summand is given by
    \begin{equation}
        \braket{\beta|\alpha} = \frac{1}{\lambda\sqrt{d_id_jd_k}}\tr(\beta^*\alpha),
        \quad \alpha,\beta\in \Hom(i\otimes j,k).
    \end{equation}
\end{corollary}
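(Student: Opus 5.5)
The plan is to realise the isomorphism by the quotient map $q=q_{\Lambda\mathcal S(2,1)}:\mathcal H_{\mathcal S(3)}\to\mathcal A(2,1)=\Hom(A\otimes A,A)$, and then to read off the inner product from $\mathfrak D=qq^\dagger$ using Theorem~\ref{thm:: unique positive Frobenius morphism}.

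\textbf{Step 1: the isomorphism.} Local non-degeneracy (Definition~\ref{def:: local non degeneracy}) says that $q$ is injective on $\mathcal H_{\Lambda\mathcal S(3)}$. Theorem~\ref{Thm:: key consequence of TC} says that $q$ is surjective, because $\Lambda\mathcal S(3)$ has connected $1$-skeleton. Hence $q$ is a linear bijection onto $\Hom(A\otimes A,A)$. By Corollary~\ref{corollary:: A is sum of simple objects}, $A\cong\bigoplus_{x}x$, which gives
\[
\Hom(A\otimes A,A)=\bigoplus_{i,j,k}\Hom(i\otimes j,k).
\]
We then use $q$ to transport the Hilbert-space structure of $\mathcal H_{\mathcal S(3)}$, so that this decomposition becomes the decomposition in the statement.

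\textbf{Step 2: the transported inner product.} Since $q$ is invertible, for $\alpha,\beta\in\mathcal A(2,1)$ we have
\[
\braket{q^{-1}\beta|q^{-1}\alpha}=\braket{\beta,(qq^\dagger)^{-1}\alpha}_{\mathcal C(Z)}=\braket{\beta,\mathfrak D^{-1}\alpha}_{\mathcal C(Z)},
\]
where $\braket{\cdot,\cdot}_{\mathcal C(Z)}$ is the trace inner product. So it suffices to show that $\mathfrak D$ preserves each summand $\Hom(i\otimes j,k)$ and acts there as the scalar $\lambda\sqrt{d_id_jd_k}$. Both the mutual orthogonality of the summands and the stated formula then follow at once.

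\textbf{Step 3: applying the uniqueness theorem.} We check the hypotheses of Theorem~\ref{thm:: unique positive Frobenius morphism} for $\mathbb T=\widehat{\mathfrak D}$.
\begin{itemize}
\item Corollary~\ref{corollary:: T is block diagonal} gives $\mathbb T\in\Hom(\gamma^2,\gamma)$.
\item Strict $\mathcal F$-positivity holds because $\mathfrak D$ is strictly positive (Proposition~\ref{prop:: canonical operator from connected stratifications}).
\item The Frobenius identity is Proposition~\ref{prop:: Frobenius identity from TC}.
\item Invariance under rotation and modular conjugation is Lemma~\ref{lemma:: rotation invariance of T}.
\end{itemize}
The theorem then gives $\mathbb T=\bigoplus\lambda\sqrt{d_id_jd_k}\sum_t t\boxtimes t^*$. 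Translating back through the Fourier identification $T\mapsto\widehat T$, which is injective, this says that $\mathfrak D$ restricted to $\Hom(i\otimes j,k)$ is $\lambda\sqrt{d_id_jd_k}\,\mathrm{Id}$, and that $\mathfrak D$ has no cross terms between different summands.

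\textbf{Main obstacle.} The delicate point is the block-diagonality of $\mathfrak D$ across summands. Corollary~\ref{corollary:: T is block diagonal} was stated for $\mathbb T$ in the doubled picture, where the projections are $P_i=p_i\boxtimes p_i$. We must make sure that $P_k\,\mathbb T\,(P_i\otimes P_j)$ corresponds exactly to the compression $(\text{proj}_{ijk})\,\mathfrak D\,(\text{proj}_{ijk})$ on $\Hom(i\otimes j,k)$, and not to some mixed operator from $\Hom(i\otimes j,k)$ to $\Hom(i'\otimes j',k')$. I would verify this directly from the definition
\[
\widehat T=\sum_{\alpha,\beta}\braket{v_\alpha,Tv_\beta}\,v_\alpha\boxtimes v_\beta^*.
\]
For $\widehat T$, composing with $p_k\boxtimes p_k$ on the left picks out those terms in which both $v_\alpha$ and $v_\beta$ have target $k$, and similarly on the sources. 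A matrix entry with $v_\alpha\in\Hom(ij,k)$ and $v_\beta\in\Hom(i'j',k')$, $(i',j',k')\ne(i,j,k)$, would therefore appear in an off-diagonal block $(p_k\boxtimes p_{k'})\,\cdots$. Such blocks are annihilated by $P$, so these matrix entries vanish. Hence $\mathfrak D$ is block diagonal, and Step 2 completes the proof.
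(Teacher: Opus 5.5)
Your proposal is correct and follows the paper's proof essentially step for step: $q$ is a bijection by local non-degeneracy and Theorem~\ref{Thm:: key consequence of TC}; Theorem~\ref{thm:: unique positive Frobenius morphism} is applied to $\mathbb T=\widehat{\mathfrak D}$ with the same four hypotheses; and the inner product is read off from $\braket{q^{-1}\beta|q^{-1}\alpha}=\braket{\beta,\mathfrak D^{-1}\alpha}_{\mathcal C(Z)}$. Your block-diagonality check is exactly the paper's remark that $\mathbf D$ is block diagonal because $\mathcal T\in\Hom(\gamma^2,\gamma)$, and the one point the paper states explicitly that you leave implicit is needed to invoke the uniqueness theorem: complete finiteness (Corollary~\ref{corollary:: A is sum of simple objects}) makes $\mathcal C(Z)$ a unitary fusion category.
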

\begin{proof}
Identify $\mathcal{H}_{\mathcal{S}(3)}$ with $\mathcal{H}_{\Lambda\mathcal{S}(2,1)}$, and write $q=q_{\Lambda\mathcal{S}(2,1)}:\mathcal{H}_{\mathcal{S}(3)}\rightarrow \mathcal{A}(2,1)$.
By local non-degeneracy~\ref{def:: local non degeneracy}, $q$ is injective; by Theorem ~\ref{Thm:: key consequence of TC}, it is surjective because $\Lambda\mathcal{S}(2,1)^1$ is connected.
Hence $q$ is a linear isomorphism.

By Corollary ~\ref{corollary:: T is block diagonal}, $\mathbb{T}=\widehat{\mathfrak{D}}$ lies in $\Hom(\gamma^2,\gamma)$, where $\mathfrak{D}=q q^\dagger$ and $\gamma=\bigoplus_{x\in\Irr(\mathcal{C}(Z))}x\boxtimes x^{\mathrm{op}}$.
    Moreover, $\mathbb{T}$ is strictly $\mathcal{F}$-positive by construction, invariant under one-click rotation and modular conjugation by Lemma~\ref{lemma:: rotation invariance of T}, and satisfies the Frobenius identity by Proposition~\ref{prop:: Frobenius identity from TC}.
Corollary ~\ref{corollary:: A is sum of simple objects} gives complete finiteness, so $\mathcal{C}(Z)$ is a unitary fusion category.
Therefore, Theorem ~\ref{thm:: unique positive Frobenius morphism} applies to $\mathcal{C}(Z)$ and $\mathcal{T}=\mathbb{T}$.
By Corollary ~\ref{corollary:: A is sum of simple objects}, $A\cong\bigoplus_{x\in\Irr(\mathcal{C}(Z))}x$, so the underlying morphism space decomposes as
\begin{equation}
    \mathcal{A}(2,1)=\Hom(A^2,A)\cong\bigoplus_{i,j,k\in\Irr(\mathcal{C}(Z))}\Hom(i\otimes j,k),
\end{equation}
and Theorem ~\ref{thm:: unique positive Frobenius morphism} gives a scalar $\lambda>0$ such that the operator $\mathfrak{D}$ restricts to
\begin{equation}
    \mathfrak{D}\vert_{\Hom(i\otimes j,k)}
    =
    \lambda\sqrt{d_id_jd_k}\,\mathrm{Id}_{\Hom(i\otimes j,k)}.
\end{equation}
This proves the asserted vector-space decomposition of $\mathcal{H}_{\mathcal{S}(3)}$ via the isomorphism $q$.

It remains to identify the Hilbert space inner product. Since $q$ is invertible and $\mathfrak{D}=q q^\dagger$, we have for $x,y\in\mathcal{A}(2,1)$,
\begin{equation}
    \braket{q^{-1}y|q^{-1}x}_{\mathcal{H}_{\mathcal{S}(3)}}
    =
    \braket{y|\mathfrak{D}^{-1}x}_{\mathcal{C}(Z)}.
\end{equation}
Thus the above block formula for $\mathfrak{D}$ gives, for $\alpha,\beta\in\Hom(i\otimes j,k)$,
\begin{equation}
    \braket{q^{-1}\beta|q^{-1}\alpha}_{\mathcal{H}_{\mathcal{S}(3)}}
    =
    \frac{1}{\lambda\sqrt{d_id_jd_k}}\tr(\beta^*\alpha),
\end{equation}
and different simple summands are orthogonal. 
\end{proof}

\begin{theorem}\label{thm:: reconstructed state is Levin-Wen ground state}
    Let $Z$ be a homeomorphism-invariant, multiplicative, s-reflection-positive, and locally non-degenerate $\mathbb{S}^2$-functional. 
    Suppose that $Z$ satisfies topological connectedness~\ref{def:: topological connectedness} and commutativity~\ref{def:: commutativity}. 
    Then for any stratification $\cM$ of $\mathbb{S}^2$ with $\cM^1$ connected, the induced vector $\ket{\tilde{\Psi}_{\cM}}$ maps, under the local unitary identification of Corollary~\ref{corollary:: local Hilbert space of reconstructed model}, to a nonzero ground-state vector of the Levin--Wen model with input unitary fusion category $\mathcal{C}(Z)$. 
\end{theorem}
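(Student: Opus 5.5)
The strategy is to compare $Z$ directly with the Levin--Wen functional $Z_{\mathcal{C}}$ of Section~\ref{subsection:: S^2-functional from LW model}, built from $\mathcal{C}=\mathcal{C}(Z)$ with distinguished object $A\cong\bigoplus_{x\in\Irr(\mathcal{C})}x$ (Corollary~\ref{corollary:: A is sum of simple objects}). The first step is to show that, after identifying each vertex space $\mathcal{H}_p$ with $\mathcal{A}(2,1)\cong\bigoplus_{i,j,k}\Hom(i\otimes j,k)$ via the isomorphism $q$ of Corollary~\ref{corollary:: local Hilbert space of reconstructed model}, $Z$ coincides with the categorical evaluation. Concretely, take $\cM$ and cut $\mathbb{S}^2$ into $\mathbb{D}^2_\pm$ so that all vertices lie in $\mathbb{D}^2_+$ except for the reflection of a single vertex, or use repeated gluing. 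By construction of $\mathcal{C}(Z)$ (Appendix~\ref{app:: reconstruction from S2 functional}), the value $Z(\cM[\otimes_p\psi_p])$ equals the categorical trace of the diagram obtained by composing the morphisms $q(\psi_p)\in\mathcal{A}(2,1)$ along the graph $\cM^1$. Hence $Z(\cM[\otimes_p\psi_p])\,\mathrm{Id}_{\mathbb{1}}=\mathrm{eval}(\otimes_p q(\psi_p))$, which is exactly $Z_{\mathcal{C}}$ evaluated on the transported labels. Care is needed here with the edge orientations and the self-duality isomorphisms $\phi_b$. Since the reflection $\hat\theta$ descends to $J$ (Lemma~\ref{lemma:: rotation invariance of T}), I would fix the chart-dependent identification of $\Hom(\mathbb{1},A^3)$ with $\mathcal{A}(2,1)$ so that the Frobenius--Schur factors in Definition~\ref{def:: Levin-Wen evaluation map} are absorbed, as in the verification of s-reflection positivity.

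The second step converts the identification of functionals into one of vectors. Corollary~\ref{corollary:: local Hilbert space of reconstructed model} says that the inner product on $\mathcal{H}_{\mathcal{S}(3)}$, transported by $q$, is $(\lambda\sqrt{d_id_jd_k})^{-1}\tr(\beta^*\alpha)$ on each summand $\Hom(i\otimes j,k)$, with the summands orthogonal. The Levin--Wen local space $\Hom(\mathbb{1},abc)$ with the skein inner product~\eqref{eqn:: skein module inner product} is unitarily equivalent, via one-click rotation and duality, to $\Hom(i\otimes j,k)$ with inner product $(d_id_jd_k)^{-1/2}\tr(\beta^*\alpha)$. So $\lambda^{-1/2}$ times the identity composed with this rotation is a unitary $W_p$ at each vertex. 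Since $\braket{\tilde\Psi_\cM|\psi}=Z(\cM[\psi])$ and $Z_{\mathcal{C}}=\mu^{F-1}\bra{\Psi_\Gamma}$, the Riesz representatives satisfy
$$\Big(\bigotimes_p W_p\Big)\ket{\tilde\Psi_\cM}=\lambda^{c}\,\mu^{F-1}\ket{\Psi_\Gamma}$$
for an explicit positive power $\lambda^{c}$ depending on the number of vertices. By Corollary~\ref{corollary:: ground state on sphere as evaluation}, $\ket{\Psi_\Gamma}$ spans the one-dimensional ground space of $H_\Gamma$, and it is nonzero: $\mathrm{eval}_D$ is surjective onto $\End(\mathbb{1})$ for connected $\Gamma$, for instance by Proposition~\ref{prop:: eval is an isometry}. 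This gives the claim, with the scalar explicitly positive.

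The main obstacle is the first step. One has to check rigorously that the algebraic structure of $\mathcal{C}(Z)$, namely vertical and horizontal composition in the quotient spaces $\tilde V_m$, reproduces $Z$ on arbitrary trivalent graphs rather than only on the reflection-symmetric gluings used to define the inner product. I would do this by induction on the number of vertices. Cut off a boundary vertex by a disk, use homeomorphism invariance to put $\cM$ in the form $\theta(\mathcal{D}_2)\cup\mathcal{D}_1$, and use the identity $Z(\hat\theta(\eta)\otimes\xi)=\braket{q(\eta),q(\xi)}_{\mathcal{C}(Z)}$ together with the fact that $q$ is compatible with gluing, which is multiplicativity of the planar-algebra structure. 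Matching the conventions for $\phi_b$ and the dollar-sign orderings is bookkeeping and should follow the pattern of the s-reflection-positivity proof.
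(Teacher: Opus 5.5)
Your proposal is correct and takes essentially the same route as the paper. First, you identify $Z(\psi)\,\mathrm{Id}_{\mathbb{1}}$ with $\mathrm{eval}_D$ of the transported vertex labels, since composition, tensor product and closure in $\mathcal{C}(Z)$ are all induced by gluing stratifications. Then you use the local unitaries $\lambda^{-1/2}b_vq$ from Corollary~\ref{corollary:: local Hilbert space of reconstructed model} together with Corollary~\ref{corollary:: ground state on sphere as evaluation} and Riesz representation, obtaining $U_{\cM}\ket{\tilde{\Psi}_{\cM}}=\lambda^{V/2}\mu^{F-1}\ket{\Psi_{\Gamma}}$, so your unspecified power is $\lambda^{V/2}$. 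Your inductive justification of the gluing step is, if anything, more explicit than the paper's one-line appeal to the construction of $\mathcal{C}(Z)$.
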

\begin{proof}
Fix an orientation $o$ of the edges of $\cM^1$, and let $\Gamma$ be the resulting directed embedded graph.
Write $V$ and $F$ for the numbers of vertices and faces of $\Gamma$, respectively.
By Corollary ~\ref{corollary:: A is sum of simple objects}, the reconstructed object $A=(1,\mathrm{Id}_1)$ is isomorphic to $\bigoplus_{x\in\Irr(\mathcal C(Z))}x$, which is the object used to define the Levin--Wen local Hilbert spaces in Section~\ref{section:: String-net wave function}.
    By Corollary ~\ref{corollary:: local Hilbert space of reconstructed model} and the skein-module inner product in Equation~\eqref{eqn:: skein module inner product}, the quotient map $q:\mathcal H_{\mathcal S(3)}\rightarrow \Hom(A^2,A)$ satisfies 
\begin{equation}
    \braket{q\eta,q\xi}_{\mathrm{LW}} = \lambda\braket{\eta,\xi}_{\mathcal H_{\mathcal S(3)}}
\end{equation}
for all $\xi,\eta\in\mathcal H_{\mathcal S(3)}$, where $\braket{\cdot,\cdot}_{\mathrm{LW}}$ denotes the local inner product of the Levin--Wen model in Section~\ref{section:: String-net wave function}.
At each vertex $v$, use the ordering supplied by its regular chart and the orientation $o$ to bend the three boundary strands into the convention of Section~\ref{section:: String-net wave function}.
The resulting map $b_v:\Hom(A^2,A)\rightarrow\mathcal H_v$ is unitary for the skein-module inner product, and hence $U_v:=\lambda^{-1/2}b_vq:\mathcal H_{\mathcal S(3)}\rightarrow\mathcal H_v$ is unitary.
Set $U_{\cM}:=\bigotimes_{v\in\cM^0}U_v:\mathcal H_{\cM}\rightarrow\mathcal H_{\Gamma}$.

Let $D\subset\mathbb S^2$ be a disk containing $\Gamma$ and all but one face in its interior, as in Corollary ~\ref{corollary:: ground state on sphere as evaluation}.
For a pure tensor $\psi=\bigotimes_{v\in\cM^0}\psi_v\in\mathcal H_{\cM}$, the definition of the reconstructed category by gluing labelled stratifications gives
\begin{equation}
Z(\psi)\,\mathrm{Id}_{\mathbb{1}}
=
\mathrm{eval}_D\left(\bigotimes_{v\in\cM^0}b_vq(\psi_v)\right)
=
\lambda^{V/2}\mathrm{eval}_D\bigl(U_{\cM}\psi\bigr).
\end{equation}
Indeed, composition and tensor product in $\mathcal C(Z)$ are induced by gluing, while closing the resulting morphism evaluates it by $Z$.
Since $D$ contains $F-1$ plaquettes, Corollary ~\ref{corollary:: ground state on sphere as evaluation} gives
\begin{equation}
\mathrm{eval}_D(\varphi)
=
\mu^{F-1}\braket{\Psi_{\Gamma}|\varphi}\,\mathrm{Id}_{\mathbb{1}},
\qquad
\varphi\in\mathcal H_{\Gamma},
\end{equation}
where $\ket{\Psi_{\Gamma}}$ is a Levin--Wen ground state.
Using the defining identity in Equation~\eqref{eq:: wave function from functional} for $\tilde{\Psi}_{\cM}$, we therefore obtain
\begin{equation}
\braket{U_{\cM}\tilde{\Psi}_{\cM}|U_{\cM}\psi}
=
\lambda^{V/2}\mu^{F-1}
\braket{\Psi_{\Gamma}|U_{\cM}\psi}.
\end{equation}
Pure tensors span $\mathcal H_{\cM}$ and $U_{\cM}$ is unitary, so
    \begin{equation}\label{eqn:: reconstructed state as Levin-Wen ground state}
    U_{\cM}\ket{\tilde{\Psi}_{\cM}} = \lambda^{V/2}\mu^{F-1}\ket{\Psi_{\Gamma}}.
\end{equation}
Thus $U_{\cM}\ket{\tilde{\Psi}_{\cM}}$ lies in the Levin--Wen ground-state subspace, which proves the claim.
\end{proof}

\begin{remark}
    The scalar $\lambda$ measures the overall discrepancy between the prescribed inner product on the local label space $\mathcal H_{\mathcal S(3)}$ and the skein-module inner product reconstructed from $Z$.
    Topological connectedness determines this scale: $\lambda=\sqrt{\tau}/\mu$.
    Indeed, let $\mathcal Y$ be the one-vertex trivalent tree in a disk with boundary $\mathcal S(3)$, so $\lvert F_{\mathcal Y}\rvert=0$, and let $\mathcal T$ be the triangle with one boundary leg at each vertex, so $\lvert F_{\mathcal T}\rvert=1$.
    TC1 gives
    \begin{equation}
        \widetilde\rho_{\mathbb D^2_+}\bigl(\theta(\mathcal T)\cup\mathcal Y\bigr)
        =\tau\,\widetilde\rho_{\mathbb D^2_+}\bigl(\theta(\mathcal Y)\cup\mathcal Y\bigr).
    \end{equation}
    Taking traces and using Equation~\eqref{eqn:: reconstructed state as Levin-Wen ground state}, the graph $\theta(\mathcal T)\cup\mathcal Y$ has $(V,F)=(4,4)$, whereas $\theta(\mathcal Y)\cup\mathcal Y$ has $(V,F)=(2,3)$.
    Therefore $\lambda^4\mu^6=\tau\lambda^2\mu^4$, so $\tau=\lambda^2\mu^2$ and hence $\lambda=\sqrt{\tau}/\mu$.
    For the canonical functional $Z_{\mathcal C}$ of Section~\ref{subsection:: S^2-functional from LW model}, $\tau=\mu^2$, so $b_vq$ is unitary for the skein-module inner product and $\lambda=1$. 
\end{remark}

\section{Discussion}

We end the paper with a brief discussion of the origin of some of the concepts in this paper. 
First of all, reflection positivity originally appeared in constructive quantum field theory \cite{OS1973}, and was later adapted to the study of classical/quantum statistical mechanics \cite{FILS1978}.
We emphasize that the notion of RP used here is more related to the second scenario, as the reflection positivity of the ground state of a Hamiltonian can be derived from that of the Hamiltonian. 
The reflection positivity of the Levin--Wen Hamiltonian is obtained in \cite{JL2017}. 
Second, although proved in a purely algebraic way, the topological connectedness of the Levin--Wen ground-state wave function admits a diagrammatic proof using the topological connectedness of the alterfold TQFT \cite{alterfolds2023I,alterfolds2023II,alterfolds2024}. 
More precisely, the topological connectedness is a direct consequence of invariance of the alterfold partition function under Move $1$ \cite[Section 3]{alterfolds2023I}, together with the presentation of the evaluation map \cite[Remark 4.25]{alterfolds2023II}. 

The definition of $\mathcal{F}$-positive morphisms draws its inspiration from the quantum Fourier analysis \cite{JaffeLiu2020,HJLW2023}, which extends Fourier analysis on group symmetries to generalized symmetries. 
Indeed, in one spatial dimension (that is, for $\mathbb{S}^1$-functionals), the correspondence $\mathfrak{D}\mapsto \widehat{\mathfrak{D}} $ is precisely the inverse Fourier transform of a $2$-box in a (possibly reducible) planar algebra. 
In that case, the topological connectedness translates to the condition that $\mathfrak{D}$ is a positive operator with inverse Fourier transform an idempotent. 
As shown in \cite{Liu2016ExchangeRP}, such objects are precisely biprojections \cite{bisch1994biprojection} when the underlying planar algebra is irreducible. 
Thus, topological connectedness generalizes the notion of biprojections in higher dimensions. 
On the other hand, biprojections satisfy the so-called exchange relation \cite{Landau2002Exchange}. 
Therefore, generalizing the exchange relation to higher dimensions may serve as the appropriate weakening of the topological connectedness condition. 
Finally, we expect close connections between the exchange relation and quantum Markov states \cite{HaydenPetz2004}. 
We leave a detailed investigation of this relation for future work. 

\appendix

\section{Reconstruction of the spherical unitary tensor category}\label{app:: reconstruction from S2 functional}

This appendix follows \cite{Liu2024}, which was inspired by the early work of Jones on subfactor planar algebras \cite{PlanarAlgebra2021}, and recalls the construction of the unitary spherical tensor category $\mathcal{C}(Z)$ from a locally finite, homeomorphism-invariant, s-reflection-positive, multiplicative $\mathbb{S}^2$-functional $Z$. 
The s-reflection positivity of $Z$ allows us to define a positive definite inner product on $\tilde{V}_m$ by 
\begin{equation}\label{eq:: inner product from RP}
    \braket{b,a}_Z = Z(\hat{\theta}(b)\otimes a),\quad a,b\in \tilde{V}_m. 
\end{equation}
Thus each $\tilde{V}_m$ becomes a finite-dimensional Hilbert space. 

For each $m\geq 0$, we endow $\tilde{V}_{2m}$ with a $C^*$-algebra structure $\mathcal{A}_m$ as follows. 
We view $\mathcal{S}(2m)$ as the boundary stratification $\partial(\mathcal{D}(m)\times \mathbb{D}^1)$, where $\mathcal{D}(m)$ is the stratification of $\mathbb{D}^1$ with $m$ points in the $0$-skeleton:
\begin{equation}
    \mathcal{D}(m) = \vcenter{\hbox{\begin{tikzpicture}
        \VeryThinLine (0,0) -- (3,0);
        \draw[fill=black] (0.5,0) circle (1.5pt);
        \draw[fill=black] (1.5,0) circle (1.5pt);
        \draw[fill=black] (2.5,0) circle (1.5pt);
        \node at (0.5,0.25) {$1$};
        \node at (1.5,0.25) {$2$};
        \node at (2,0.25) {$\cdots$};
        \node at (2.5,0.25) {$m$};
    \end{tikzpicture}}}
\end{equation}
In this way, every labelled stratification $x$ with boundary $\mathcal{S}(2m)$ gives rise to an operator $L_x\in \mathcal{B}(\tilde{V}_{2m})$ by vertical stacking along the last coordinate and then quotienting out null vectors. 
This action depends only on the equivalence class $[x]$, since by homeomorphism invariance, stacking with a null vector will always give a null vector. 
The identity $1_m\in\mathcal{A}_m$ is represented by $\mathcal{D}(m)\times\mathbb{D}^1$.
The $*$-structure on $\mathcal{A}_m$ is defined by imposing $L_{x^*} = L^\dagger_x$. 
Denote by $\rho$ the $180$-degree rotation of $\mathbb{S}^2$ about the $x$-axis, which induces an orientation-preserving homeomorphism from stratifications of $\mathbb{D}^2_+$ to those of $\mathbb{D}^2_-$. 
Then one has (\cite[Def. 5.15]{Liu2024}):
\begin{equation}
    x^* = \rho(\hat{\theta}(x)) = \hat{\theta}(\rho(x)),\quad x\in \mathcal{A}_m. 
\end{equation}
The algebras $\mathcal{A}_m$ describe endomorphisms of a fixed boundary type. 
There is a trace $\tr_m:\mathcal{A}_m\to\mathbb{C}$ defined by 
\begin{equation}
    \tr_m(xy^*) = Z(\hat{\theta}(y)\otimes x) = \braket{y,x}_Z,\quad x,y\in \mathcal{A}_m. 
\end{equation} 
Its traciality follows from homeomorphism invariance, while s-reflection positivity implies that it is positive and faithful. 
Moreover, $\operatorname{tr}_{\mathbf 1}(\operatorname{Id}_{\mathbf 1})=Z(\mathbb{S}^2_{\emptyset})=1$. 

To compare projections belonging to different algebras, we also need morphisms between distinct boundary types. 
For $m,n\geq 0$, let $\mathcal{S}(m,n)$ be the stratification of $\mathbb{S}^1$ that is homeomorphic to $\mathcal{S}(m+n)$ but  is displayed as the boundary of a rectangle, with $m$ and $n$ marked points on the upper and lower boundary intervals, respectively. 
In particular, $\mathcal{S}(2,1)$ is the rectangular presentation of $\mathcal{S}(3)$.
We regard the upper boundary as the input and the lower boundary as the output. 
Define
\begin{equation}\label{eq:: rectangular sector}
    \mathcal{A}(m,n)
    :=
    \mathbb{D}^2_{\mathcal{S}(m,n)}(\mathcal{H}_{\bullet})
    \big/ \ker_{m+n} (Z).
\end{equation}
Note that homeomorphism invariance and s-reflection positivity give a natural unitary equivalence $\mathcal{A}(m,n)\cong \tilde{V}_{m+n}$ of Hilbert spaces. 
In particular, $\mathcal{A}(m,m)=\mathcal{A}_m$, whereas $\mathcal{A}(m,n)$ is not an algebra when $m\neq n$.

Given $x\in\mathcal{A}(m,n)$ and $y\in\mathcal{A}(n,k)$, we define $yx\in\mathcal{A}(m,k)$ by vertically stacking a representative of $x$ above a representative of $y$ and gluing the two boundary intervals carrying $n$ marked points. 
Thus gluing induces bilinear composition maps
\begin{equation}\label{eq:: rectangular composition}
    \mathcal{A}(n,k)\otimes\mathcal{A}(m,n)
    \rightarrow
    \mathcal{A}(m,k),
    \qquad
    y\otimes x \mapsto yx.
\end{equation}
These maps are well-defined on the quotient because gluing a null vector into any larger labelled stratification again produces a null vector.
Associativity follows from homeomorphism invariance. 

Using the same reflection-rotation operation that defines the adjoint on $\mathcal{A}_m$, we obtain conjugate-linear maps $*:\mathcal{A}(m,n)\rightarrow\mathcal{A}(n,m)$ that satisfy
\begin{equation}
    (yx)^*=x^* y^*,
    \qquad
    (x^*)^*=x.
\end{equation}
The spaces $\mathcal{A}(m,n)$, with composition by vertical stacking and adjoint by reflection-rotation, form a strict $C^*$-category $\breve{\mathcal{C}}(Z)$ whose objects are non-negative integers and whose morphism spaces are 
\begin{equation}
    \Hom_{\breve{\mathcal{C}}(Z)}(m,n):=\mathcal{A}(m,n).
\end{equation}
The endomorphism algebra of the object $m$ is precisely $\End_{\breve{\mathcal{C}}(Z)}(m)=\mathcal{A}_m$. 
The induced $C^*$-norm on $\mathcal{A}(m,n)$ is defined by $\Vert x\Vert^2 = \Vert x^*x\Vert_{\mathcal{A}_m}$. 

We now take the idempotent completion $\mathcal{C}(Z)$ of $\breve{\mathcal{C}}(Z)$. 
Objects of $\mathcal{C}(Z)$ are formal direct sums of pairs $(m,p)$, where $p=p^2=p^*\in\mathcal{A}_m$ is a projection. 
For projections $p\in\mathcal{A}_m$ and $q\in\mathcal{A}_n$, define
\begin{equation}\label{eq:: hom in idempotent completion}
    \Hom((m,p),(n,q))
    :=
    q\,\mathcal{A}(m,n)\,p. 
\end{equation}
An object $(m,p)$ is simple if $p$ is a minimal projection in $\mathcal{A}_m$. 
Two idempotents $(m,p)$ and $(n,q)$ are equivalent if there exists $v\in\mathcal{A}(m,n)$ such that
\begin{equation}\label{eq:: partial isometry equivalence}
    v^*v=p,\quad v v^*=q.
\end{equation}
The isomorphism classes of simple objects are the equivalence classes of minimal projections in the family $\{\mathcal{A}_m\}_{m\geq 0}$.

Horizontal juxtaposition of labelled stratifications defines the tensor product of morphisms 
\begin{equation}
    \mathcal{A}(m,n)\otimes\mathcal{A}(m',n')
    \rightarrow
    \mathcal{A}(m+m',n+n'),
\end{equation}
which extends to the idempotent completion by $(m,p)\otimes(n,q):=(m+n,p\otimes q)$. 
Together with the duality induced by planar rotation, this construction gives the $C^*$ tensor category reconstructed from $Z$. 
Multiplicativity, namely $\mathcal{A}_0\cong\mathbb{C}$, implies that the tensor unit is simple. 
The traces $\{\tr_m\}_{m\geq 0}$ restrict to the corners $p\mathcal{A}_m p=\End_{\mathcal{C}(Z)}(m,p)$ and define a balanced trace on $\mathcal{C}(Z)$. 
Indeed, for $x\in q\mathcal{A}(m,n)p$ and $y\in p\mathcal{A}(n,m)q$, homeomorphism invariance identifies the two closures and gives
\begin{equation}
    \tr_m(yx)=\tr_n(xy).
\end{equation}
Together with the duality induced by planar rotation, this balanced trace makes $\mathcal{C}(Z)$ spherical. 

\bibliographystyle{alpha}
\bibliography{Reference}
\end{document}